\def\final{0}  %
\def\iflong{\iffalse}
\newcommand{\yu}[1]{{\color{red}[{\tiny \textbf{Yu:} \bf #1}]\marginpar{\color{red}*}}}
\newcommand{\yutaro}[1]{{\color{red}[{\tiny \textbf{Yutaro:} \bf #1}]\marginpar{\color{red}*}}}
\newcommand{\tamas}[1]{{\color{red}[{\tiny \textbf{Tam\'as:} \bf #1}]\marginpar{\color{red}*}}}
\newcommand{\kristof}[1]{{\color{red}[{\tiny \textbf{Krist\'of:} \bf #1}]\marginpar{\color{red}*}}}
\newcommand{\yu}[1]{}
\newcommand{\yutaro}[1]{}
\newcommand{\tamas}[1]{}
\newcommand{\kristof}[1]{}
\numberwithin{equation}{section}
\theoremstyle{plain}
\newtheorem{theorem}{Theorem}[section]
\newtheorem{lemma}[theorem]{Lemma}
\newtheorem{claim}[theorem]{Claim}
\newtheorem{corollary}[theorem]{Corollary}
\theoremstyle{definition}
\newtheorem{definition}[theorem]{Definition}
\newtheorem{example}[theorem]{Example}
\newtheorem{remark}[theorem]{Remark}
\newcommand{\RR}{{\mathbb{R}}}
\newcommand{\ZZ}{{\mathbb{Z}}}
\newcommand{\bM}{{\mathbf{M}}}
\newcommand{\cF}{{\mathcal{F}}}
\newcommand{\cI}{{\mathcal{I}}}
\newcommand{\cImink}{{\mathcal{I}_\mathrm{min}^k}}
\newcommand{\cIminkk}{{\mathcal{I}_\mathrm{min}^{k+1}}}
\newcommand{\cl}{\mathrm{cl}}
\newcommand{\rmin}{{r_\mathrm{min}}}
\newcommand{\Amin}{{A^\mathrm{min}}}
\newcommand{\Amino}{{A_1^\mathrm{min}}}
\newcommand{\Amint}{{A_2^\mathrm{min}}}
\newcommand{\Amini}{{A_i^\mathrm{min}}}
\newcommand{\Asure}{{A^\mathrm{sure}}}
\newcommand{\Asuret}{{A_2^\mathrm{sure}}}
\newcommand{\Asusp}{{A^\mathrm{susp}}}
\newcommand{\Asuspo}{{A_1^\mathrm{susp}}}
\newcommand{\Asuspt}{{A_2^\mathrm{susp}}}
\newcommand{\Dmin}{{D^\mathrm{min}}}
\newcommand{\tA}{{\tilde{A}}}
\newcommand{\tD}{{\tilde{D}}}
\newcommand{\tP}{{\tilde{P}}}
\newcommand{\tQ}{{\tilde{Q}}}
\newcommand{\cC}{{\mathcal{C}}}
\newcommand{\true}{\text{\sc True}}
\newcommand{\false}{\text{\sc False}}
\DeclarePairedDelimiter{\set}{\{}{\}}
\DeclarePairedDelimiterX{\Set}[2]{\{}{\}}{\,#1\mathclose{}\nonscript\;\delimsize|\nonscript\;\mathopen{}#2\,}
\DeclarePairedDelimiter{\abs}{|}{|}
\newcommand{\R}{\mathbb{R}}
\newcommand{\Z}{\mathbb{Z}}
\newcommand{\Zp}{\Z_{\ge 0}}
\newcommand{\Rp}{\R_{\ge 0}}
\newcommand{\ones}{\mathbf{1}}
\newcommand{\rhomin}{\rho_{\mathrm{min}}}
\def\namedlabel#1#2{\begingroup
    #2%
    \def\@currentlabel{#2}%
    \phantomsection\label{#1}\endgroup
}
\renewcommand{\algocf@caption@boxruled}{%
  \hrule
  \hbox to \hsize{%
    \vrule\hskip-0.4pt
    \vbox{   
       \vskip\interspacetitleboxruled%
       \unhbox\algocf@capbox\hfill
       \vskip\interspacetitleboxruled
       }%
     \hskip-0.4pt\vrule%
   }\nointerlineskip%
}%
\newtcolorbox{probbox}{arc=6pt,
                      colback=white!100,
                      colframe=black!50,
                      before skip=6pt,
                      after skip=6pt,
                      boxsep=1pt,
                      left=6pt,
                      right=6pt,
                      top=4pt,
                      bottom=4pt}
\newcommand{\searchprob}[3]{
   \begin{center}%
    \begin{minipage}{0.96\linewidth}%
      \begin{probbox}
      \textsc{#1}\\[0.2ex]
      \textbf{Input:} #2\\[0.2ex]
      \textbf{Goal:} #3
      \end{probbox}
    \end{minipage}%
  \end{center}
}
\newcounter{algoline}
\title{Matroid Intersection under Minimum Rank Oracle}
\author{
Mih\'aly B\'ar\'asz\thanks{HUN-REN-ELTE Egerv\'ary Research Group, Department of Operations Research, E\"otv\"os Loránd University, Budapest, Hungary. Email: \texttt{mihaly.barasz@gmail.com}.}
\and
Kristóf Bérczi\thanks{MTA-ELTE Matroid Optimization Research Group and HUN-REN–ELTE Egerváry Research Group, Department of Operations Research, ELTE Eötvös Loránd University, and HUN-REN Alfréd Rényi Institute of Mathematics, Budapest, Hungary. Email: \texttt{kristof.berczi@ttk.elte.hu}.}
\and
Tamás Király\thanks{HUN-REN-ELTE Egerv\'ary Research Group, Department of Operations Research, E\"otv\"os Loránd University, Budapest, Hungary. Email: \texttt{tamas.kiraly@ttk.elte.hu}.}
\and
Taihei Oki\thanks{Institution for Chemical Reaction Design and Discovery (WPI-ICReDD), Institution for Integrated Innovations, Hokkaido University, Sapporo, Hokkaido, Japan. Center for Advanced Intelligence Project, RIKEN, Tokyo, Japan. Email: \texttt{oki@icredd.hokudai.ac.jp}}
\and
Yutaro Yamaguchi\thanks{Department of Information and Physical Sciences, Graduate School of Information Science and Technology, Osaka University, Osaka, Japan. Email: \texttt{yutaro.yamaguchi@ist.osaka-u.ac.jp}.}
\and
Yu Yokoi\thanks{Department of Mathematical and Computing Science, School of Computing, Institute of Science Tokyo, Tokyo, Japan. Email: \texttt{yokoi@comp.isct.ac.jp}.}
}
\date{\empty}
\begin{document}
\maketitle

\begin{abstract}
In this paper, we consider the tractability of the matroid intersection problem under the minimum rank oracle. In this model, we are given an oracle that takes as its input a set of elements and returns as its output the minimum of the ranks of the given set in the two matroids.
For the unweighted matroid intersection problem, we show how to construct a necessary part of the exchangeability graph, which enables us to emulate the standard augmenting path algorithm.
For the weighted problem, the tractability is open in general.
Nevertheless, we describe several special cases where tractability can be achieved, and we discuss potential approaches and the challenges encountered.

On the positive side, we present a solution for the case where no circuit of one matroid is contained within a circuit of the other. 
Additionally, we propose a fixed-parameter tractable algorithm, parameterized by the maximum size of a circuit of one matroid.
We also show that a lexicographically maximal common independent set can be found by the same approach, which leads to a nontrivial approximation ratio for finding a maximum-weight common independent set.
On the negative side, we prove that the approach employed for the tractable cases above involves an NP-hard problem in the general case.
We also show that if we consider the generalization to polymatroid intersection, even the unweighted problem is hard under the minimum rank oracle.

\bigskip

\noindent \textbf{Keywords:} Algorithm, Matroid intersection, Minimum rank oracle
\end{abstract}

\thispagestyle{empty}
\newpage
\pagenumbering{roman}
\tableofcontents
\thispagestyle{empty}
\newpage
\pagenumbering{arabic}
\setcounter{page}{1}

\section{Introduction}\label{sec:introduction}
When designing matroid algorithms, it is essential to clarify how matroids are represented. Since the number of bases is potentially exponential, listing them explicitly is impractical. Instead, standard oracles are assumed to be available to query the rank or the independence of a subset, and the complexity of the algorithm is measured by the number of oracle calls. In~\cite{berczi2023matroid}, B\'erczi, Kir\'aly, Yamaguchi, and Yokoi initiated the study of matroid intersection problems under restricted oracles, with particular emphasis on the rank sum, common independence, and maximum rank oracles. They showed that the weighted matroid intersection problem remains tractable even under the rank sum oracle. Furthermore, they showed that the common independence oracle model allows for an efficient algorithm for the unweighted matroid intersection problem when one of the matroids is a partition matroid, and that even the weighted case is solvable when one of the matroids is an elementary split matroid, which is a generalization of a paving matroid~\cite{berczi2023hypergraph}. 

As a continuation of the work~\cite{berczi2023matroid}, we investigate the tractability of the matroid intersection problem under the minimum rank oracle. In this model, for a subset of elements, the oracle outputs the minimum of the ranks of the subset in the two matroids.
In addition to purely theoretical interest, the motivation of the problem comes from a polyhedral perspective. In \cite{edmonds1970submodular}, Edmonds showed that the convex hull of the common independent sets of two matroids coincides with the intersection of the two independent set polytopes. This implies that the common independent set polytope itself is determined by the \emph{minimum rank function} of the two matroids rather than the two rank functions.
However, while the separation problem for an independent set polytope defined by a rank function reduces to submodular function minimization and therefore is tractable \cite{iwata2001,mccormick2005,schrijver2000}, this does not hold for a common independent set polytope defined by the minimum of the two rank functions, which is no longer submodular. In fact, one of our hardness results shows that separation for the polytope defined by the minimum of two \emph{polymatroid functions} requires an exponential number of oracle calls.

A set function is called \emph{$1/3$-submodular} and \emph{$2/3$-submodular} if for every distinct three subsets $X, Y, Z$, the submodular inequality holds for at least one and at least two, respectively, among the three pairs $\{X, Y\}$, $\{Y, Z\}$, and $\{Z, X\}$.
Mizutani and Yoshida~\cite{mizutani2024polynomial} gave polynomial-time algorithms to minimize 2/3-submodular functions. However, it was shown by B\'erczi and Frank~\cite{berczi2008variations} that minimizing a 1/3-submodular function requires an exponential number of oracle calls when the function is given by an evaluation oracle. This naturally raises the following question: apart from 2/3-submodular functions, what other classes of 1/3-submodular functions can be minimized in polynomial time?
Our problem concerns the class of functions that arise as the sum of a modular function and the minimum rank of two matroids. On the one hand, these functions are 1/3-submodular, since at least two of every three subsets attain the minimum rank in the same matroid. On the other hand, the minimization problem over this class is equivalent to the separation problem over the common independent set polytope given by minimum rank function of two matroids; with the aid of the ellipsoid method~\cite{grotschel1988geometric}, the latter problem is in turn equivalent to the weighted matroid intersection problem under the minimum rank oracle, which we focus on this paper.

There is a sort of common sense in combinatorial optimization that the tractability of a problem usually can be explained from a polyhedral point of view (cf.~the title of Schrijver's book~\cite{schrijver2003combinatorial}).
From this perspective, it is valuable to clarify what is the minimum amount of information to make the optimization/separation problem on the matroid intersection polytope tractable as it is completely determined by the minimum rank function.
On the other hand, hardness results on matroid problems are often obtained through constructions based on paving matroids (e.g., \cite{jensen1982complexity, lovasz1981matroid, berczi2021complexity}), but B\'erczi et al.~\cite{berczi2023matroid} revealed that the weighted matroid intersection problem is efficiently solvable if one of the matroids is paving, even when the oracle is restricted to the common independence oracle, which is weaker than our minimum rank oracle.
Thus, settling the computational complexity of the weighted matroid intersection problem under the minimum rank oracle would lead to significant progress regardless of whether it is tractable or hard: proof of tractability would result in a new kind of matroid intersection algorithm, while proof of hardness would require a novel approach to computational difficulty in matroids.

\subsection*{Our results}
As in the rank sum case of~\cite{berczi2023matroid}, the main obstacle to giving an efficient algorithm is that the usual augmenting path approach cannot be applied, since the exchangeability graphs are not determined by the minimum rank oracle. However, our aim is still to emulate the breadth first search (and the Bellman--Ford algorithm) for finding a shortest (and cheapest, respectively) augmenting path on the underlying exchangeability graph, but our strategy is completely different from the one in~\cite{berczi2023matroid}.

The main difficulty caused by the minimum rank oracle is that we cannot determine for each element its fundamental circuits in the two matroids separately.
To overcome this difficulty, we slightly modify the exchangeability graph by roughly estimating the fundamental circuits so that extra arcs that in reality are not present.
For the unweighted problem, we can do it only using the minimum rank oracle with preserving the set of shortest augmenting paths, which in turn allows us to solve the unweighted problem (Theorem~\ref{thm:unweighted}).
We note that this fact has appeared as a preprint~\cite{egresqp-06-03}, and we present in a revised form to make attempt of its extension.

In order to tackle the weighted problem, by further refining the notion of modified exchangeability graphs, we introduce the notion of consistent exchangeability graphs that have strong structural properties.
Specifically, we show that even if the true exchangeability graph is not correctly recognized, a kind of consistency with respect to small local exchanges is enough for the graph to behave essentially the same (Lemma~\ref{lem:correctness}), which is sufficient to emulate the usual weighted matroid intersection algorithm.
While it unfortunately turns out difficult to find such a graph in general (Theorem~\ref{thm:NP-hard}), we show that an ``almost consistent'' exchangeability graph can be found efficiently by solving the 2-SAT problem (Lemma~\ref{lem:2-SAT}).
This leads to the tractability under the minimum rank oracle for several special cases: when no circuit of one matroid contains a circuit of the other (Theorem~\ref{thm:no_circuit_inclusion}), when the maximum size of a circuit of one matroid is bounded by a constant (Theorem~\ref{thm:FPT_circuit}), and when the objective is to find a lexicographically maximal common independent set (Theorem~\ref{thm:lexicographically_maximal});
the last one also implies at least $\min\{1, \alpha/2\}$-approximation for the usual weight maximization problem (Corollary~\ref{cor:approximation}), where $\alpha > 1$ is the minimum ratio of two distinct positive weight values.

We also consider a natural generalization of the problem to the polymatroid intersection setting, in which testing the feasibility is already hard (Theorem~\ref{thm:polymatroid}).

\subsection*{Organization}
The rest of the paper is organized as follows.
Basic definitions and notation are introduced in Section~\ref{sec:preliminaries}.
In Section~\ref{sec:unweighted_min_rank}, we introduce the concept of modified exchangeability graphs and show their crucial property, which leads to a solution for the unweighted problem.
In Section~\ref{sec:consistent}, we refine the modified exchangeability graphs further, leading to the definition of consistent exchangeability graphs. We show that solving the weighted matroid intersection problem reduces to determining a consistent exchangeability graph.
On the positive side, in Section~\ref{sec:2-SAT}, we show that an almost consistent exchangeability graph can be constructed in polynomial time by solving the 2-SAT problem, which is sufficient for solving several special cases.
On the negative side, we prove in Section~\ref{sec:NP-hard} that the problem of finding a consistent exchangeability graph is NP-hard in general.
Finally, in Section~\ref{sec:polymatroid}, we show that if we consider the generalization to polymatroid intersection, even the feasibility problem is hard under the minimum rank oracle.

\section{Preliminaries}
\label{sec:preliminaries}
For the basics on matroids and the matroid intersection problem, we refer the reader to \cite{oxley2011matroid, schrijver2003combinatorial,frank2011connections}.
To make the paper self-contained, we repeat the basic definitions and known facts on matroids that are summarized in~\cite[Section~2]{berczi2023matroid}.

Throughout the paper, for $i=1,2$, let $\bM_i=(E,\cI_i)$ be loopless matroids on the same finite ground set $E$ of size $n$, whose \textbf{independent set families}, \textbf{rank functions}, and \textbf{closure operators} are denoted by $\cI_i$, by $r_i$, and by $\cl_i$, respectively.
For $I \in \cI_i$ and $x \in \cl_i(I) \setminus I$, the \textbf{fundamental circuit} of $x$ with respect to $I$ in $\bM_i$ is $C_i(I, x) = \{\, y \in I + x \mid I + x - y \in \cI_i \,\}$. 
For two sets $X,Y\subseteq E$, we denote their \textbf{symmetric difference} by $X\triangle Y=(X\setminus Y)\cup(Y\setminus X)$.

Let us first overview some basic results on matroid intersection.
In \cite{edmonds1970submodular}, Edmonds gave the following characterization for the maximum cardinality of a common independent set of two matroids.

\begin{theorem}[Edmonds~\cite{edmonds1970submodular}]\label{thm:Edmonds}
The maximum cardinality of a common independent set of $\bM_1$ and $\bM_2$ is equal to
\begin{align*}
\min\left\{\, r_1(Z) + r_2(E \setminus Z) \mid Z \subseteq E \,\right\}.
\end{align*}
\end{theorem}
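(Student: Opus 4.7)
The plan is the standard two-step argument for Edmonds' matroid intersection theorem, specialised to what is convenient here. For the easy direction $\max \le \min$, take any common independent set $I \in \cI_1 \cap \cI_2$ and any $Z \subseteq E$, and split $I$ as $(I \cap Z) \cup (I \setminus Z)$. Since $I \cap Z \in \cI_1$ is contained in $Z$ and $I \setminus Z \in \cI_2$ is contained in $E \setminus Z$, one gets $|I| \le r_1(Z) + r_2(E \setminus Z)$ at once. Taking the maximum over $I$ and the minimum over $Z$ yields the $\le$-inequality.

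For the reverse direction, fix a common independent set $I$ of maximum cardinality and build the usual exchangeability digraph $D(I) = (E, A)$ on ground set $E$, with arcs $(y, x)$ whenever $y \in I$, $x \in E \setminus I$, $y \in C_1(I, x)$, and arcs $(x, y)$ whenever $y \in I$, $x \in E \setminus I$, $y \in C_2(I, x)$. Single out the source set $S_1 = \{\, x \in E \setminus I \mid I + x \in \cI_1 \,\}$ and the sink set $S_2 = \{\, x \in E \setminus I \mid I + x \in \cI_2 \,\}$. The key lemma, and what I expect to be the main obstacle, is the augmenting-path lemma: if $P$ is a \emph{shortest} directed path from $S_1$ to $S_2$ in $D(I)$, then $I \triangle V(P)$ is again a common independent set, strictly larger than $I$. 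Verifying $I \triangle V(P) \in \cI_i$ for both $i = 1, 2$ requires a careful exchange argument — one encodes the relevant fundamental circuits along $P$ in $\bM_i$ as a bipartite graph on $V(P) \cap I$ and $V(P) \setminus I$, and uses the shortness of $P$ to argue this bipartite graph has a unique perfect matching, which via the classical matroid exchange lemma preserves independence in $\bM_i$. Assuming this lemma, maximality of $I$ forces that no directed $S_1$-$S_2$ path exists in $D(I)$.

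Given the non-existence of such a path, let $Z \subseteq E$ be the set of vertices \emph{unreachable} from $S_1$ in $D(I)$; then $S_1 \cap Z = \emptyset$ and $S_2 \subseteq Z$. For any $x \in Z \setminus I$, the relation $x \notin S_1$ gives $I + x \notin \cI_1$, so $C_1(I, x) \neq \emptyset$; every $y \in C_1(I, x)$ contributes an arc $(y, x)$, which forces $y \in Z$ too (otherwise $x$ would be reachable). Hence $C_1(I, x) \subseteq I \cap Z$ and $x \in \cl_1(I \cap Z)$, so $r_1(Z) = |I \cap Z|$. Symmetrically, any $x \in (E \setminus Z) \setminus I$ satisfies $x \notin S_2$, and the $\bM_2$-arcs $(x, y)$ for $y \in C_2(I, x)$ force $C_2(I, x) \subseteq I \setminus Z$, yielding $r_2(E \setminus Z) = |I \setminus Z|$. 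Adding these two identities produces $|I| = r_1(Z) + r_2(E \setminus Z)$, which matches the lower bound and establishes the equality claimed in the theorem.
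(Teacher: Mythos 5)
The paper does not prove Theorem~\ref{thm:Edmonds} at all: it is stated as a classical result with a citation to Edmonds, and the surrounding Lemmas~\ref{lem:UPM-inv}, \ref{lem:UPM}, \ref{lem:negative_cycle}, and \ref{lem:even_prefix} are likewise quoted from the literature. So there is no ``paper's proof'' to compare against; your task was really to reconstruct the standard proof, which is what you have done.

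Your reconstruction is the standard augmenting-path proof and it is correct in structure. The $\max\le\min$ direction via $|I|=|I\cap Z|+|I\setminus Z|\le r_1(Z)+r_2(E\setminus Z)$ is exactly the easy half. For $\min\le\max$, your choice of $Z$ as the set of vertices unreachable from $S_1$ when no $S_1$--$S_2$ path exists, together with the observations that $C_1(I,x)\subseteq I\cap Z$ for $x\in Z\setminus I$ and $C_2(I,x)\subseteq I\setminus Z$ for $x\in (E\setminus Z)\setminus I$, correctly gives $Z\subseteq\cl_1(I\cap Z)$ and $E\setminus Z\subseteq\cl_2(I\setminus Z)$, hence $r_1(Z)+r_2(E\setminus Z)=|I\cap Z|+|I\setminus Z|=|I|$. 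The one place you explicitly defer is the shortest-augmenting-path lemma, and your pointer to the unique-perfect-matching argument (Krogdahl, stated as Lemma~\ref{lem:UPM} in the paper) is the right tool; do note, if you flesh this out, that you need to handle the parity mismatch ($|V(P)\setminus I|=|V(P)\cap I|+1$) by matching $V(P)-s$ in $A_1[I]$ and $V(P)-t$ in $A_2[I]$ separately and then adjoining $s$ (resp.\ $t$), using that the intermediate non-$I$ vertices and the sink lie in $\cl_1(I)$ by shortness. Also, the degenerate case of an augmenting path of length zero, i.e.\ an element of $S_1\cap S_2$, is silently subsumed in your ``no $S_1$--$S_2$ path'' hypothesis, which is fine but worth making explicit. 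In short: correct standard proof, consonant with the exchangeability-graph machinery the paper uses elsewhere (e.g.\ Algorithm~\ref{alg:1}), with the key technical lemma sketched rather than proved.
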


The notion of exchangeability graphs plays a central role in any matroid intersection algorithm.

\begin{definition}[Exchangeability Graphs]\label{def:exchange}
Assume that $I\in\cI_1\cap \cI_2$ is a common independent set of $\bM_1$ and $\bM_2$. The \textbf{exchangeability graph} with respect to $I$ is a directed bipartite graph $D[I] = (E \setminus I, I; A[I])$ defined as follows.
Set
\begin{align*}
  S_I &\coloneqq \{\, s \in E \setminus I \mid I + s \in \cI_1 \,\},\\
  T_I &\coloneqq \{\, t \in E \setminus I \mid I + t \in \cI_2 \,\},
\end{align*}
where elements in $S_I$ and in $T_I$ are called \textbf{sources} and \textbf{sinks}, respectively.
We then define $A[I] \coloneqq A_1[I] \cup A_2[I]$, where
\begin{align*}
  A_1[I] \coloneqq&\ \{\, (y, x) \mid x \in E \setminus I,~y \in I,~I + x - y \in \cI_1 \,\}\\
  =&\ \{\, (y, s) \mid s \in S_I,~y \in I \,\} \cup \{\, (y, x) \mid x \in E \setminus (I \cup S_I),~y \in C_1(I, x) - x \,\},\\[1mm]
  A_2[I] \coloneqq&\ \{\, (x, y) \mid x \in E \setminus I,~y \in I,~I + x - y \in \cI_2 \,\}\\
  =&\ \{\, (t, y) \mid t \in T_I,~y \in I \,\} \cup \{\, (x, y) \mid x \in E \setminus (I \cup T_I),~y \in C_2(I, x) - x \,\}.
\end{align*}
Note that $S_I$ and $A_1[I]$ depend only on $\cI_1$, and $T_I$ and $A_2[I]$ depend only on $\cI_2$.
\end{definition}

Brualdi \cite{brualdi1969comments} observed that the set $A_i[I]$ of exchangeability arcs satisfies the following property for $i=1,2$, and Krogdahl \cite{krogdahl1974combinatorial,krogdahl1976combinatorial,krogdahl1977dependence} made its partial converse.

\begin{lemma}\label{lem:UPM-inv}
	Let $I\in \cI_i$ and let $Z\subseteq E$ satisfy $|I\triangle Z|=|I|$ and $I\triangle Z\in \cI_i$. 
	Then $A_i[I]$ contains a perfect matching on $Z$
	(i.e., a set of vertex-disjoint arcs such that $Z$ is the set of tails and heads of these arcs).
\end{lemma}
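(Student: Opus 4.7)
My plan is to reduce the statement to Hall's theorem on a suitable bipartite graph. Set $J \coloneqq I \triangle Z$, so $J \in \cI_i$; the hypothesis $|I \triangle Z| = |I|$, together with $|I \triangle Z| = |I| + |Z| - 2|I \cap Z|$, forces $|Z \cap I| = |Z \setminus I|$ and in particular $|J| = |I|$. Writing $X \coloneqq Z \setminus I = J \setminus I$ and $Y \coloneqq Z \cap I = I \setminus J$, a perfect matching of $A_i[I]$ on $Z$ is exactly a perfect bipartite matching between $X$ and $Y$ whose edges are the pairs $\{x,y\}$ with $I + x - y \in \cI_i$ (the orientation prescribed by the definition of $A_i[I]$ is irrelevant for matching existence).

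Next, I would verify Hall's condition: for every $X' \subseteq X$, letting $N(X') \coloneqq \{\, y \in Y \mid \exists\, x \in X',\ I + x - y \in \cI_i \,\}$, one must show $|N(X')| \geq |X'|$. A natural case split arises. If $X'$ contains some ``source'' $x$ (i.e., $I + x \in \cI_i$), then every $y \in I$ satisfies $I + x - y \in \cI_i$, so $N(X') \supseteq N(\{x\}) \cap Y = Y$, and $|Y| = |X| \geq |X'|$ suffices. Otherwise $X' \subseteq \cl_i(I)$, whence $r_i(I \cup X') = |I|$. In this case I would start from the independent set $(I \cap J) \cup X' \subseteq J$ and greedily extend it within $I \cup X'$ to a maximal independent set $L = (I \cap J) \cup X' \cup Y^\ast$ with $Y^\ast \subseteq I \setminus J$; the rank calculation $|L| = r_i(I \cup X') = |I|$ forces the complement $Y^{\ast\ast} \coloneqq (I \setminus J) \setminus Y^\ast$ to have cardinality exactly $|X'|$.

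The hard part, which is the crux of the argument, is to show that $Y^{\ast\ast} \subseteq N(X')$, as this immediately gives $|N(X')| \geq |Y^{\ast\ast}| = |X'|$. Fix $y \in Y^{\ast\ast}$: by maximality of $L$ the set $L + y$ contains a circuit $C$, and $C$ must intersect $X'$ since $L \setminus X' = (I \cap J) \cup Y^\ast \subseteq I$ is independent. Suppose, toward a contradiction, that $y \notin C_i(I, x)$ for every $x \in X'$. Then iterated strong circuit elimination, applied to $C$ and the fundamental circuit $\{x\} \cup C_i(I, x)$ of some $x \in C \cap X'$, yields at each step a new circuit that still contains $y$ (this step crucially uses $y \notin C_i(I, x)$) but has strictly fewer elements from $X'$; after at most $|X'|$ iterations, the resulting circuit lies entirely in $I$, contradicting $I \in \cI_i$. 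This establishes $Y^{\ast\ast} \subseteq N(X')$, hence Hall's condition, and therefore the desired perfect matching in $A_i[I]$.
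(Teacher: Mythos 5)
The paper does not prove this lemma; it is stated as a known result of Brualdi with a citation, so there is no internal argument to compare against. Your proof is correct and follows a standard route: reduce to Hall's theorem on the exchange bipartite graph between $X=Z\setminus I$ and $Y=Z\cap I$, dispatch the source case immediately, and in the non-source case use a rank count on a basis $L$ of $I\cup X'$ containing $(I\cap J)\cup X'$ to get a set $Y^{**}\subseteq Y$ of size $|X'|$, then show $Y^{**}\subseteq N(X')$. The crux step via iterated strong circuit elimination is sound: under the contradiction hypothesis $y\notin C_i(I,x)$ for all $x\in X'$, each elimination with the fundamental circuit $C_i(I,x)+x$ (for $x$ in the current circuit and in $X'$) keeps $y$, stays inside $I\cup X'$, and strictly decreases the number of $X'$-elements, terminating in a circuit contained in $I$, which is impossible. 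Two small points worth making explicit for a polished write-up: (i) at each step the two circuits are distinct, since $C^{(k)}=C_i(I,x)+x$ would force $y\in C_i(I,x)$, contradicting the hypothesis — this is needed to invoke strong circuit elimination; and (ii) $y\in C^{(k)}\setminus\bigl(C_i(I,x)+x\bigr)$ requires both $y\ne x$ (clear, as $y\in I$, $x\notin I$) and $y\notin C_i(I,x)$ (the hypothesis). With those spelled out, this is a complete and clean proof of Brualdi's exchange lemma.
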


\begin{lemma}[Unique Perfect Matching Lemma]\label{lem:UPM}
	Let $I\in \cI_i$ and let $Z\subseteq E$ satisfy $|I\triangle Z|=|I|$.
	If $A_i[I]$ contains a unique perfect matching on $Z$, then $I\triangle Z\in \cI_i$.
\end{lemma}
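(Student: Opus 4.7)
The plan is to prove the contrapositive: assume that $J := I \triangle Z$ is dependent in $\bM_i$ and let $M$ be any perfect matching of $A_i[I]$ on $Z$; I will exhibit a second perfect matching on $Z$, contradicting the uniqueness hypothesis. Equivalently, using the standard characterisation that a perfect matching of a bipartite graph is unique if and only if there is no $M$-alternating cycle, it suffices to produce such a cycle in the bipartite subgraph of $A_i[I]$ with sides $Z \setminus I$ and $Z \cap I$.

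First I would fix a circuit $C \subseteq J$ in $\bM_i$ and partition it as $C = A \cup B$ with $A := C \cap I$ and $B := C \setminus I$. Since $I \in \cI_i$, we have $B \neq \emptyset$; moreover $A \subseteq I \setminus Z$ (so $A \cap Z = \emptyset$) and $B \subseteq Z \setminus I$. For each $b \in B$, write $y_b := M(b) \in Z \cap I$; by definition of $A_i[I]$, $I + b - y_b \in \cI_i$. I then introduce an auxiliary digraph $H$ on vertex set $B$ with an arc $b \to b'$ whenever $b \neq b'$ and $I - y_b + b' \in \cI_i$, equivalently $(y_b, b') \in A_i[I]$. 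Any directed cycle $b_1 \to b_2 \to \cdots \to b_k \to b_1$ in $H$ lifts directly to the $M$-alternating cycle $b_1, y_{b_1}, b_2, y_{b_2}, \ldots, b_k, y_{b_k}, b_1$, so the task reduces to showing that $H$ contains a directed cycle, which is guaranteed as soon as every vertex has out-degree at least one.

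The main case is $I + b \notin \cI_i$: then the fundamental circuit $D := C_i(I, b) + b$ of $\bM_i$ is well-defined, and because $y_b \in Z$ while $A \cap Z = \emptyset$, we have $y_b \in D \setminus C$. Applying strong circuit elimination to the circuits $C$ and $D$ at their common element $b$, with distinguished element $y_b$, yields a circuit $C' \subseteq (C \cup D) - b \subseteq I \cup B$ containing $y_b$. Since $I \in \cI_i$, $C' \not\subseteq I$, so $C' \cap B$ is nonempty; pick $b' \in C' \cap B$, which satisfies $b' \neq b$ because $b \notin C'$. After iterating strong circuit elimination (if necessary) to strip any other elements of $B \setminus \{b'\}$ from $C'$ using their fundamental circuits, one arrives at $y_b \in C_i(I, b')$, i.e.\ $I - y_b + b' \in \cI_i$, producing the arc $b \to b'$ of $H$. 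The complementary case $b \in S_I$ is handled separately using only the independence of $I + b$ and the fact that $|B| \geq 2$, either locating another $b' \in B \cap S_I$ (for which every $y \in I$ is a valid mate, including $y_b$) or invoking the circuit-elimination argument at a different element of $B$ to reach $b$ along a cycle.

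The principal obstacle I anticipate is making this iterated elimination rigorous: a single application of strong circuit elimination does not automatically produce a fundamental circuit, and the resulting $C'$ may still contain multiple elements of $B$, so the argument likely needs an induction on $|C' \cap B|$ or an a priori choice of $C'$ as an inclusion-minimal circuit in $I \cup B$ passing through $y_b$. Once positive out-degree in $H$ is established, $H$ is a finite digraph and hence contains a directed cycle; lifting it yields an $M$-alternating cycle, and swapping $M$ along this cycle produces a distinct perfect matching on $Z$ in $A_i[I]$, contradicting the uniqueness of $M$.
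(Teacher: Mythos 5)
Your overall strategy is sound, and it is genuinely different from the classical argument of Krogdahl that the paper cites without proof (the standard proof orders $Z$ so that the exchange arcs form a triangular pattern and then argues directly that $I\triangle Z$ is independent; you instead prove the contrapositive by exhibiting an $M$-alternating cycle). The reduction to showing that every vertex of your auxiliary digraph $H$ on $B$ has positive out-degree is correct, and that claim is in fact true. However, as written there are two genuine gaps. First, the iterated strong circuit elimination can stall: an element $b''\in C'\cap B$ with $I+b''\in\cI_i$ has no fundamental circuit to eliminate against, and even when $I+b''\notin\cI_i$ you must check $y_b\notin C_i(I,b'')$ before applying strong elimination with $y_b$ as the distinguished element (though in that exceptional case $b''$ itself already supplies the desired arc). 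Second, the case $I+b\in\cI_i$ is not actually handled: there need not be a second element of $B\cap S_I$, and ``invoking the circuit-elimination argument at a different element of $B$ to reach $b$ along a cycle'' does not produce an out-arc at $b$; a single vertex of out-degree zero is compatible with your other vertices having out-arcs (they could all point to $b$), so the ``positive out-degree everywhere'' plan genuinely fails there as described.

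Both gaps close simultaneously with a short closure argument that replaces the elimination machinery entirely. Fix $b\in B$ and suppose no $b'\in B\setminus\{b\}$ satisfies $I+b'-y_b\in\cI_i$. Since $I-y_b$ is independent, this means $b'\in\cl_i(I-y_b)$ for every $b'\in B\setminus\{b\}$. Moreover $A=C\cap I\subseteq I\setminus Z\subseteq I-y_b$ because $y_b\in Z$. Hence $C-b=A\cup(B\setminus\{b\})\subseteq\cl_i(I-y_b)$, and since $C$ is a circuit through $b$ we get $b\in\cl_i(C-b)\subseteq\cl_i(I-y_b)$, i.e.\ $(I-y_b)+b\notin\cI_i$. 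This contradicts the matching edge at $b$, which says exactly that $I+b-y_b\in\cI_i$. This proves positive out-degree at every $b\in B$ uniformly (in particular it forces $|B|\ge 2$, which you had assumed without justification), and the rest of your argument — a finite digraph with minimum out-degree one contains a directed cycle, which lifts to an $M$-alternating cycle and hence to a second perfect matching — goes through as you describe.
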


Let us recall that a standard algorithm for finding a maximum-cardinality common independent set is driven by the following subroutine, Algorithm~\ref{alg:1} (see \cite[Section~41.2]{schrijver2003combinatorial}).

For any digraph $D=(E,A)$, a \textbf{path} in $D$ is a sequence $P=e_1e_2\cdots e_\ell$ of distinct vertices such that $(e_i,e_{i+1})\in A$ for each $i=1,2,\dots,\ell-1$; we call $P$ an \textbf{$e_1$--$e_\ell$ path} or \textbf{an $X$--$Y$ path} for sets $X \ni e_1$ and $Y \ni e_\ell$ to emphasize the end vertices, and define $\ell$ as the \textbf{length}.
A \textbf{cycle} in $D$ is a path that satisfies $(e_\ell, e_1)\in A$.
We often identify a path or cycle $e_1e_2\cdots e_\ell$ with its vertex set $\{e_1, e_2,\dots,e_\ell\}$.

\begin{algorithm2e}[ht!]
\caption{{{\sc Augment}$[E, \cI_1, \cI_2, I]$}} \label{alg:1}
\SetAlgoLined

\SetKwInOut{Input}{Input}\SetKwInOut{Output}{Output}
\Input{A finite set $E$, oracle access to the independence set families $\cI_1$ and $\cI_2$, and a common independent set $I \in \cI_1 \cap \cI_2$.}
\Output{A common independent set $J \in \cI_1 \cap \cI_2$ with $|J| = |I| + 1$ if one exists, or a subset $Z \subseteq E$ with $r_1(Z) + r_2(E \setminus Z) = |I|$.}
\BlankLine

Construct the exchangeability graph $D[I]$ with source set $S_I$ and sink set $T_I$.

If some $t \in T_I$ is reachable from some $s \in S_I$, then find a shortest $S_I$--$T_I$ path $P$ in $D[I]$, and return $J = I \triangle P$.

Otherwise, return $Z = \{\, e \in E \mid e~\text{can reach some}~t \in T_I~\text{in}~D[I] \,\}$.
\end{algorithm2e}

Now we turn to the weighted setting. For a weight function $w \colon E \to \RR$ and a subset $X \subseteq E$, define $w(X) \coloneqq \sum_{e \in X} w(e)$.
For a family $\cF \subseteq 2^E$, a subset $X \subseteq E$ is \textbf{$w$-maximal in $\cF$} if $X \in \textrm{arg\,max}\left\{\, w(Y) \mid Y \in \cF \,\right\}$. %
We define $\cI_i^k \coloneqq \{\, X \in \cI_i \mid |X| = k \,\}$ for $i = 1, 2$ and $k = 0, 1, \dots, n$.

One approach to solve the weighted matroid intersection problem is via augmentation along cheapest paths in the exchangeability graph  as shown in Algorithm~\ref{alg:2} (see~\cite[Section~41.3]{schrijver2003combinatorial}), where the cost function $c \colon E \to \RR$ is defined on the vertex set as follows:
  \begin{align}\label{eq:cost}
    c(e) &\coloneqq \begin{cases}
      w(e) & (e \in I),\\
      -w(e) & (e \in E \setminus I).
    \end{cases}
  \end{align}
For each path (or cycle) $P$ in $D[I]$, we define the \textbf{cost} of $P$ as $c(P) \coloneqq \sum_{e \in P}c(e)$.

\begin{algorithm2e}[ht!]
	\caption{{{\sc CheapestPathAugment}$[E, w, \cI_1, \cI_2, I]$}} \label{alg:2}
	\SetAlgoLined	
	\SetKwInOut{Input}{Input}\SetKwInOut{Output}{Output}
\Input{A finite set $E$, a weight function $w \colon E \to \RR$, oracle access to $\cI_1$ and $\cI_2$, and a $w$-maximal set $I \in \cI_1^k \cap \cI_2^k$ for some $k\in\{0, 1, \dots, n - 1\}$.}
\Output{A $w$-maximal set $J \in \cI_1^{k+1} \cap \cI_2^{k+1}$ if one exists, or a subset $Z \subseteq E$ with $r_1(Z) + r_2(E \setminus Z) = |I|$.}
\BlankLine

Create the exchangeability graph $D[I]$ with $S_I$ and $T_I$ as with {\sc Augment}$[E, \cI_1, \cI_2, I]$. In addition, define a cost function $c \colon E \to \RR$ by \eqref{eq:cost}.

If some $t \in T_I$ is reachable from some $s \in S_I$, then find a shortest cheapest $S_I$--$T_I$ path $P$ in $D[I]$ (i.e., the cost $c(P)$ is minimum, and subject to this, the length of $P$ is minimum), and return $J = I \triangle P$.

Otherwise, return $Z = \{\, e \in E \mid e~\text{can reach some}~t \in T_I~\text{in}~D[I] \,\}$.
\end{algorithm2e}

The next lemma characterizes $w$-maximal common independent sets in $\cI^k_1\cap \cI^k_2$.

\begin{lemma}[cf.~{\cite[Theorem~41.5]{schrijver2003combinatorial}}]\label{lem:negative_cycle}
A common independent set $I \in \cI_1^k \cap \cI_2^k$ is $w$-maximal if and only if $D[I]$ contains no negative-cost cycle with respect to the cost function $c$ defined as \eqref{eq:cost}.
\end{lemma}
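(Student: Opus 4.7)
The plan is to exploit the identity $w(I \triangle V) - w(I) = -c(V)$, valid whenever $V \subseteq E$ satisfies $|V \cap I| = |V \setminus I|$, together with Lemmas~\ref{lem:UPM-inv} and~\ref{lem:UPM}, which translate between cycle structures in $D[I]$ and independence in $\bM_i$. With this identity in hand, both directions reduce to producing the right kind of alternating vertex set: a negative cycle for one direction, and a matroid-union-of-matchings for the other.

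For the ``if'' direction I would argue by contrapositive: given any $I' \in \cI_1^k \cap \cI_2^k$ with $w(I') > w(I)$, set $Z \coloneqq I \triangle I'$. Then $|I \triangle Z| = |I'| = |I|$ and $I \triangle Z = I' \in \cI_i$ for $i = 1, 2$, so Lemma~\ref{lem:UPM-inv} produces perfect matchings $M_1 \subseteq A_1[I]$ and $M_2 \subseteq A_2[I]$ on $Z$. Since every $v \in Z \cap I$ has a unique outgoing $M_1$-arc and a unique incoming $M_2$-arc (and symmetrically on $Z \setminus I$), the union $M_1 \cup M_2$ is a $2$-regular subdigraph of $D[I]$ on $Z$, which decomposes into vertex-disjoint directed cycles. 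Their combined vertex-cost is $\sum_{v \in Z} c(v) = w(I \setminus I') - w(I' \setminus I) = w(I) - w(I') < 0$, so at least one of these cycles is negative.

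For the ``only if'' direction, again by contrapositive, I assume $D[I]$ has a negative cycle and select one, $C$, with the minimum number of vertices among all negative cycles. Alternation between $A_1[I]$- and $A_2[I]$-arcs along $C$ forces $|V(C) \cap I| = |V(C) \setminus I|$, so $I' \coloneqq I \triangle V(C)$ satisfies $|I'| = |I| = k$ and $w(I') - w(I) = -c(C) > 0$. It then suffices to show $I' \in \cI_i$ for $i = 1, 2$, for which Lemma~\ref{lem:UPM} requires that the $A_i[I]$-arcs of $C$ form the unique perfect matching of $V(C)$ in $A_i[I]$. If some different such matching $M_i'$ existed, pairing it with the $A_{3-i}[I]$-arcs of $C$ would yield a $2$-regular subdigraph of $D[I]$ on $V(C)$ of total cost $c(C) < 0$, decomposing into vertex-disjoint directed cycles whose vertex counts sum to $|V(C)|$.

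The main obstacle is to rule out the degenerate case in which this swap produces a single cycle on the same vertex set $V(C)$ rather than a proper decomposition, against which the minimum-vertex-count choice is powerless. To handle this I would strengthen the choice of $C$ by a secondary lexicographic tiebreak — e.g., first $|V(C)|$, then the multiset of arcs of $C$ under a fixed ordering — so that any swap that preserves $V(C)$ strictly improves the tiebreak, yielding a contradiction after finitely many iterations. Alternatively, one may bypass the uniqueness issue entirely by invoking the potential-function / LP-duality treatment of weighted matroid intersection from~\cite[Theorem~41.5]{schrijver2003combinatorial}, which already establishes the equivalent statement in full generality.
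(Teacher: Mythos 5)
The paper does not give a proof of Lemma~\ref{lem:negative_cycle}; it cites it directly as Theorem~41.5 of Schrijver~\cite{schrijver2003combinatorial}, so there is no in-paper argument to compare against. Evaluated on its own terms, your proposal is half right.

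Your ``if'' direction is correct and is the standard combinatorial argument: with $Z = I \triangle I'$, Lemma~\ref{lem:UPM-inv} gives perfect matchings $M_1 \subseteq A_1[I]$ and $M_2 \subseteq A_2[I]$ on $Z$; their union has in- and out-degree one at every vertex of $Z$ and hence decomposes into vertex-disjoint directed cycles with total vertex cost $c(Z) = w(I) - w(I') < 0$, so one of them is negative.

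The ``only if'' direction has a genuine gap, which you correctly identify but do not close. Minimizing $|V(C)|$ only kills the case where $M_1' \cup M_2$ breaks into two or more cycles; it gives nothing when the swap yields a single cycle on $V(C)$. The lexicographic tiebreak you propose does not repair this: $M_1'$ is an arbitrary alternative perfect matching in $A_1[I]$ on $V(C)$, and there is no reason the arc set of $M_1' \cup M_2$ should be lexicographically smaller than that of $C$. Even if you choose $M_1'$ to be the lexicographically minimal $A_1$-perfect-matching on $V(C)$, the argument only forces $M_1$ to coincide with that lex-min matching---it never establishes \emph{uniqueness}, which is what Lemma~\ref{lem:UPM} needs. (Contrast this with the path version, Lemma~\ref{lem:even_prefix}: there a linear order along the path rules out forward shortcuts and makes the matching unique, but a cycle has no such order, which is exactly why the single-cycle degenerate case survives.) Your stated fallback---invoking the potential-function/LP-duality treatment from Schrijver---is precisely what the paper does by citing Theorem~41.5, so as a ``proof'' of this direction it reduces to the reference rather than replacing it.
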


We also remark that any even prefix or suffix of any shortest $S_I$--$T_I$ path in $D[I]$ is also exchangeable (as it has no shortcut, this is an easy consequence of Lemma~\ref{lem:UPM}).
It is similarly (e.g., by considering the weight-splitting algorithm~\cite{frank1981weighted}) observed that this is also true for any shortest cheapest $S_I$--$T_I$ path in $D[I]$ if $I$ is $w$-maximal in $\cI_1^k \cap \cI_2^k$, where $k = |I|$.

\begin{lemma}[cf.~{\cite[Lemmas~13.1.11 and 13.2.14]{frank2011connections}}]\label{lem:even_prefix}
For any $I \in \cI_1^k \cap \cI_2^k$, the following statements hold.
\begin{itemize}
\setlength{\itemsep}{0mm}
\item Let $P$ be a shortest $S_I$--$T_I$ path in $D[I]$, and $P'$ be a prefix of $P$ ending in $I$ or a suffix starting in $I$, which is of even length.
Then, $I \triangle P' \in \cI_1^k \cap \cI_2^k$. %
\item Suppose that $I$ is $w$-maximal in $\cI_1^k \cap \cI_2^k$.
Let $P$ be a shortest cheapest $S_I$--$T_I$ path in $D[I]$ with respect to the cost function $c$ defined as \eqref{eq:cost}, and $P'$ be a prefix of $P$ ending in $I$ or a suffix starting in $I$, which is of even length.
Then, $I \triangle P' \in \cI_1^k \cap \cI_2^k$. %
\end{itemize}
\end{lemma}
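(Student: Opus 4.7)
The plan is to apply the Unique Perfect Matching Lemma (Lemma~\ref{lem:UPM}) to $Z = P'$ in each of $\bM_1$ and $\bM_2$. Writing $P = e_1 e_2 \cdots e_\ell$, the bipartite structure of $D[I]$ forces odd-indexed vertices into $E \setminus I$ (with $e_1 \in S_I$ and $e_\ell \in T_I$) and even-indexed vertices into $I$, with arcs alternating between $A_2[I]$ and $A_1[I]$. Consequently, a prefix of even length $2m$ ending in $I$, or an analogous suffix starting in $I$, satisfies $|I \triangle P'| = |I|$, which is the size hypothesis of Lemma~\ref{lem:UPM}.

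For existence of a perfect matching of each $A_i[I]$ on $P'$, the path itself suffices. In the prefix case, the path's odd-to-even arcs form an $A_2[I]$-matching, while the internal even-to-odd arcs form an $A_1[I]$-matching covering all of $P'$ except $e_1$ and $e_{2m}$; the arc $(e_{2m}, e_1) \in A_1[I]$, guaranteed by $e_1 \in S_I$ and Definition~\ref{def:exchange}, completes it. The suffix case is symmetric, with the roles of $A_1[I]$ and $A_2[I]$ swapped and the auxiliary arc provided by the sink $e_\ell \in T_I$.

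The crux is uniqueness. For the unweighted case, an alternate perfect matching $M'$ of some $A_i[I]$ on $P'$ would produce, via the symmetric difference with the path-induced matching, an alternating cycle in the bipartite graph. Indexing the cycle vertices by their positions along $P$ and choosing a cyclically consecutive pair whose positions strictly decrease yields an arc of $A_i[I]$ that jumps forward over a nonempty segment of $P$. Splicing this arc into $P$ (together with the auxiliary source/sink arc when the cycle involves it) yields a strictly shorter $S_I$--$T_I$ path, contradicting the shortness of $P$. Hence the matching is unique and Lemma~\ref{lem:UPM} gives $I \triangle P' \in \cI_1 \cap \cI_2$.

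The weighted part follows the same reduction, and the main obstacle is that a raw shortcut can change the vertex set and thus the cost, so shortness alone no longer precludes it. I would resolve this via the weight-splitting framework of Frank~\cite{frank1981weighted} cited in the statement: the $w$-maximality of $I$ (equivalent, by Lemma~\ref{lem:negative_cycle}, to the absence of negative-cost cycles in $D[I]$) allows re-costing so that $P$ becomes a shortest path under a nonnegative cost function, after which the alternating cycle from an alternate matching induces a rerouting of $P$ that weakly decreases cost and strictly decreases length, contradicting the shortest-cheapest choice of $P$. A further application of Lemma~\ref{lem:UPM} then yields $I \triangle P' \in \cI_1^k \cap \cI_2^k$.
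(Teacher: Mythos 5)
Your overall plan---reduce both claims to Lemma~\ref{lem:UPM} via the absence of shortcuts on a shortest (cheapest) path, handling the weighted case by weight-splitting---is exactly what the paper's one-line remark before the lemma suggests, and several pieces are correct: $|I\triangle P'|=|I|$, the existence of the two perfect matchings, and the uniqueness argument for the matroid in which the path arcs of $P'$ alone already form a perfect matching of $P'$ (namely $A_2[I]$ for a prefix and $A_1[I]$ for a suffix). In that matroid every non-path arc inside $P'$ that is not a backward arc of displacement at most $-1$ would be a genuine shortcut, the adjacency matrix ordered along $P$ is triangular with the path arcs on the diagonal, and the matching is unique.

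The gap is in the other matroid, the one that needs the auxiliary source/sink arc. Take the prefix case and $A_1[I]$. Because $e_1\in S_I$, Definition~\ref{def:exchange} puts $(y,e_1)$ into $A_1[I]$ for \emph{every} $y\in I$, not only $y=e_{2m}$; moreover backward arcs $(e_{2i},e_{2k+1})$ with $2i>2k+1$ may lie in $A_1[I]$ without shortcutting $P$. Concretely, with $P'=e_1e_2e_3e_4$ and $(e_4,e_3)\in A_1[I]$ (i.e.\ $e_4\in C_1(I,e_3)$, perfectly consistent with $P$ being shortest since this arc points backward), the restriction of $A_1[I]$ to $P'$ contains the two perfect matchings $\{(e_2,e_3),(e_4,e_1)\}$ and $\{(e_4,e_3),(e_2,e_1)\}$; their symmetric difference is an alternating cycle none of whose arcs jumps forward along $P$, so no shorter or cheaper $S_I$--$T_I$ path can be spliced out and no contradiction arises. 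Hence the hypothesis of Lemma~\ref{lem:UPM} simply fails for $Z=P'$ in $\bM_1$, and symmetrically for $A_2[I]$ on a suffix because of the sink. The statement is of course still true; the standard repair is to apply Lemma~\ref{lem:UPM} in $\bM_1$ only to the interior $Z=\{e_2,\dots,e_{2m-1}\}$, where triangularity and hence uniqueness do hold, conclude $I\triangle Z\in\cI_1$, note that $\cl_1(I\triangle Z)=\cl_1(I)$ because no non-$I$ element of $Z$ belongs to $S_I$ on a shortest path, so that $e_1\notin\cl_1(I\triangle Z)$ can be added, and finally delete $e_{2m}$ to obtain $I\triangle P'\in\cI_1^k$. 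Your weighted argument inherits the same hole, since it relies on the same splicing step applied to $Z=P'$.
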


\section{Cardinality Matroid Intersection}\label{sec:unweighted_min_rank}
From now on, we assume access only to the minimum rank function $\rmin \colon 2^E \to \ZZ_{\geq 0}$ defined by
\begin{align*}
  \rmin(X) \coloneqq \min\left\{r_1(X), r_2(X)\right\} \quad (X \subseteq E).
\end{align*}
Note that $I \subseteq E$ is a common independent set if and only if $\rmin(I) = |I|$.

First, we observe that Theorem~\ref{thm:Edmonds} can be rephrased as follows.

\begin{lemma}\label{lem:duality_rmin}
The following equation holds:
\begin{align*}
  \max\left\{\, |I| \bigm| I \in \cI_1 \cap \cI_2 \,\right\} &= \min\left\{\, \rmin(Z) + \rmin(E \setminus Z) \mid Z \subseteq E \,\right\}.
\end{align*}
\end{lemma}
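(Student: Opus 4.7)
The plan is to sandwich the new quantity $\min\{\rmin(Z) + \rmin(E \setminus Z) \mid Z \subseteq E\}$ between $\max\{|I| \mid I \in \cI_1 \cap \cI_2\}$ (from below) and the Edmonds min value $\min\{r_1(Z) + r_2(E \setminus Z) \mid Z \subseteq E\}$ (from above), and then invoke Theorem~\ref{thm:Edmonds} to collapse the chain.

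For the lower bound (weak duality), I would take any common independent set $I \in \cI_1 \cap \cI_2$ and any $Z \subseteq E$. Since $I \cap Z$ lies in both $\cI_1$ and $\cI_2$, we have $|I \cap Z| \le r_i(Z)$ for $i = 1, 2$, and hence $|I \cap Z| \le \rmin(Z)$; likewise $|I \setminus Z| \le \rmin(E \setminus Z)$. Adding these gives
\[
|I| = |I \cap Z| + |I \setminus Z| \le \rmin(Z) + \rmin(E \setminus Z),
\]
so $\max\{|I| \mid I \in \cI_1 \cap \cI_2\} \le \min\{\rmin(Z) + \rmin(E \setminus Z) \mid Z \subseteq E\}$.

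For the upper bound, I simply use the pointwise inequality $\rmin \le r_i$ for $i = 1, 2$: for every $Z \subseteq E$,
\[
\rmin(Z) + \rmin(E \setminus Z) \le r_1(Z) + r_2(E \setminus Z),
\]
and minimizing both sides over $Z$ yields $\min\{\rmin(Z) + \rmin(E \setminus Z)\} \le \min\{r_1(Z) + r_2(E \setminus Z)\}$. By Theorem~\ref{thm:Edmonds}, the right-hand side equals $\max\{|I| \mid I \in \cI_1 \cap \cI_2\}$, closing the chain of inequalities and forcing equality throughout.

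There is really no obstacle here; the statement is essentially a cosmetic reformulation of Edmonds' theorem once one notices that the weak-duality argument already goes through with $\rmin$ in place of the individual rank functions, and that the resulting bound is trivially no stronger than the one with $r_1, r_2$. The only point to double-check is that $\rmin(E \setminus Z)$ on one side and $\rmin(Z)$ on the other can independently ``choose'' different matroids to realize the minimum: this is harmless because both steps of the sandwich are coordinate-wise in $Z$, so no coupling between the two summands is needed.
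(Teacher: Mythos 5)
Your proof is correct and follows essentially the same sandwich argument as the paper: bound $|I| \le \rmin(Z) + \rmin(E \setminus Z) \le r_1(Z) + r_2(E \setminus Z)$ for all $I$ and $Z$, then invoke Theorem~\ref{thm:Edmonds} to force equality throughout.
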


\begin{proof}
For any common independent set $I \in \cI_1 \cap \cI_2$ and any subset $Z \subseteq E$, we have
\begin{align*}
  |I| &= |I \cap Z| + |I \setminus Z| \leq \rmin(Z) + \rmin(E \setminus Z) \leq r_1(Z) + r_2(E \setminus Z).
\end{align*}
Since Theorem~\ref{thm:Edmonds} assures the minimum of the most right-hand side is equal to the maximum of the most left-hand side, the same is true for the intermediate one with respect to $\rmin$.
\end{proof}

From the algorithmic viewpoint, while Lemma~\ref{lem:duality_rmin} gives an optimality certification in cardinality maximization, it is difficult to construct the exchangeability graph $D[I]$ (cf.~Definition~\ref{def:exchange}) used in Algorithm~\ref{alg:1}.
Specifically, in general, we cannot exactly determine the fundamental circuits $C_i(I, x)$ for all $x \in E \setminus (I \cup S_I \cup T_I)$ and $i = 1, 2$ only by the minimum rank function.

We overcome this difficulty by modifying the exchangeability graph as follows.
Roughly speaking, we estimate the fundamental circuits $C_1(I, x)$ and $C_2(I, x)$ as $C_1(I, x) \cup C_1(I, t^*)$ and $C_2(I, x) \cup C_2(I, s^*)$, respectively, for some fixed $s^* \in S_I \setminus T_I$ and $t^* \in T_I \setminus S_I$.
This yields extra arcs not existing in the original exchangeability graph $D[I]$, but those extra arcs are not used in any shortest $S_I$--$T_I$ path in the new graph (cf.~Lemma~\ref{lem:AugmentMinRank}).
That is, the shortest $S_I$--$T_I$ paths are completely preserved, and hence one can find a desired shortest path using the modified exchangeability graph instead of $D[I]$.

\begin{definition}[Modified Exchangeability Graph]
For a common independent set $I \subseteq E$ and a pair $(s^*, t^*)$ with $\rmin(I + s^*) = \rmin(I + t^*) = \rmin(I) = |I|$ and $\rmin(I + s^* + t^*) = |I| + 1$, the \textbf{modified exchangeability graph} is a directed bipartite graph $\Dmin[I; s^*, t^*] = (E \setminus I, I; \Amin[I; s^*, t^*])$ defined as follows.
Define two subsets $S_I^*, T_I^* \subseteq E \setminus I$ by
\begin{align*}
  S_I^* &\coloneqq \{\, s \in E \setminus I \mid \rmin(I + s + t^*) = |I| + 1 \,\},\\
  T_I^* &\coloneqq \{\, t \in E \setminus I \mid \rmin(I + s^* + t) = |I| + 1 \,\}.
\end{align*}
We then define $\Amin[I; s^*, t^*] \coloneqq \Amino[I; s^*, t^*] \cup \Amint[I; s^*, t^*]$, where
\begin{align}
  \Amino[I; s^*, t^*] &\coloneqq \{\, (y, s) \mid s \in S_I^*,~y \in I \,\}\nonumber\\
  &\quad \cup \{\, (y, t) \mid t \in T_I^* \setminus S_I^*,~y \in I,~\rmin(I + t - y) = |I| \,\}\nonumber\\
  &\quad \cup \{\, (y, x) \mid x \in E \setminus (I \cup S_I^* \cup T_I^*),~y \in I,\ \rmin(I + t^* + x - y) = |I| \,\},\label{eq:Amino}\\
  \Amint[I; s^*, t^*] &\coloneqq \{\, (t, y) \mid t \in T_I^*,~y \in I \,\}\nonumber\\
  &\quad \cup \{\, (s, y) \mid s \in S_I^* \setminus T_I^*,~y \in I,~\rmin(I + s - y) = |I| \,\}\nonumber\\
  &\quad \cup \{\, (x, y) \mid x \in E \setminus (I \cup S_I^* \cup T_I^*),~y \in I,\ \rmin(I + s^* + x - y) = |I| \,\}.\label{eq:Amint}
\end{align}
\end{definition}

It is easy to observe that $\{S_I^*, T_I^*\} = \{S_I, T_I\}$, and we can assume that $S_I^* = S_I$ and $T_I^* = T_I$ by symmetry (i.e., by exchanging $s^*$ and $t^*$ if necessary).

\begin{lemma}\label{lem:AugmentMinRank}
The modified exchangeability graph $\Dmin[I; s^*, t^*]$ satisfies the following properties.
\begin{enumerate}[\rm (a)]
  \item $A_i[I] \subseteq \Amini[I; s^*, t^*]$ for $i = 1, 2$.
    Moreover, if $(u, v) \in \Amin[I; s^*, t^*] \setminus A[I]$, then $u, v \not\in S_I \cup T_I$.
  \item If $(y, x) \in \Amino[I; s^*, t^*] \setminus A_1[I]$, then $(y, t^*) \in A_1[I]$.
    If $(x, y) \in \Amint[I; s^*, t^*] \setminus A_2[I]$, then $(s^*, y) \in A_2[I]$.
\end{enumerate}
\end{lemma}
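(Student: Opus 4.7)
The plan is to unfold the definitions of $\Amino$ and $\Amint$ and verify each claim arc by arc. By the remark preceding the lemma we may assume $S_I^* = S_I$ and $T_I^* = T_I$, which in turn forces $s^* \in S_I \setminus T_I$ and $t^* \in T_I \setminus S_I$; in particular $I + s^* \in \cI_1$, $I + t^* \in \cI_2$, $s^* \in \cl_2(I)$, and $t^* \in \cl_1(I)$. These four facts will be used repeatedly below.

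For part (a) I will show $A_1[I] \subseteq \Amino$ by partitioning arcs $(y, x) \in A_1[I]$ according to where the head $x$ lies; the containment $A_2[I] \subseteq \Amint$ is symmetric. If $x \in S_I$ the arc matches the first bullet of \eqref{eq:Amino}. If $x \in T_I \setminus S_I$, then $y \in C_1(I, x)$ gives $r_1(I+x-y) = |I|$ and $I + x \in \cI_2$ gives $r_2(I+x-y) = |I|$, hence $\rmin(I+x-y) = |I|$, matching the second bullet. If $x \in E \setminus (I \cup S_I \cup T_I)$, then $x \in \cl_1(I)$ together with $y \in C_1(I, x)$ yields $\cl_1(I+x-y) = \cl_1(I)$; combined with $t^* \in \cl_1(I)$ this gives $r_1(I+t^*+x-y) = |I|$, while $I + t^* \in \cI_2$ gives $r_2(I+t^*+x-y) \ge r_2(I+t^*-y) = |I|$, so $\rmin(I+t^*+x-y) = |I|$ and the third bullet applies. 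The second half of (a) then follows because the first two bullets of $\Amino$ only produce arcs already in $A_1[I]$—the first trivially, and for the second the relation $\rmin(I+t-y) = |I|$ combined with $|I+t-y| = |I|$ forces $r_1(I+t-y) = |I|$, i.e.\ $y \in C_1(I, t)$—so that any arc of $\Amino \setminus A[I]$ must come from the third bullet, whose tail lies in $I$ and whose head lies in $E \setminus (I \cup S_I \cup T_I)$; with the symmetric statement for $\Amint$, both endpoints avoid $S_I \cup T_I$.

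For part (b) I take $(y, x) \in \Amino \setminus A_1[I]$. The case analysis above forces $x \in E \setminus (I \cup S_I \cup T_I)$, $\rmin(I+t^*+x-y) = |I|$, and $y \notin C_1(I, x)$, the latter meaning $r_1(I+x-y) = |I| - 1$, equivalently $x \in \cl_1(I - y)$. The decisive step is a flat-overflow argument: were $t^*$ also in $\cl_1(I - y)$, then $\cl_1(I - y)$ would contain $(I - y) + x + t^*$, forcing $r_1(I+t^*+x-y) \le r_1(\cl_1(I-y)) = |I| - 1$ and contradicting $\rmin(I+t^*+x-y) = |I|$. Hence $t^* \notin \cl_1(I - y)$, which is exactly $I + t^* - y \in \cI_1$, i.e.\ $(y, t^*) \in A_1[I]$. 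The companion claim for $\Amint$-arcs follows by exchanging the roles of $\bM_1$ and $\bM_2$, and of $s^*$ and $t^*$.

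The main obstacle is the flat-overflow step in part (b): it is the one place where the scalar value $\rmin(I+t^*+x-y)$—all that the oracle reveals—must be leveraged to extract $\bM_1$-specific information, namely the non-membership of $t^*$ in the flat $\cl_1(I - y)$. This is precisely what makes the over-estimate used in the third bullet of $\Amino$ conservative in the sense the lemma requires, and what ultimately allows the modified exchangeability graph $\Dmin$ to support the shortest-path augmentation of Algorithm~\ref{alg:1}.
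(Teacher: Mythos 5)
Your proof is correct and follows essentially the same route as the paper's: part (a) is a direct case analysis over the three bullets in the definitions of $\Amino$ and $\Amint$, and part (b) reduces to showing $t^*\notin\cl_1(I-y)$ (equivalently $(y,t^*)\in A_1[I]$). The only cosmetic difference is in (b): the paper extracts the same conclusion via the submodular inequality $r_2(I+s^*-y)+r_2(I+x-y)\geq r_2(I+s^*+x-y)+r_2(I-y)$, whereas you phrase it as a closure/flat argument ($x\in\cl_1(I-y)$, so $t^*\in\cl_1(I-y)$ would force $r_1(I+t^*+x-y)\leq|I|-1$); these are standard equivalent packagings of the same matroid fact.
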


\begin{proof}
(a)~
By symmetry, we only prove the statement for $i = 2$.
We first confirm that $A_2[I]$ and $\Amint[I]$ coincide around each vertex in $S_I \cup T_I$.
This is trivial for each sink in $T_I = T_I^*$ by definition.
Let $s \in S_I = S_I^*$ and $y \in I$.
Then, $(s, y) \in A_2[I]$ if and only if $I + s - y \in \cI_2$ by definition.
As $I + s \in \cI_1$ by $s \in S_I$, we have $r_1(I + s - y) = r_1(I + s) - 1 = |I|$.
Hence, $(s, y) \in \Amint[I; s^*, t^*]$ if and only if $r_2(I + s - y) \geq |I|$, which means $I + s - y \in \cI_2$.
Thus, $A_2[I]$ and $\Amint[I]$ coincide around each source $s \in S_I$, and we are done.

Next, suppose that $(x, y) \in A_2[I]$ with $x \in E \setminus (I \cup S_I \cup T_I)$ and $y \in I$, and we show that $(x, y) \in \Amint[I; s^*, t^*]$, which completes the proof.
By definition, we have $I + x - y \in \cI_2$, and we show $\rmin(I + s^* + x - y) = |I|$, regardless of the choice of $s^* \in S_I \setminus T_I$.
On one hand, as $I + s^* \in \cI_1$ (by $s^* \in S_I$), we have $r_1(I + s^* + x - y) \geq r_1(I + s^*) - 1 = |I|$.
On the other hand, as $\{s^*, x\} \subseteq \cl_2(I)$ by $s^*, x \not\in T_I$, we have $r_2(I + s^* + x) = |I|$, and hence $|I| = r_2(I + x - y) \leq r_2(I + s^* + x - y) \leq r_2(I + s^* + x) = |I|$.
This concludes that $\rmin(I + s^* + x - y) = |I|$, i.e., $(x, y) \in \Amint[I; s^*, t^*]$.
Thus we are done.

\medskip\noindent
(b)~
By symmetry, fix an arc $(x, y) \in \Amint[I; s^*, t^*] \setminus A_2[I]$ with $x \in E \setminus (I \cup S_I \cup T_I)$ and $y \in I$ (cf.~(a)), and we show $r_2(I + s^* - y) = |I|$, which implies $(s^*, y) \in A_2[I]$.
By definition, we have $\rmin(I + s^* + x - y) = |I|$ and $r_2(I + x - y) \leq |I| - 1$. As $x \neq s^* \not\in T_I$, we obtain
\begin{align*}
  |I| &= r_2(I + s^*) \geq r_2(I + s^* - y) \geq r_2(I + s^* + x - y) + r_2(I - y) - r_2(I + x - y) \geq |I|,
\end{align*}
where the middle inequality is by the submodularity of the rank function $r_2$.
This concludes that $r_2(I + s^* - y) = |I|$, and we are done.
\end{proof}

By Lemma~\ref{lem:AugmentMinRank}, for any pair $(s^*, t^*)$ with $s^* \in S_I \setminus T_I$ and $t^* \in T_I \setminus S_I$,
\begin{itemize}
\item the modified exchangeability graph $\Dmin[I; s^*, t^*]$ includes the exchangeability graph $D[I]$ as its subgraph, and
\item any shortest $S_I$--$T_I$ path in $\Dmin[I; s^*, t^*]$ consists of arcs in $A[I]$, since there must be a shortcut arc in $A[I]$ if some arc in $\Amin[I; s^*, t^*] \setminus A[I]$ is traversed.
\end{itemize}
Hence, Algorithm~\ref{alg:augment_min_rank} correctly emulates the usual augmentation step (Algorithm~\ref{alg:1}) up to the symmetry of $\bM_1$ and $\bM_2$.
Note that if there exists $x \in S_I \cap T_I$, then we can recognize it as $\rmin(I + x) = |I| + 1$ and augment $I$ just by adding $x$ to $I$.
We also remark that we do not need to distinguish the two cases when $s^* \in S_I$ (and $t^* \in T_I$) and when $s^* \in T_I$ (and $t^* \in S_I$) in Algorithm~\ref{alg:augment_min_rank}; the latter case corresponds to the case when we just interchange $\bM_1$ and $\bM_2$ in advance (by the definitions of $\Amino[I; s^*, t^*]$ and $\Amint[I; s^*, t^*]$ as well as $S_I^*$ and $T_I^*$), which does not essentially change the task.
Since the construction of $\Dmin[I; s^*, t^*]$ requires $\mathrm{O}(n^2)$ time (including the number of oracle accesses) and the number of augmentations is the maximum size of a common independent set, we obtain the following theorem.

\begin{theorem}\label{thm:unweighted}
   A maximum common independent set in the two matroids can be found in $\mathrm{O}(rn^2)$ time only using the minimum rank oracle, where $r$ is the maximum cardinality of a common independent set.
\end{theorem}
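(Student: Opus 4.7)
The plan is to apply Algorithm~\ref{alg:augment_min_rank} iteratively, starting from $I_0 = \emptyset$ (a common independent set by looplessness). Each call either augments $I$ to a common independent set of cardinality $|I| + 1$ or returns a subset $Z \subseteq E$ with $\rmin(Z) + \rmin(E \setminus Z) = |I|$, which, by Lemma~\ref{lem:duality_rmin}, certifies that $I$ has maximum cardinality. Thus after at most $r$ iterations the maximum common independent set is found, and what remains is to verify the correctness of each branch of Algorithm~\ref{alg:augment_min_rank} and to count oracle calls.

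I would first dispatch the two early-exit branches. In the first branch, $\rmin(I + s + t) = |I|$ for all $s, t \in E \setminus I$ implies in particular $\rmin(I + x) = |I|$ for every singleton, and then a short case analysis shows that at least one of $S_I, T_I$ must be empty: otherwise for $s \in S_I$ and $t \in T_I$ we would get $r_1(I + s + t) = |I| + 1$ and $r_2(I + s + t) = |I| + 1$, whence $\rmin(I + s + t) = |I| + 1$. Consequently, for the matroid with no augmenting element, $r_i(E) = |I|$, so $\rmin(E) \le |I|$ and $Z = E$ witnesses the duality. The second branch trivially produces a direct augmentation via any $x$ with $\rmin(I + x) = |I| + 1$, and its failure ensures $S_I \cap T_I = \emptyset$. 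A pair $(s^*, t^*)$ as required then exists (one of $s^*, t^*$ is a source and the other a sink, up to the symmetry noted after the definition), so Step~3 is well defined.

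For the main branch I would invoke Lemma~\ref{lem:AugmentMinRank}: $D[I] \subseteq \Dmin[I; s^*, t^*]$ and the two graphs agree around every vertex of $S_I \cup T_I$. In particular, $S_I^*$ reaches $T_I^*$ in $\Dmin$ iff $S_I$ reaches $T_I$ in $D[I]$. The key claim is that any shortest $S_I^*$--$T_I^*$ path $P$ in $\Dmin[I; s^*, t^*]$ uses only arcs of $A[I]$: if $P$ traversed an extra arc $(y, x) \in \Amino \setminus A_1[I]$, Lemma~\ref{lem:AugmentMinRank}(b) supplies $(y, t^*) \in A_1[I]$, so the suffix of $P$ from $y$ onward could be replaced by the single arc $y \to t^*$ (and symmetrically for extra arcs in $\Amint$), contradicting the minimality of $|P|$. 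Therefore $P$ is a shortest $S_I$--$T_I$ path in the true exchangeability graph $D[I]$, and $I \triangle P$ is a valid augmentation by the standard correctness of Algorithm~\ref{alg:1}. In the complementary case, the returned set $Z$ is closed under incoming arcs of $\Dmin$, hence a fortiori of $D[I]$, and avoids $S_I$; the standard cut argument then gives $r_2(Z) + r_1(E \setminus Z) = |I|$, and combining with the trivial inequality $|I| = |I \cap Z| + |I \setminus Z| \le \rmin(Z) + \rmin(E \setminus Z)$ yields equality.

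For the complexity, each iteration is dominated by the $\mathrm{O}(n^2)$ oracle calls of Step~1; once $(s^*, t^*)$ is fixed, determining $S_I^*$, $T_I^*$ and the arc set of $\Dmin[I; s^*, t^*]$ takes only $\mathrm{O}(|I| \cdot n) = \mathrm{O}(rn)$ additional calls (one per candidate arc between $I$ and $E \setminus I$), and the reachability and shortest-path computations are purely combinatorial. Since $|I|$ strictly increases at each successful call, there are at most $r$ iterations, giving the claimed $\mathrm{O}(rn^2)$ oracle complexity. The only point that genuinely requires care is verifying that shortest $S_I$--$T_I$ paths survive the modification of the graph; this is exactly where Lemma~\ref{lem:AugmentMinRank}(b) is used, and I would flag this shortcut argument as the main obstacle.
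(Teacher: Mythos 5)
Your proposal is correct and follows the paper's approach exactly: iterate \textsc{AugmentMinRank} from $I=\emptyset$, dispatch the two early exits by the observation that one of $S_I,T_I$ is empty resp.\ $S_I\cap T_I\neq\emptyset$, and for the main branch invoke Lemma~\ref{lem:AugmentMinRank}(a)--(b) to argue that any extra arc on a shortest $S_I^*$--$T_I^*$ path in $\Dmin[I;s^*,t^*]$ admits a strictly shorter shortcut via $(y,t^*)$ or $(s^*,y)$, so shortest paths already live in $D[I]$. You spell out several verifications the paper leaves implicit (that a valid pair $(s^*,t^*)$ exists after Steps~1--2, the cut argument for $Z$, the per-iteration oracle count), but the key idea and the complexity accounting coincide with the paper's.
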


\begin{algorithm2e}[ht!]
\caption{{{\sc AugmentMinRank}$[E, \rmin, I]$}} \label{alg:augment_min_rank}
\SetAlgoLined

\SetKwInOut{Input}{Input}\SetKwInOut{Output}{Output}
\Input{A finite set $E$, oracle access to the minimum rank function $\rmin$ of two matroids on $E$, and a common independent set $I$ (with $\rmin(I) = |I|$).}
\Output{A common independent set $J \subseteq E$ with $\rmin(J) = |J| = |I| + 1$ if one exists, or a subset $Z \subseteq E$ with $\rmin(Z) + \rmin(E \setminus Z) = |I|$.}
\BlankLine

If $\rmin(I + s + t) = |I|$ for every $s, t \in E \setminus I$, then just halt.%

If $\rmin(I + x) = |I| + 1$ for some $x \in E \setminus I$, then halt with returning $J = I + x$.

Take any pair $(s^*, t^*)$ with $\rmin(I + s^*) = \rmin(I + t^*) = |I|$ and $\rmin(I + s^* + t^*) = |I| + 1$,
and create the modified exchangeability graph $\Dmin[I; s^*, t^*]$ with $S_I^*$ and $T_I^*$.

If some $s \in S_I^*$ can reach some $t \in T_I^*$, then find a shortest $S_I^*$--$T_I^*$ path $P$ in $\Dmin[I; s^*, t^*]$, and return $J = I \triangle P$.
Otherwise, return $Z = \{\, e \in E \mid e~\text{can reach}~T_I^*~\text{in}~\Dmin[I; s^*, t^*] \,\}$.
\end{algorithm2e}

\section{Consistent Exchangeability Graphs}\label{sec:consistent}
For the weighted problem, we try to find a shortest cheapest $S_I$--$T_I$ path in $D[I]$ using the modified exchangeability graph.
In contrast to cardinality augmentation, the extra arcs may do harm in the following two senses: %
\begin{itemize}
\item they yield extra cycles whose cost may be negative, thus making it difficult just to find a cheapest path in the graph;
\item some extra arcs may be traversed by a shortest cheapest $S_I^*$--$T_I^*$ path in $\Dmin[I; s^*, t^*]$, for which we may not obtain a better path in $D[I]$ by using a shortcut arc $(s^*, y)$ or $(y, t^*)$ (cf.~Lemma~\ref{lem:AugmentMinRank}).
\end{itemize}
In order to overcome these issues, we refine the modified exchangeability graph in two steps.

Let us fix a common independent set $I$ (whose $w$-maximality is not necessary in most parts, and we explicitly mention when we need it).
First, we define $\Dmin[I] = (E \setminus I, I; \Amin[I])$ as the intersection of all possible candidates for the modified exchangeability graphs after $S_I^*$ and $T_I^*$ are defined by fixing $s^* \in S_I^* \setminus T_I^*$ and $t^* \in T_I^* \setminus S_I^*$ with $\rmin(I + s^*) = \rmin(I + t^*) = |I|$ and $\rmin(I + s^* + t^*) = |I| + 1$.
In other words, we define $\Amin[I] \coloneqq \Amino[I] \cup \Amint[I]$, where %
\begin{align*}
\Amino[I] &\coloneqq \bigcap_{t \in T_I^* \setminus S_I^*} \Amino[I; s^*, t],\\
\Amint[I] &\coloneqq \bigcap_{s \in S_I^* \setminus T_I^*} \Amint[I; s, t^*].
\end{align*}
We here remark that $\Amino[I; s, t]$ and $\Amint[I; s, t]$ depend only on $t \in T_I^* \setminus S_I^*$ and $s \in S_I^* \setminus T_I^*$, respectively (cf.~\eqref{eq:Amino} and \eqref{eq:Amint}).
As with in the previous section, we assume that $S_I^* = S_I$ and $T_I^* = T_I$, and then Lemma~\ref{lem:AugmentMinRank} is strengthened as follows.

\begin{lemma}\label{lem:intersection}
  The graph $\Dmin[I]$ satisfies the following properties.
\begin{enumerate}[\rm (a)]
  \item[{\rm (a)}] $A_i[I] \subseteq \Amini[I]$ for $i = 1, 2$.
    Moreover, if $(u, v) \in \Amin[I] \setminus A[I]$, then $u, v \not\in S_I \cup T_I$.
  \item[{\rm (b)}] If $(y, x) \in \Amino[I] \setminus A_1[I]$, then $(y, t) \in A_1[I]$ for every $t \in T_I$.
    If $(x, y) \in \Amint[I] \setminus A_2[I]$, then $(s, y) \in A_2[I]$ for every $s \in S_I$.
\end{enumerate}
\end{lemma}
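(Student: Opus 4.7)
The plan is to derive Lemma~\ref{lem:intersection} directly from Lemma~\ref{lem:AugmentMinRank}, treating the new definition $\Amin[I]$ as an intersection of previously analyzed sets. Before starting, I would note the preliminary observation that for every $t \in T_I \setminus S_I$ the pair $(s^*, t)$ is a valid choice in the modified exchangeability graph construction: we have $\rmin(I + t) = |I|$ because $I + t \in \cI_2$ while $t \notin S_I$ forces $r_1(I + t) = |I|$, and $\rmin(I + s^* + t) = |I| + 1$ follows from the identification $T_I = T_I^*$. A symmetric statement holds for every $s \in S_I \setminus T_I$. Hence every term appearing in the definition of $\Amino[I]$ and $\Amint[I]$ is a genuine modified exchangeability graph to which Lemma~\ref{lem:AugmentMinRank} applies.

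For part~(a), the inclusion $A_1[I] \subseteq \Amino[I; s^*, t]$ holds for every $t \in T_I \setminus S_I$ by Lemma~\ref{lem:AugmentMinRank}(a), so taking the intersection preserves it; the case $i = 2$ is symmetric. For the moreover part, I would fix any admissible pair $(s^*, t^*)$; since $\Amin[I] \subseteq \Amin[I; s^*, t^*]$, any arc in $\Amin[I] \setminus A[I]$ also lies in $\Amin[I; s^*, t^*] \setminus A[I]$, so Lemma~\ref{lem:AugmentMinRank}(a) immediately shows that its endpoints avoid $S_I \cup T_I$.

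For part~(b), let $(y, x) \in \Amino[I] \setminus A_1[I]$; by~(a) we may assume $x \notin S_I \cup T_I$ and $y \in I$. Fix any $t \in T_I$. If $t \in T_I \setminus S_I$, then $(y, x) \in \Amino[I; s^*, t] \setminus A_1[I]$ by the definition of the intersection, so applying Lemma~\ref{lem:AugmentMinRank}(b) with $t$ in the role of $t^*$ yields $(y, t) \in A_1[I]$. If instead $t \in T_I \cap S_I$, then $t$ is a source, and $(y, t) \in A_1[I]$ follows directly from the defining clause $\{\, (y, s) \mid s \in S_I,~y \in I \,\}$ in Definition~\ref{def:exchange}. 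The symmetric assertion for $\Amint[I]$ is proved in the same way.

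I do not expect any real obstacle; the lemma is essentially bookkeeping about how an intersection of sets behaves under the properties already established in Lemma~\ref{lem:AugmentMinRank}. The only mildly subtle point is the separate treatment of $t \in T_I \cap S_I$, which is excluded from the index set of the intersection defining $\Amino[I]$ but is trivially handled because such $t$ is automatically a source.
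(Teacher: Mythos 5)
Your proof is correct and takes essentially the approach the paper intends: the paper states Lemma~\ref{lem:intersection} as a direct strengthening of Lemma~\ref{lem:AugmentMinRank} without writing out the details, and your derivation — applying Lemma~\ref{lem:AugmentMinRank}(a) and (b) to each term $\Dmin[I;s^*,t]$ and $\Dmin[I;s,t^*]$ in the intersection, after verifying those are bona fide modified exchangeability graphs with the same $S_I^*,T_I^*$ — is exactly the bookkeeping that verification requires. The separate handling of $t \in T_I \cap S_I$ (and symmetrically $s \in S_I \cap T_I$), which falls outside the index set of the intersection but is trivially covered by the source/sink clause of Definition~\ref{def:exchange}, is a correct and appropriately careful touch, even though in the algorithmic context this case is already screened out earlier.
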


In what follows, we call the exchangeability graph $D[I] = (E \setminus I, I; A[I])$ the \textbf{true graph}.
An arc $a \in \Amin[I]$ is \textbf{true} if $a \in A[I]$, and \textbf{fake} otherwise.
By Lemma~\ref{lem:intersection}, we can immediately determine that an arc $a \in \Amin[I]$ is true if
\begin{itemize}
\item $a$ is incident to $S_I \cup T_I$,
\item $a = (y, x) \in \Amino[I]$ and $(y, t) \not\in \Amino[I]$ for some $t \in T_I$, or
\item $a = (x, y) \in \Amint[I]$ and $(s, y) \not\in \Amint[I]$ for some $s \in S_I$.
\end{itemize}
These arcs are called \textbf{sure}, and the other arcs in $\Amin[I]$ are called \textbf{suspicious}.
We denote by $\Asure[I] \subseteq \Amin[I]$ the set of sure arcs, and define $\Asusp[I] \coloneqq \Amin[I] \setminus \Asure[I]$.
In other words, we have $\Asusp[I] = \Asuspo[I] \cup \Asuspt[I]$, where
\begin{align*}
  \Asuspo[I] &\coloneqq \left\{\, (y, x) \in \Amino[I] \bigm| x \not\in S_I \cup T_I,~(y, t) \in \Amino[I]~(\forall t \in T_I \setminus S_I) \,\right\},\\
  \Asuspt[I] &\coloneqq \left\{\, (x, y) \in \Amint[I] \bigm| x \not\in S_I \cup T_I,~(s, y) \in \Amint[I]~(\forall s \in S_I \setminus T_I) \,\right\}.
\end{align*}

If $\Asusp[I] = \emptyset$, then we conclude that $\Dmin[I]$ is indeed the true graph $D[I]$, and we have to do nothing more.
Otherwise, as the second step, we further refine $\Dmin[I]$ by fixing a possible true-fake configuration of the suspicious arcs.
To make a reasonable refinement, we check small local exchanges.
Specifically, we observe the values $\rmin((I \cup X) \setminus Y)$ for all pairs $(X, Y)$ of $X \subseteq E \setminus (I \cup S_I \cup T_I)$ and $Y \subseteq I$ with $1 \leq |X| \leq 2$ and $1 \leq |Y| \leq 2$.
We define the consistency with those observations as follows; roughly speaking, a local exchange decreases the minimum rank value not so much (by less than the number of removed elements) if and only if the local exchange contains an exchangeable pair in each matroid.

\begin{definition}\label{def:LE-pair}
A pair $(X, Y)$ of $X \subseteq E \setminus (I \cup S_I \cup T_I)$ and $Y \subseteq I$ with $1 \leq |X| \leq 2$ and $1 \leq |Y| \leq 2$ is called a \textbf{local exchange pair} or an \textbf{LE-pair} for short.
An LE-pair $(X', Y')$ is called a \textbf{subpair} of an LE-pair $(X, Y)$ if $X' \subseteq X$ and $Y' \subseteq Y$, and a \textbf{proper subpair} if at least one inclusion is strict.
\end{definition}

\begin{definition}\label{def:consistent}
We say that a directed bipartite graph $\tD[I] = (E \setminus I, I; \tA[I])$ is \textbf{consistent} with respect to an LE-pair $(X, Y)$ if the following holds:
\begin{itemize}
    \item If $\rmin((I \cup X) \setminus Y) \ge |I| - |Y| + 1$, then $\tA[I] \cap (Y \times X) \neq \emptyset$ and $\tA[I] \cap (X \times Y) \neq \emptyset$.
    \item Otherwise (if $\rmin((I \cup X) \setminus Y) = |I| - |Y|$), $\tA[I] \cap (Y \times X) = \emptyset$ or $\tA[I] \cap (X \times Y) = \emptyset$.
\end{itemize}

We also say that $\tD[I]$ is \textbf{overestimated} or \textbf{underestimated} with respect to an LE-pair $(X, Y)$ if $\tD[I]$ can be made consistent with respect to $(X, Y)$ by removing or adding (possibly zero) arcs, respectively.
In particular, $\tD[I]$ is consistent with respect to $(X, Y)$ if and only if $\tD[I]$ is both overestimated and underestimated with respect to $(X, Y)$.
\end{definition}

We provide some examples of LE-pairs and estimation using those pairs.

\begin{example}\label{ex:LE-pair1}
Consider an LE-pair $(X, Y)$ with $X = \{x_1, x_2\}\subseteq E\setminus (I\cup S_I\cup T_I)$ and $Y = \{y\}\subseteq I$. Suppose that we have $\rmin(I+x_1+x_2-y)=|I|$, $\rmin(I+x_1-y)=|I|-1$, and $\rmin(I+x_2-y)=|I|-1$. Then, a graph $\tD[I]$ is consistent with respect to all the subpairs of $(X, Y)$ if and only if the restriction of $\tD[I]$ to $X\cup Y$ has an arc set $\{(x_1, y), (y,x_2)\}$ or $\{(x_2,y), (y,x_1)\}$. See (A) and (B) in Figure~\ref{fig:LE-pair1}. (C) and (D) in Figure~\ref{fig:LE-pair1} are examples of overestimation and underestimation, respectively.
\end{example}

\begin{example}\label{ex:LE-pair2}
Consider an LE-pair $(X, Y)$ with $X = \{x_1, x_2\}\subseteq E\setminus (I\cup S_I\cup T_I)$ and $Y = \{y_1, y_2\}\subseteq I$. Suppose that we have $\rmin(I+x_1+x_2-y_1-y_2)=|I|-1$, $\rmin(I+x_1+x_2-y_i)=|I|-1~(i\in \{1,2\})$, $\rmin(I+x_i-y_1-y_2)=|I|-2~(i\in \{1,2\})$, and $\rmin(I+x_i-y_j)=|I|-1~(i,j\in \{1,2\})$. Then, a graph $\tD[I]$ is consistent with respect to all the subpairs of $(X, Y)$ if and only if the restriction of $\tD[I]$ to $X\cup Y$ has an arc set either $\{(x_1, y_1), (y_2,x_2)\}$, $\{(x_2, y_2), (y_1, x_1)\}$, $\{(x_1, y_2), (y_1, x_2)\}$, or $\{(x_2, y_1), (y_2, x_1)\}$.
Figure~\ref{fig:LE-pair2} shows these four possible arc sets. We will call this kind of LE-pairs {\bf evil} (Definition~\ref{def:evil_pair}) as they will be the main obstacle in our approach described in Section~\ref{sec:2-SAT}.
\end{example}

\begin{figure}[t]
\begin{center}
\includegraphics[width=\textwidth]{./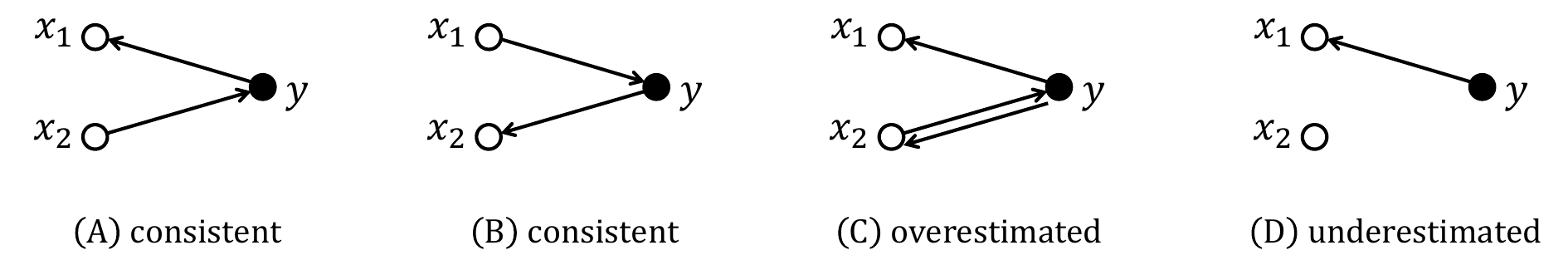}
\caption{An illustration for Example \ref{ex:LE-pair1}.}
\label{fig:LE-pair1}
\end{center}
\end{figure}

\begin{figure}[t]
\begin{center}
\includegraphics[width=\textwidth]{./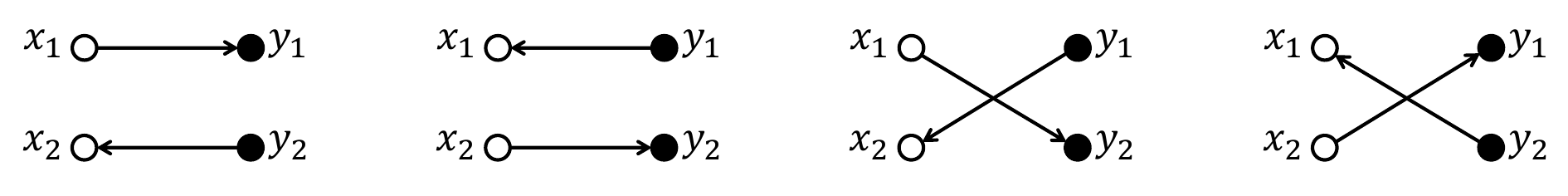}
\caption{An illustration for Example \ref{ex:LE-pair2}. All the four situations are consistent.}
\label{fig:LE-pair2}
\end{center}
\end{figure}

\begin{definition}\label{def:consistent_graph}
A directed bipartite graph $\tD[I] = (E \setminus I, I; \tA[I])$ is called a \textbf{consistent exchangeability graph} (or, simply, \textbf{consistent}) if
\begin{itemize}
    \item $\Asure[I] \subseteq \tA[I] \subseteq \Amin[I]$, and
    \item $\tD[I]$ is consistent with respect to all LE-pairs $(X, Y)$.
\end{itemize}

We also say that $\tD[I]$ is an \textbf{over}/\textbf{underestimation} (or \textbf{over}/\textbf{underestimated}) if the above two conditions with replacing ``consistent'' respectively by ``over/underestimated'' hold. 
\end{definition}

Note that a subgraph of a consistent exchangeability graph including all the sure arcs is an underestimation, but the converse (an underestimation is such a subgraph of some consistent exchangeability graph) is not necessarily true.
The condition of underestimation just requires that it is underestimated with respect to each LE-pair.

It is easy to observe from the definition that the true graph $D[I]$ is always consistent.
The following lemma guarantees that any consistent exchangeability graph $\tD[I]$ is suitable for our purpose of finding a shortest cheapest $S_I$--$T_I$ path in $D[I]$.
Specifically, the property (2) implies that $D[I]$ and $\tD[I]$ have the same shortest cheapest $S_I$--$T_I$ paths in terms of the vertex sets.
We define $\cImink \coloneqq \{\, I \subseteq E \mid \rmin(I) = |I| = k \,\} = \cI_1^k \cap \cI_2^k$ for each $k = 0, 1, \dots, n$.

\begin{lemma}\label{lem:correctness}
Let $\tD[I]$ be a consistent exchangeability graph, and $c$ be the cost function defined as \eqref{eq:cost}.
If $I$ is $w$-maximal in $\mathcal{I}_\mathrm{min}^{|I|}$, then the following properties hold.
\begin{enumerate}[\rm (1)]
\item $\tD[I]$ has no negative-cost cycle.
\item For any shortest cheapest $S_I$--$T_I$ path $\tP$ in $\tD[I]$, there exists an $S_I$--$T_I$ path $P$ in $D[I]$ with $P = \tP$ as the vertex sets.
  The same is true if we interchange $\tD[I]$ and $D[I]$.
\end{enumerate}
\end{lemma}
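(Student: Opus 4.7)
By Equation~\eqref{eq:cost}, the cost $c(P)$ of any path or cycle $P$ is a sum over its vertex set and therefore depends only on $V(P)$. This reduces both parts of the lemma to showing that certain extremal structures in $\tD[I]$---a minimum-length negative cycle for~(1) and a shortest cheapest $S_I$--$T_I$ path for~(2)---have vertex sets that are also realized in $D[I]$. The plan is to prove both via a common sub-lemma, established by induction on $|V|$: if $\tilde{C}$ is a cycle in $\tD[I]$ with vertex set $V$ that is \emph{chord-free} in $\tD[I]$ (i.e., $\tD[I]$ has no arcs within $V$ other than those of $\tilde{C}$), then $I \triangle V \in \cI_1 \cap \cI_2$. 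The base case $|V| = 2$ is immediate from $(1,1)$-LE-pair consistency; in the inductive step, I would use $(1,2)$-, $(2,1)$-, and $(2,2)$-LE-pair consistency (cf.~Example~\ref{ex:LE-pair2}) together with chord-freeness to derive rank equalities on a well-chosen collection of subsets and then aggregate them via submodularity of $r_1$ and $r_2$ to conclude $r_1(I \triangle V) = r_2(I \triangle V) = |I|$.

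Given the sub-lemma, part~(1) follows by contradiction. Suppose $\tD[I]$ has a negative cycle and let $\tilde{C}$ be one of minimum length; minimality forces $\tilde{C}$ to be chord-free, so the sub-lemma yields $I \triangle V(\tilde{C}) \in \cI_1 \cap \cI_2$, and Lemma~\ref{lem:UPM-inv} then provides perfect matchings $M_i \subseteq A_i[I]$ on $V(\tilde{C})$ for $i = 1, 2$. Viewing $M_1 \cup M_2$ as a directed subgraph of $D[I]$, every vertex has in- and out-degree $1$, so it decomposes into vertex-disjoint directed cycles of $D[I]$ whose total cost equals $c(V(\tilde{C})) = c(\tilde{C}) < 0$. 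Hence some cycle in $D[I]$ is negative, contradicting Lemma~\ref{lem:negative_cycle}, which applies because $I$ is $w$-maximal.

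For part~(2), let $\tP$ be a shortest cheapest $S_I$--$T_I$ path in $\tD[I]$. Any chord in $\tD[I]$ across $\tP$ would split $\tP$ into a strictly shorter path together with a cycle; the cycle is non-negative by~(1), so the shorter path has cost at most $c(\tP)$, contradicting the choice of $\tP$. Hence $\tP$ is chord-free, and a path analog of the sub-lemma (proved by essentially the same induction) gives $I \triangle V(\tP) \in \cI_1^{|I|+1} \cap \cI_2^{|I|+1}$. Lemma~\ref{lem:UPM-inv} applied to $I + s_0 \in \cI_1$ and to $I + t_0 \in \cI_2$, where $s_0, t_0$ denote the endpoints of $\tP$, then decomposes $V(\tP)$ inside $D[I]$ into an $S_I$--$T_I$ path $P$ on the same vertex set together with disjoint cycles that are non-negative by~(1). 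The symmetric argument (starting from $D[I]$ and constructing the analog in $\tD[I]$) yields the second sentence of~(2).

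The main obstacle is the sub-lemma: lifting LE-pair consistency---which only controls exchanges of size at most $(2,2)$---to the global conclusion on the potentially large set $V$. The chord-free hypothesis plays a delicate role here; without it, consistency alone is too weak, as Example~\ref{ex:LE-pair2} illustrates that genuinely different exchange structures can satisfy the same LE-pair constraints. The induction therefore has to combine chord-freeness, Lemma~\ref{lem:UPM}, and a submodular aggregation of the local rank equalities in a carefully coordinated way to control both matroids simultaneously.
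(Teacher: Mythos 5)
Your plan differs substantially from the paper's, and it has several genuine gaps. The paper proves two auxiliary lemmas (Lemma~\ref{lem:cycle_partition} and Lemma~\ref{lem:path_cycle_partition}): for any cycle (or $S_I$--$T_I$ path) in an underestimated graph, Hall's theorem plus LE-pair consistency of the target graph directly yields perfect matchings on its vertex set in the target graph's $\tA_1[I]$ and $\tA_2[I]$, whose union decomposes the vertex set into cycles (resp.\ a path plus cycles) of the target graph. There is no detour through actual matroid independence of $I\triangle V$, no chord-freeness hypothesis, and no induction on $|V|$. Both (1) and (2) then follow by applying these lemmas with the pair $(\tD[I],D[I])$ in both directions.

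The most serious gap in your proposal is the unproven sub-lemma asserting $I\triangle V\in\cI_1\cap\cI_2$ for a chord-free cycle. You acknowledge it is the hard part, but it is also a much \emph{stronger} conclusion than the proof needs (the paper only needs matching existence, not actual independence), and it is not at all clear that LE-pair consistency, which only gives $\rmin((I\cup X)\setminus Y)\ge |I|-|Y|+1$ for $|X|=|Y|=2$ (not $=|I|$), suffices. You would have to combine this with a delicate submodularity argument, and nothing in the paper suggests such a statement is available; indeed the authors chose not to go this route precisely because it is unnecessary.

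A second gap is your justification of chord-freeness. A forward chord $(e_i,e_j)$ in a path $\tP$ yields a strictly shorter path, but the removed middle segment $\{e_{i+1},\dots,e_{j-1}\}$ is not a cycle unless there happens to be a closing arc; a backward chord yields a cycle plus two disconnected path fragments, not ``a shorter path together with a cycle.'' The same concern applies to the negative cycle in (1): a chord in a digraph cycle produces only one shorter cycle, whose cost may well be nonnegative, so a minimum-length negative cycle is not obviously chord-free. You cannot simply invoke ``minimality forces chord-freeness''.

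Third, your argument for (2) is missing the double decomposition. Decomposing $\tP$ into a path $P$ and cycles of $D[I]$ with nonnegative cost only gives $c(P)\le c(\tP)$ and $|P|\le|\tP|$; since $D[I]$ is not a subgraph of $\tD[I]$, this does not contradict the shortest-cheapest choice of $\tP$ in $\tD[I]$. The paper closes the gap by applying Lemma~\ref{lem:path_cycle_partition} a second time with the roles of $\tD[I]$ and $D[I]$ swapped (both are consistent), yielding a three-term chain of inequalities in length and cost that must all be tight, which then forces $P=\tP$. This symmetric second step is essential and your proof omits it.
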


Before proving this lemma, we prepare two specific lemmas. %

\begin{lemma}\label{lem:cycle_partition}
Let $\tD[I]$ and $\tD'[I]$ be a consistent and underestimated exchangeability graph, respectively.
Then, any cycle $\tQ'$ in $\tD'[I]$ can be partitioned into disjoint cycles in $\tD[I]$.
\end{lemma}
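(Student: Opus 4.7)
The plan is to identify $V(\tQ')$ as a common independent basis exchange with $I$, and then apply Brualdi's observation (Lemma~\ref{lem:UPM-inv}) to realize the partition via matchings in the true graph $D[I]$. Write $\tQ' = y_1 x_1 y_2 x_2 \cdots y_k x_k$ with $(y_i, x_i) \in \tA'_1[I]$ and $(x_i, y_{i+1}) \in \tA'_2[I]$ for every $i$ (indices mod $k$), and set $X \coloneqq \{x_1, \dots, x_k\}$, $Y \coloneqq \{y_1, \dots, y_k\}$, $Z \coloneqq X \cup Y$, so that $|I \triangle Z| = |I|$.

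The heart of the proof is the key claim that $\rmin(I \triangle Z) = |I|$, equivalently $(I \setminus Y) \cup X \in \cI_1 \cap \cI_2$. I would prove this by induction on $k$. For the base $k = 1$: if $x_1 \in S_I \cup T_I$, then by Lemma~\ref{lem:intersection}(a) both arcs of $\tQ'$ are sure and hence in $A[I]$, which directly gives $I + x_1 - y_1 \in \cI_1 \cap \cI_2$; otherwise $(\{x_1\}, \{y_1\})$ is an LE-pair, and the presence of both $(y_1, x_1) \in \tA'_1[I]$ and $(x_1, y_1) \in \tA'_2[I]$ excludes the ``low'' case of underestimation, yielding $\rmin(I + x_1 - y_1) = |I|$. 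For the inductive step I would look for a chord of $\tQ'$ --- an arc of $\tA'[I]$ or of $A[I]$ (the latter identified via Lemma~\ref{lem:intersection}) between non-consecutive vertices of $\tQ'$ --- that splits $\tQ'$ into two shorter cycles in $\tD'[I]$, to which induction applies separately.

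Once the key claim is established, Lemma~\ref{lem:UPM-inv} supplies, for each $i \in \{1, 2\}$, a perfect matching $M_i \subseteq A_i[I]$ on $Z$, with $M_1$ directed from $Y$ to $X$ and $M_2$ directed from $X$ to $Y$. The union $M_1 \cup M_2$ is a $2$-regular bipartite digraph on $Z$ that decomposes canonically into vertex-disjoint directed cycles; each such cycle alternates between $A_1[I]$ and $A_2[I]$ arcs, hence is a directed cycle in $D[I]$. Together these cycles partition $Z = V(\tQ')$, delivering exactly the partition required by the lemma.

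The main obstacle is the inductive step of the key claim: LE-pair consistency only constrains exchanges of at most two elements per side, so converting this local information into the global statement $\rmin(I \triangle Z) = |I|$ is delicate. My primary strategy is explicit chord-finding using the structure of $\Amin[I]$ (Lemma~\ref{lem:intersection}) together with the size-$(1,2)$, $(2,1)$, and $(2,2)$ LE-pairs taken at consecutive positions along $\tQ'$; as a fallback, one can try to estimate $r_i(I \triangle Z)$ for each $i$ separately by submodularity, chaining the size-$2$ LE-pair inequalities along $\tQ'$ and then taking the minimum.
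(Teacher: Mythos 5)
Your overall architecture---first prove $I \triangle \tQ' \in \cI_1 \cap \cI_2$, then invoke Lemma~\ref{lem:UPM-inv}---cannot be repaired, because the key claim is false. Consistency only constrains exchanges of at most two elements on each side, and already for a $4$-cycle in the \emph{true} graph $D[I]$ (which is consistent, hence underestimated) the set $I \triangle \tQ'$ need not be a common independent set; this is exactly why the Unique Perfect Matching Lemma (Lemma~\ref{lem:UPM}) requires uniqueness. Concretely, let $\bM_1 = \bM_2$ contain a rank-$2$ restriction represented by $y_1 = (1,0)$, $y_2 = (0,1)$, $x_1 = x_2 = (1,1)$, with $y_1, y_2 \in I$ and $x_1, x_2 \in E \setminus (I \cup S_I \cup T_I)$ (pad the instance with a direct summand so that $S_I$ and $T_I$ are nonempty). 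Then $C_i(I, x_j) = \{y_1, y_2\}$ for all $i,j$, so $\tQ' = y_1 x_1 y_2 x_2$ is a cycle in $D[I]$, yet $I \triangle \tQ'$ has rank $|I|-1$ in both matroids because $x_1$ and $x_2$ are parallel. All LE-pair values here sit in the ``high'' case ($\rmin(I + x_1 + x_2 - y_1 - y_2) = |I| - 1 = |I| - |Y| + 1$), so no chord-finding or submodularity chaining can ever certify $\rmin(I \triangle \tQ') = |I|$; your base case $k=1$ is fine, but the claim already fails at $k=2$.

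The paper's proof never asserts independence of $I \triangle \tQ'$ and does not go through Lemma~\ref{lem:UPM-inv}. It establishes the two perfect matchings directly via Hall's theorem: if $\tA_2[I]$ had no perfect matching on $\tQ'$, a Hall violator $Y$ would yield two cycle arcs $(z_1, y_1)$ and $(y_2, z_2)$ with $z_1, z_2$ outside the $\tA_2[I]$-neighborhood of $Y$; underestimation of $\tD'[I]$ with respect to the LE-pair $(\{z_1, z_2\}, \{y_1, y_2\})$ forces $\rmin((I \cup \{z_1,z_2\}) \setminus \{y_1,y_2\}) \geq |I| - 1$, and then consistency of $\tD[I]$ forces some arc $(z_i, y_j) \in \tA_2[I]$, a contradiction. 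The union of the two matchings is $2$-regular and decomposes into alternating cycles---so your final paragraph is the right endgame; the error lies solely in how you try to obtain the matchings.
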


\begin{proof}
Fix a cycle $\tQ'$ in $\tD'[I]$.
It suffices to show that there exist perfect matchings on $\tQ'$ both in $\tA_1[I] = \tA[I] \cap (I \times (E \setminus I))$ and in $\tA_2[I] = \tA[I] \cap ((E \setminus I) \times I)$, whose union partitions $\tQ'$ into cycles in $\tD[I] = (E \setminus I, I; \tA[I])$.
By symmetry, we focus on $\tA_2[I]$.

Suppose to the contrary that $\tA_2[I]$ has no perfect matching on $\tQ'$.
Then, by Hall's theorem~\cite{hall1935representatives} (see, e.g.,~\cite[Theorem~22.1]{schrijver2003combinatorial}), there exists a subset $Y \subseteq \tQ' \cap I$ such that $|X| < |Y|$, where $X \coloneqq \Gamma_{\tA_2[I]}(Y) \cap (\tQ' \setminus I)$ and $\Gamma_{\tA_2[I]}(Y) \coloneqq \{\, x \mid (x, y) \in \tA_2[I],~y \in Y \,\}$.
Since $\tQ'$ is a cycle in $\tD'[I]$, it contains two arcs $(z_1, y_1), (y_2, z_2)$ such that $z_1, z_2 \in \tQ' \setminus (I \cup X)$ and $y_1, y_2 \in Y$ (possibly, $z_1 = z_2$ or $y_1 = y_2)$.
Let $Z' = \{z_1, z_2\}$ and $Y' = \{y_1, y_2\}$.
Since $\tD'[I]$ is underestimated with respect to $(Z', Y')$, we have $\rmin((I \cup Z') \setminus Y') \geq |I| - |Y'| + 1$.
Then, the consistency of $\tD[I]$ with respect to $(Z', Y')$ implies that $(z_i, y_j) \in \tA_2[I]$ for some $i, j \in \{1, 2\}$, which contradicts the definition of $X$.
\end{proof}

\begin{lemma}\label{lem:path_cycle_partition}
Let $\tD[I]$ and $\tD'[I]$ be a consistent and underestimated exchangeability graph, respectively.
Then, any $S_I$--$T_I$ path $\tP'$ in $\tD'[I]$ can be partitioned into a disjoint family of an $S_I$--$T_I$ path and cycles in $\tD[I]$.
\end{lemma}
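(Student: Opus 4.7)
The plan is to adapt the Hall's theorem argument from Lemma~\ref{lem:cycle_partition} to the path setting by handling the two endpoints with care. The goal reduces to exhibiting two perfect matchings in $\tD[I]$: a matching $M_2 \subseteq \tA_2[I]$ between $\tP' \cap I$ and $\tP' \cap (E \setminus I) \setminus \{t\}$, and a matching $M_1 \subseteq \tA_1[I]$ between $\tP' \cap I$ and $\tP' \cap (E \setminus I) \setminus \{s\}$, where $s$ and $t$ denote the endpoints of $\tP'$. Their union forms a digraph on the vertex set of $\tP'$ in which $s$ has out-degree $1$ and in-degree $0$, $t$ has in-degree $1$ and out-degree $0$, and every other vertex has both in-degree and out-degree $1$; this directly decomposes into the desired $S_I$--$T_I$ path through $s, t$ together with vertex-disjoint cycles, all in $\tD[I]$.

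By symmetry between the two matroids, it suffices to prove that $M_2$ exists via Hall's theorem. Suppose for contradiction that some $Y \subseteq \tP' \cap I$ satisfies $|X| < |Y|$, where $X \coloneqq \Gamma_{\tA_2[I]}(Y) \cap (\tP' \cap (E \setminus I) \setminus \{t\})$. The path $\tP'$ provides $|Y|$ arcs of $\tA_2'[I]$ into $Y$ with distinct tails in $\tP' \cap (E \setminus I) \setminus \{t\}$ and $|Y|$ arcs of $\tA_1'[I]$ out of $Y$ with distinct heads in $\tP' \cap (E \setminus I) \setminus \{s\}$. By pigeonhole, we obtain a tail $z_1$ outside $X$ and, in every case except one corner case, also a head $z_2 \in \tP' \cap (E \setminus I) \setminus \{s, t\}$ outside $X$. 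The cases $z_1 \in S_I \cup T_I$ or $z_2 \in S_I$ are ruled out using the sureness of arcs incident to $S_I \cup T_I$ guaranteed by Lemma~\ref{lem:intersection}(a): if for instance $z_1 = s$, then the path arc $(s, y_1) \in \tA_2'[I]$ is sure and hence in $\tA_2[I]$, so $s \in X$, contradicting $z_1 \notin X$. With these cases dispatched, $Z' \coloneqq \{z_1, z_2\} \subseteq E \setminus (I \cup S_I \cup T_I)$ and $Y' \coloneqq \{y_1, y_2\} \subseteq Y$ form a valid LE-pair, and the remainder of the argument proceeds exactly as in Lemma~\ref{lem:cycle_partition}: the underestimation of $\tD'[I]$ upgrades $\rmin((I \cup Z') \setminus Y')$ to at least $|I| - |Y'| + 1$, after which the consistency of $\tD[I]$ forces some $(z_i, y_j) \in \tA_2[I]$ with $z_i \notin X$, the required contradiction.

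The main obstacle is the single corner case in which the only head of an $\tA_1'[I]$-arc out of $Y$ lying outside $X$ is $t$ itself. This forces $|Y| = |X| + 1$, $y_k \in Y$, and every non-$t$ head to belong to $X$, which pins down $X = \{\, x_j \mid y_j \in Y,~j < k \,\}$; if in addition $y_1 \in Y$, the sureness of $(s, y_1)$ would place $s \in X$, violating $|X| = |Y|-1$, so $y_1 \notin Y$. I plan to dispose of this corner case by a telescoping argument along the path: look at the path arc $(x_{k-1}, y_k) \in \tA_2'[I]$; if it belongs to $\Asuspt[I]$, then by definition $(s, y_k) \in \Amint[I]$, which is sure and so in $\tA_2[I]$, giving $s \in X$ and a contradiction; hence $(x_{k-1}, y_k)$ must be sure, forcing $x_{k-1} \in X$ and thus $y_{k-1} \in Y$. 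Iterating this same dichotomy on $(x_{k-2}, y_{k-1}), (x_{k-3}, y_{k-2}), \ldots, (x_1, y_2)$ successively pushes $y_{k-1}, y_{k-2}, \ldots, y_1$ all into $Y$, contradicting $y_1 \notin Y$. This ``sure-or-contradiction'' telescoping, which rules the corner case out altogether, is the step of the proof I expect to be the most delicate.
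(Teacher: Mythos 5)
Your proof is correct and follows the same overall strategy as the paper's: reduce to two perfect matchings via Hall's theorem, extract path arcs $(z_1,y_1)\in\tA_2'[I]$ and $(y_2,z_2)\in\tA_1'[I]$ with $z_1,z_2\notin X$, and contradict the definition of $X$ by combining the underestimation of $\tD'[I]$ with the consistency of $\tD[I]$ on the LE-pair $(\{z_1,z_2\},\{y_1,y_2\})$. Where you diverge is the treatment of the sink $t$. The paper notes that the arc $(z_1,y_1)$ is necessarily suspicious (it lies in $\tA_2'[I]\subseteq\Amint[I]$ but, by $z_1\notin X$, not in $\tA_2[I]\supseteq\Asuret[I]$), so by the definition of $\Asuspt[I]$ the sure arc $(s,y_1)$ belongs to $\tA_2[I]$ and hence $s\in X$; since the path never enters $s$, this immediately yields at least two heads outside $X$, one of which differs from $t$. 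That one observation makes your corner case vacuous: there you correctly derive $s\notin X$, which already contradicts $s\in X$, so the telescoping along the path---though valid, as it applies the same suspicious-implies-$(s,y)$-sure dichotomy arc by arc---is an unnecessary detour. One caveat you share with the paper: your claim that ``$z_2\in S_I$'' is excluded by sureness does not quite go through, since sureness of $(y_2,z_2)$ concerns $\tA_1[I]$ while $X$ is defined via $\tA_2[I]$; the sureness argument cleanly excludes $z_1\in S_I\cup T_I$ and $z_2\in T_I$ only, and the possibility of an internal path vertex in $S_I$ (which would invalidate the LE-pair) is left unaddressed in both proofs.
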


\begin{proof}
Fix an $S_I$--$T_I$ path $\tP'$ in $\tD'[I]$, and let $s \in S_I$ and $t \in T_I$ denote its end vertices.
It suffices to show that there exist perfect matchings on $\tP' - s$ in $\tA_1[I]$ and on $\tP' - t$ in $\tA_2[I]$, whose union forms a desired family.
By symmetry, we focus on $\tA_2[I]$.

Suppose to the contrary that $\tA_2[I]$ has no perfect matching on $\tP' - t$.
Then, by Hall's theorem, again, there exists a subset $Y \subseteq \tP' \cap I$ such that $|X| < |Y|$, where $X \coloneqq \Gamma_{\tA_2[I]}(Y) \cap (\tP' \setminus (I + t))$.
Since $\tP'$ is an $s$--$t$ path in $\tD'[I]$, it contains an arc $a = (z_1, y_1)$ such that $z_1 \in \tP' \setminus (I \cup X + t)$ and $y_1 \in Y$.
By the definition of $X$, we have $a \not\in \tA_2[I]$, and hence $a \in \Asusp[I]$ (as $\tD'[I]$ is an underestimation, $\Asuret[I]\subseteq \tA_2[I]\subseteq \Asuret[I]\cup \Asuspt[I]$).
We then have $z_1 \neq s \in X$ by Lemma~\ref{lem:intersection}.
Since the $s$--$t$ path $\tP'$ does not use any arc entering $s \in X$ and we have $|X| < |Y|$, it uses at least two arcs from $Y$ to $\tP' \setminus (I \cup X)$.
Hence, $\tP'$ contains an arc $(y_2, z_2)$ such that $z_2 \in \tP' \setminus (I \cup X + t)$ and $y_2 \in Y$ (possibly, $z_1 = z_2$ or $y_1 = y_2)$.
Let $Z' = \{z_1, z_2\}$ and $Y' = \{y_1, y_2\}$.
Since $\tD'[I]$ is underestimated with respect to $(Z', Y')$, we have $\rmin((I \cup Z') \setminus Y') \geq |I| - |Y'| + 1$.
Then, the consistency of $\tD[I]$ implies that $(z_i, y_j) \in \tA_2[I]$ for some $i, j \in \{1, 2\}$, which contradicts the definition of $X$.
\end{proof}

Now we are ready to prove Lemma~\ref{lem:correctness}.

\begin{proof}[Proof of Lemma~\ref{lem:correctness}]
We obtain the property (1) as follows.
By Lemma~\ref{lem:cycle_partition}, any cycle $\tQ$ in $\tD[I]$ can be partitioned into disjoint cycles in $D[I]$, each of whose cost is nonnegative by Lemma~\ref{lem:negative_cycle}.

Similarly, we obtain the property (2) as follows.
Fix a shortest cheapest $S_I$--$T_I$ path $\tP$ in $\tD[I]$.
Then, by Lemma~\ref{lem:path_cycle_partition}, $\tP$ can be partitioned into a disjoint family of an $S_I$--$T_I$ path $P$ and cycles in $D[I]$.
By Lemma~\ref{lem:path_cycle_partition} (with interchanging the role of $\tD[I]$ and $\tD'[I]$), $P$ can be partitioned into a disjoint family of an $S_I$--$T_I$ path $\tilde{P}'$ and cycles in $\tD[I]$.
We then have $|\tP| \geq |P| \geq |\tilde{P}'|$.
Moreover, by Lemma~\ref{lem:negative_cycle} and the property (1), we have $c(\tP) \geq c(P) \geq c(\tilde{P}')$.
As $\tP$ is a shortest cheapest $S_I$--$T_I$ path in $\tD[I]$, all the inequalities hold with equality, which implies $P = \tP$.
The interchanged version of the property (2) is shown in the same manner, which completes the proof.
\end{proof}

Suppose that $I$ is $w$-maximal in $\mathcal{I}_\mathrm{min}^{|I|}$.
By Lemma~\ref{lem:correctness}, assuming that a consistent exchangeability graph can be found (in Line~\ref{line:step4}), Algorithm~\ref{alg:cheapest_path_augment_min_rank} correctly emulates the usual augmentation step (Algorithm~\ref{alg:2}) up to the symmetry of $\bM_1$ and $\bM_2$.
Note that when $\emptyset \neq S_I \subseteq T_I$ or $\emptyset \neq T_I \subseteq S_I$, some single element in $S_I \cap T_I$ forms a shortest cheapest $S_I$--$T_I$ path (as any even-length prefix and suffix of any $S_I$--$T_I$ path have nonnegative cost by Lemma~\ref{lem:negative_cycle}).

\begin{algorithm2e}[ht!]
\caption{{{\sc CheapestPathAugmentMinRank}$[E, \rmin, I]$}} \label{alg:cheapest_path_augment_min_rank}
\SetAlgoLined

\SetKwInOut{Input}{Input}\SetKwInOut{Output}{Output}
\Input{A finite set $E$, a weight function $w \colon E \to \RR$, oracle access to the minimum rank function $\rmin$, and a $w$-maximal set $I \in \cImink$ for some $k = 0, 1, \dots, n - 1$.}
\Output{A $w$-maximal set $J \in \cIminkk$ if one exists, or a subset $Z \subseteq E$ with $\rmin(Z) + \rmin(E \setminus Z) = k$.}
\BlankLine

If $\rmin(I + s + t) = |I|$ for every $s, t \in E \setminus I$, then just halt.%

If no pair $(s, t)$ satisfies $\rmin(I + s) = \rmin(I + t) = |I|$ and $\rmin(I + s + t) = |I| + 1$,
then return $J = I + x^*$ for any $x^* \in \arg\,\max \left\{\, w(x) \mid x \in E \setminus I,~\rmin(I + x) = |I| + 1 \,\right\}$.

Compute $S_I^*$ and $T_I^*$ by fixing a pair $(s^*, t^*)$ that satisfies $\rmin(I + s^*) = \rmin(I + t^*) = |I|$ and $\rmin(I + s^* + t^*) = |I| + 1$,
and create the graph $\Dmin[I]$ by taking the intersection of $\Dmin[I; s^*, t]$ and $\Dmin[I; s, t^*]$ for all $t \in T_I^* \setminus S_I^*$ and $s \in S_I^* \setminus T_I^*$, respectively.

Observe the values $\rmin((I \cup X) \setminus Y)$ for all LE-pairs $(X, Y$), and \textbf{find a consistent exchangeability graph} $\tD[I]$.\label{line:step4}

If some $s \in S_I^*$ can reach some $t \in T_I^*$, then find a shortest cheapest $S_I^*$--$T_I^*$ path $\tP$ in $\tD[I]$, and return $J = I \triangle \tP$.
Otherwise, return $Z = \{\, e \in E \mid e~\text{can reach}~T_I^*~\text{in}~\tD[I] \,\}$.
\end{algorithm2e}

\begin{lemma}\label{lem:correctness_consistent}
  Algorithm~\ref{alg:cheapest_path_augment_min_rank} returns a correct output.
\end{lemma}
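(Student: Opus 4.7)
The plan is to verify each output branch of Algorithm~\ref{alg:cheapest_path_augment_min_rank} separately, using Lemma~\ref{lem:correctness} as the main tool for Step~5.

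For Step~1, the condition $\rmin(I+s+t) = k$ for every $s, t \in E \setminus I$ forces $S_I = \emptyset$ or $T_I = \emptyset$: otherwise, picking $s \in S_I$ and $t \in T_I$ yields $r_1(I+s+t) \ge r_1(I+s) = k+1$ and $r_2(I+s+t) \ge r_2(I+t) = k+1$, contradicting the hypothesis. WLOG $T_I = \emptyset$ by symmetry, so $I$ is a basis of $\bM_2$ and $\rmin(E) = \min(r_1(E), k) = k$, validating $Z = E$. For Step~2, a similar contradiction for a pair $s \in S_I \setminus T_I$, $t \in T_I \setminus S_I$ (which would satisfy $\rmin(I+s+t) = k+1$ by the same closure computation) forces $S_I \subseteq T_I$ or $T_I \subseteq S_I$; WLOG $S_I \subseteq T_I$, so $S_I = S_I \cap T_I = \{x : \rmin(I+x) = k+1\}$ is exactly the set scanned by the algorithm. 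I would then apply Lemma~\ref{lem:even_prefix} to the shortest cheapest $S_I$--$T_I$ path $P^* = u_1 \cdots u_{2\ell+1}$ in $D[I]$: the even suffix $u_2 \cdots u_{2\ell+1}$ satisfies $I \triangle (u_2 \cdots u_{2\ell+1}) \in \cImink$, and $w$-maximality of $I$ yields $c(u_2 \cdots u_{2\ell+1}) \ge 0$, hence $c(P^*) \ge -w(u_1) = c(\{u_1\})$. Since $u_1 \in S_I \cap T_I$, the singleton $\{u_1\}$ is itself an $S_I$--$T_I$ path of length $1$, so the shortest-cheapest optimality of $P^*$ forces $\ell = 0$ and $u_1 = x^*$, showing that $J = I + x^*$ is $w$-maximal.

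For Step~5 when a path exists in $\tD[I]$, Lemma~\ref{lem:correctness}(2) directly produces some shortest cheapest $P$ in $D[I]$ with the same vertex set as $\tP$, so $I \triangle \tP = I \triangle P$ is $w$-maximal by the correctness of Algorithm~\ref{alg:2}.

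The main obstacle is Step~5 when no $S_I^*$--$T_I^*$ path exists in $\tD[I]$. The interchanged version of Lemma~\ref{lem:correctness}(2) implies that $D[I]$ admits no $S_I$--$T_I$ path either, so $I$ is of maximum cardinality in $\cI_1 \cap \cI_2$ and Lemma~\ref{lem:duality_rmin} gives $\rmin(Z) + \rmin(E \setminus Z) \ge k$. To match this lower bound for the specific $Z$ computed in $\tD[I]$, note that Step~1 not triggering yields $S_I, T_I \ne \emptyset$ and the absence of a length-$1$ path gives $S_I \cap T_I = \emptyset$, so both $S_I \setminus T_I$ and $T_I \setminus S_I$ are nonempty. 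I would then show $r_1(Z) \le |I \cap Z|$ by verifying $C_1(I, x) \subseteq I \cap Z$ for every $x \in Z \setminus I$: for $y \in C_1(I, x)$, either $(y, x) \in \tA_1[I]$, in which case $y \in Z$ by the backward reachability closure of $Z$, or $(y, x) \in A_1[I] \setminus \tA_1[I]$ is a true but suspicious arc, whence Lemma~\ref{lem:intersection}(b) combined with the definition of sure arcs forces $(y, t) \in \Asureo[I] \subseteq \tA_1[I]$ for every $t \in T_I \setminus S_I$, yielding $y \in Z$ via a direct arc into $T_I$. A symmetric argument using the nonempty $S_I \setminus T_I$ establishes $C_2(I, x) \subseteq I \setminus Z$ for every $x \in E \setminus (I \cup Z)$, so $r_2(E \setminus Z) \le |I \setminus Z|$. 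Summing the two bounds gives $\rmin(Z) + \rmin(E \setminus Z) \le k$, completing the proof.
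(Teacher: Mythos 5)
Your proof is correct and follows the route the paper intends: the heart of the matter (Step~5 when a path exists) is handled exactly as in the paper, by invoking Lemma~\ref{lem:correctness} to transfer a shortest cheapest path between $\tD[I]$ and $D[I]$ and then appealing to the correctness of Algorithm~\ref{alg:2}. Your explicit verification of the remaining branches --- in particular that the reachability set $Z$ computed in $\tD[I]$ still certifies $\rmin(Z)+\rmin(E\setminus Z)=k$, because any true-but-omitted suspicious arc $(y,x)\in A_1[I]\setminus\tA_1[I]$ is accompanied by a sure arc $(y,t)\in\Asure[I]\subseteq\tA[I]$ into the nonempty set $T_I\setminus S_I$ (and symmetrically for $A_2$) --- supplies details that the paper leaves implicit, and is sound.
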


The only remaining issue is how to find a consistent exchangeability graph efficiently. %
Unfortunately, this is NP-hard in general as follows (proved in Section~\ref{sec:NP-hard}), but we can find an ``almost consistent'' exchangeability graph, which is enough to solve several special cases by this approach (Section~\ref{sec:2-SAT}).

\begin{theorem}\label{thm:NP-hard}
The following problem is NP-hard even if the two matroids $\mathbf{M}_1$ and $\mathbf{M}_2$ are both $\mathbb{Q}$-representable:
Given a common independent set $I$, the source and sink sets $S_I$ and $T_I$, and the values $\rmin((I \cup X) \setminus Y)$ for all the LE-pairs $(X, Y)$, find a consistent exchangeability graph $\tD[I]$.
\end{theorem}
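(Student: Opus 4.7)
The plan is to reduce from $3$-SAT (or a variant such as NAE-$3$-SAT), exploiting the fact that evil LE-pairs are precisely the source of non-$2$-SAT structure. As observed in Case~2.2 of Lemma~\ref{lem:2-SAT} and illustrated in Example~\ref{ex:LE-pair2}, for an evil LE-pair $(X,Y)=(\{x_1,x_2\},\{y_1,y_2\})$ a consistent $\tA[I]$ restricted to $X\cup Y$ consists of exactly one pair $(x_i,y_j)\in\tA_2[I]$ and $(y_{3-j},x_{3-i})\in\tA_1[I]$ for some $(i,j)\in\{1,2\}^2$. This $4$-way choice is precisely the combinatorial object one can coerce into a Boolean variable, either by projecting to one coordinate or by symmetrizing the other; global consistency between several evil LE-pairs that share elements of $E$ will then enforce nontrivial constraints on these Boolean assignments.

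The reduction I would construct has the following gadget structure. For each Boolean variable $v$, introduce a variable gadget consisting of elements of $E$ and one designated evil LE-pair whose first coordinate encodes the truth value of $v$. To duplicate a variable across multiple clauses, I would chain additional evil LE-pairs sharing elements with the variable gadget, arranged so that consistency across the chain forces the shared coordinate to agree. For each clause $C=\ell_1\vee\ell_2\vee\ell_3$ I would attach a clause gadget that is jointly consistent exactly when at least one literal of $C$ is satisfied. Summing over all gadgets yields a single instance in which a consistent exchangeability graph $\tD[I]$ exists if and only if the starting formula is satisfiable. To realize the matroids over $\mathbb{Q}$, each of $\mathbf{M}_1$ and $\mathbf{M}_2$ would be defined as the linear matroid of an explicit block matrix over $\mathbb{Q}$, with one block per gadget and with generic entries chosen so that all rank values $\rmin((I\cup X)\setminus Y)$ match the prescribed evil / non-evil / independent pattern; since only finitely many rank conditions on $4\times4$ submatrices must be enforced, this is achievable by standard generic-position arguments.

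The main obstacle I anticipate is \emph{isolation versus interaction}: full isolation via a direct sum would remove cross-gadget LE-pairs entirely but also destroys the sharing needed to propagate a variable assignment between gadgets, while any controlled sharing risks creating unintended evil LE-pairs that couple gadgets in ways not corresponding to clauses of the input formula. Verifying that the only evil LE-pairs in the constructed instance are the ones placed on purpose (and that all other LE-pairs are automatically consistent for any admissible $\tA[I]$) is the delicate point and will drive the detailed design of the block matrices. A secondary, milder obstacle is that the $4$-way rather than $2$-way nature of an evil LE-pair requires the variable gadget to be designed so that the ``extra'' coordinate is either forced by the sure arcs or else is irrelevant to the encoded Boolean value; starting from NAE-$3$-SAT, whose clauses are symmetric in their literals, is convenient because it matches the symmetry inherent in the evil LE-pair and simplifies both the variable and clause gadget constructions.
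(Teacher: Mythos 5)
There is a genuine gap, and it is structural rather than a matter of missing gadget details. The problem you are reducing to is a \emph{total search problem on its valid inputs}: whenever the given data (the set $I$, the sets $S_I$, $T_I$, and the values $\rmin((I\cup X)\setminus Y)$) actually arise from two matroids, the true exchangeability graph $D[I]$ is itself consistent, so a consistent exchangeability graph always exists. Your plan culminates in ``a consistent exchangeability graph exists if and only if the formula is satisfiable,'' but for an unsatisfiable formula this would produce an instance with \emph{no} consistent graph, hence an instance that cannot be realized by any pair of matroids --- i.e., not a legitimate input to the problem at all. To get hardness you must construct instances that are always realizable (so that a solution always exists) and argue that \emph{finding} one is hard. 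The paper does exactly this by reducing from the search problem of $4$-coloring a graph that is promised to be $3$-colorable: the $3$-coloring guarantees realizability (and $\mathbb{Q}$-representability, via an explicit prime-entry matrix construction), while extracting a $4$-coloring from any consistent graph is what carries the hardness. You could in principle repair your approach by switching to the search version of SAT restricted to satisfiable formulas, but you would then still need to exhibit, for every satisfiable formula, an explicit pair of $\mathbb{Q}$-representable matroids realizing your prescribed $\rmin$ values --- a step your proposal defers entirely to ``generic-position arguments,'' which do not obviously cooperate with the very specific degeneracies (evil LE-pairs require exact rank deficiencies, not generic ranks) that your gadgets rely on.

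A second, smaller mismatch: you treat the $4$-way choice inside an evil LE-pair as an obstacle to be projected down to a Boolean value, whereas the paper turns it into the main asset --- the four consistent configurations of a vertex gadget \emph{are} the four colors, and edge gadgets (built from chains of non-evil LE-pairs sharing elements with two vertex gadgets) forbid equal colors on adjacent vertices. This choice of source problem also dissolves your ``isolation versus interaction'' worry in a concrete way: the paper explicitly assigns $\rmin$ values to \emph{every} remaining LE-pair so that each such pair contains an element pair exchangeable in both matroids and therefore imposes no further correlation between suspicious arcs. Your proposal correctly identifies this as the delicate point but leaves it unresolved; without it the reduction is not a proof.
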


\begin{remark}
Lemma~\ref{lem:cycle_partition} together with Lemma~\ref{lem:negative_cycle} implies that the weighted matroid intersection problem under the minimum rank oracle enjoys an $(\mathrm{NP} \cap \mathrm{coNP})$-like feature.
On the one hand, if the given instance has a common independent set $I$ with $|I| = k$ and $w(I) \ge \alpha$, then $I$ itself is a certificate.
On the other hand, if the given instance has no such common independent set, then a pair of a $w$-maximal set $I$ in $\cImink$ (with $w(I) < \alpha$) and a consistent exchangeability graph $\tD[I]$ having no negative cycle is a certificate.\footnote{We remark that the LP duality also gives such a certificate. It is well-known that the dual LP of a usual description of the weighted matroid intersection problem (using the two rank functions) has a sparse optimal solution consisting of two chains (cf.~\cite[Theorem~41.12]{schrijver2003combinatorial}), and this is true for the description using the minimum rank function.}
This may be one positive evidence that enables us to hope that this problem is tractable in general.
\end{remark}

\section{Almost Consistent Exchangeability Graphs via 2-SAT}\label{sec:2-SAT}
In this section, we show that an \textbf{almost consistent} exchangeability graph can be obtained in polynomial time by solving the 2-SAT problem, which is enough in several special cases.
Specifically, we observe that all but one condition of the consistency in Definition~\ref{def:consistent} can be formulated by 2-CNF.
The following definition corresponds to the unique difficult case, which is also utilized for the proof of NP-hardness of finding a consistent exchangeability graph in general.

\begin{definition}[cf. Example~\ref{ex:LE-pair2}]\label{def:evil_pair}
An LE-pair $(X, Y)$ is said to be \textbf{evil} if
\begin{itemize}
    \item $|X| = |Y| = 2$,
    \item $\rmin((I \cup X) \setminus Y) = |I| - 1 \ (= |I| - |Y| + 1)$, and
    \item $\rmin((I \cup X') \setminus Y') = |I| - |Y'|$ for every proper subpair $(X', Y')$ of $(X, Y)$.
\end{itemize}
\end{definition}

\begin{definition}\label{def:almost_consistent_graph}
A directed bipartite graph $\tD[I] = (E \setminus I, I; \tA[I])$ is called an \textbf{almost consistent exchangeability graph} (or, simply, \textbf{almost consistent}) if
\begin{itemize}
    \item $\Asure[I] \subseteq \tA[I] \subseteq \Amin[I]$,
    \item $\tD[I]$ is consistent with respect to all non-evil LE-pairs $(X, Y)$,
    \item $\tD[I]$ is underestimated with respect to all evil LE-pairs $(X, Y)$, and
    \item if $\tD[I]$ is not consistent with respect to an evil LE-pair $(X, Y)$, then $\tA[I]$ contains no arc between $X$ and $Y$.
\end{itemize}
In particular, any consistent exchangeability graph $D[I]$ is almost consistent.
\end{definition}

\subsection{2-SAT Formulation}
To find an almost consistent exchangeability graph $\tD[I] = (E \setminus I, I; \tA[I])$, for each suspicious arc $a \in \Asusp[I]$, we introduce one associated variable $z_a$ such that $z_a = \true$ means $a \in \tA[I]$.
For the sake of simplicity, we introduce a constant $z_a = \true$ for each arc $a \in \Asure[I]$, and $z_a = \false$ for each pair $a = (u, v) \not\in \Amin[I]$ with either $u \in E \setminus (I \cup S_I \cup T_I)$ and $v \in I$ or $u \in I$ and $v \in E \setminus (I \cup S_I \cup T_I)$.

Let $r$ denote the maximum cardinality of a common independent set.
Then, the number of variables is $\mathrm{O}(rn)$, and the following lemma implies that one can obtain an almost consistent exchangeability graph $\tD[I] = (E \setminus I, I; \tA[I])$ in $\mathrm{O}(r^2n^2)$ time, since 2-SAT can be solved in time linear in the number of variables and clauses of the input 2-CNF~\cite{krom1967decision,aspvall1979linear}.

\begin{lemma}\label{lem:2-SAT}
    There exists a 2-CNF $\psi(z)$ on the variables $z = (z_a)_{a \in \Asusp[I]}$ consisting of $\mathrm{O}(r^2n^2)$ clauses such that $\psi(z)$ is satisfiable and each satisfying assignment $\tilde{z} \in \{\true, \false\}^{\Asusp[I]}$ gives an almost consistent exchangeability graph $\tD[I] = (E \setminus I, I; \tA[I])$ with $\tA[I] = \Asure[I] \cup \{\, a \in \Asusp[I] \mid \tilde{z}_a = \true \,\}$.
\end{lemma}

\begin{proof}
We define a set $\cC_{X, Y}$ of at most eight clauses, each of which contains at most two variables for each LE-pair $(X, Y)$. %
Note that there are $\mathrm{O}(r^2n^2)$ such pairs.

\subsubsection*{Case 1.1.~ When $|X| = 1$ and $|Y| = 1$.}
Suppose that $X = \{x\}$ and $Y = \{y\}$, and let $a = (x, y)$ and $b = (y, x)$.
We then have two possibilities $\rmin(I + x - y) = |I|$ and $\rmin(I + x - y) = |I| - 1$.

When $\rmin(I + x - y) = |I|$, both $a$ and $b$ must be included in $\tD[I]$. %
We then define $\cC_{X, Y} \coloneqq \{C_{X, Y}^1, C_{X, Y}^2, C_{X, Y}^3\}$ with
\[C_{X, Y}^1 \coloneqq (z_a \vee z_b), \quad C_{X, Y}^2 \coloneqq (\neg z_a \vee z_b), \quad C_{X, Y}^3 \coloneqq (z_a \vee \neg z_b).\]
It is easy to see that $\cC_{X, Y}$ is satisfied if and only if $z_a = z_b = \true$, which coincides with the desired consistency condition.

When $\rmin(I + x - y) = |I| - 1$, at least one of $a$ and $b$ must be excluded from $\tD[I]$. %
We then define $\cC_{X, Y} \coloneqq \{C_{X, Y}^1\}$ with
\[C_{X, Y}^1 \coloneqq (\neg z_a \vee \neg z_b).\]
It is easy to see that $\cC_{X, Y}$ is satisfied if and only if $z_a = \false$ or $z_b = \false$, which coincides with the desired consistency condition.

\subsubsection*{Case 1.2.~ When $|X| = 1$ and $|Y| = 2$.}
Suppose that $X = \{x\}$ and $Y = \{y_1, y_2\}$, and let $a_i = (x, y_i)$ and $b_i = (y_i, x)$ for $i = 1, 2$.
We then have two possibilities $\rmin(I + x - y_1 - y_2) = |I| - 1$ and $\rmin(I + x - y_1 - y_2) = |I| - 2$.

When $\rmin(I + x - y_1 - y_2) = |I| - 1$, at least one of $a_1$ and $a_2$ and at least one of $b_1$ and $b_2$ must be included in $\tD[I]$. %
This condition is simply represented by two clauses $\cC_{X, Y} \coloneqq \{C_{X, Y}^1, C_{X, Y}^2\}$ such that
\[C_{X, Y}^1 \coloneqq (z_{a_1} \vee z_{a_2}), \quad C_{X, Y}^2 \coloneqq (z_{b_1} \vee z_{b_2}).\]

When $\rmin(I + x - y_1 - y_2) = |I| - 2$, both $a_1$ and $a_2$ or both $b_1$ and $b_2$ must be excluded from $\tD[I]$. %
This condition is represented by four clauses
\begin{align*}
C_{X, Y}^1 \coloneqq (\neg z_{a_1} \vee \neg z_{b_2}), \ C_{X, Y}^2 \coloneqq (\neg z_{a_2} \vee \neg z_{b_1}), \
C_{X, Y}^3 \coloneqq (\neg z_{a_1} \vee \neg z_{b_1}), \ C_{X, Y}^4 \coloneqq (\neg z_{a_2} \vee \neg z_{b_2}).
\end{align*}
Since $\rmin(I + x - y_1 - y_2) = |I| - 2$ implies $\rmin(I + x - y_i) = |I| - 1$ for each $i \in \{1, 2\}$, the latter two clauses $C_{X, Y}^3$ and $C_{X, Y}^4$ have already been introduced as $C_{x, y_1}^1$ and $C_{x, y_2}^1$, respectively, in Case~1.1.
Thus, we define $\cC_{X, Y} \coloneqq \{C_{X, Y}^1, C_{X, Y}^2\}$, and we are done.

\subsubsection*{Case 2.1.~ When $|X| = 2$ and $|Y| = 1$.}
Suppose that $X = \{x_1, x_2\}$ and $Y = \{y\}$, and let $a_i = (x_i, y)$ and $b_i = (y, x_i)$ for $i = 1, 2$.
We then have two possibilities $\rmin(I + x_1 + x_2 - y) = |I|$ and $\rmin(I + x_1 + x_2 - y) = |I| - 1$.
Each case is analogous to Case~1.2.

When $\rmin(I + x_1 + x_2 - y) = |I|$, at least one of $a_1$ and $a_2$ and at least one of $b_1$ and $b_2$ must be included in $\tD[I]$.
This condition is simply represented by two clauses $\cC_{X, Y} \coloneqq \{C_{X, Y}^1, C_{X, Y}^2\}$ such that
\[C_{X, Y}^1 \coloneqq (z_{a_1} \vee z_{a_2}), \quad C_{X, Y}^2 \coloneqq (z_{b_1} \vee z_{b_2}).\]

When $\rmin(I + x_1 + x_2 - y) = |I| - 1$, both $a_1$ and $a_2$ or both $b_1$ and $b_2$ must be excluded from $\tD[I]$.
This condition is represented by four clauses
\begin{align*}
C_{X, Y}^1 \coloneqq (\neg z_{a_1} \vee \neg z_{b_2}), \ C_{X, Y}^2 \coloneqq (\neg z_{a_2} \vee \neg z_{b_1}),\
C_{X, Y}^3 \coloneqq (\neg z_{a_1} \vee \neg z_{b_1}), \ C_{X, Y}^4 \coloneqq (\neg z_{a_2} \vee \neg z_{b_2}).
\end{align*}
Since $\rmin(I + x_1 + x_2 - y) = |I| - 1$ implies $\rmin(I + x_i - y) = |I| - 1$ for each $i \in \{1, 2\}$, the latter two clauses $C_{X, Y}^3$ and $C_{X, Y}^4$ have already been introduced as $C_{x_1, y}^1$ and $C_{x_2, y}^1$, respectively, in Case~1.1.
Thus, we define $\cC_{X, Y} \coloneqq \{C_{X, Y}^1, C_{X, Y}^2\}$, and we are done.

\subsubsection*{Case 2.2.~ When $|X| = 2$ and $|Y| = 2$.}
Suppose that $X = \{x_1, x_2\}$ and $Y = \{y_1, y_2\}$, and let $a_{i,j} = (x_i, y_j)$ and $b_{i,j} = (y_j, x_i)$ for $i = 1, 2$ and $j = 1, 2$.
We then have three possible values $|I|$, $|I| - 1$, and $|I| - 2$ of $\rmin(I + x_1 + x_2 - y_1 - y_2)$.

When $\rmin(I + x_1 + x_2 - y_1 - y_2) = |I|$, at least one of $a_{i,j}$ and at least one of $b_{i,j}$ must be included in $\tD[I]$.
In this case, this is simply satisfied due to the clauses already introduced in Cases 1.2 and 2.1, since we have $\rmin(I + x_1 + x_2 - y_i) = |I|$ and $\rmin(I + x_i - y_1 - y_2) = |I| - 1$ for any $i \in \{1, 2\}$.
Thus, it suffices to set $\cC_{X, Y} = \emptyset$.

When $\rmin(I + x_1 + x_2 - y_1 - y_2) = |I| - 2$, all of $a_{i, j}$ or all of $b_{i, j}$ must be excluded from $\tD[I]$. %
In this case, we have $\rmin((I \cup X') \setminus Y') = |I| - |Y'|$ for any subpair $(X', Y')$ of $(X, Y)$.
The consistency with respect to the proper subpairs $(X', Y')$ implies (i.e., the clauses already introduced in Cases~1.1, 1.2, and 2.1 impose) that for each $i, j, k \in \{1, 2\}$ (possibly, $i = k$), at least one of $a_{i,j}$ and $b_{k,j}$ is excluded from $\tA[I]$ and at least one of $b_{i,j}$ and $a_{k,j}$ is excluded from $\tA[I]$.
Thus, the remaining possibility that we have to reject is that $\tD[I]$ has two arcs $a_{i, j}$ and $b_{3-i, 3-j}$ for some $i, j \in \{1,2\}$.
We then define $\cC_{X, Y} \coloneqq \{C_{X, Y}^1, C_{X, Y}^2, C_{X, Y}^3, C_{X, Y}^4\}$ with
\begin{align*}
C_{X, Y}^1 \coloneqq (\neg z_{a_{1,1}} \vee \neg z_{b_{2,2}}), \ C_{X, Y}^2 \coloneqq (\neg z_{a_{1,2}} \vee \neg z_{b_{2,1}}), \
C_{X, Y}^3 \coloneqq (\neg z_{a_{2,1}} \vee \neg z_{b_{1,2}}), \ C_{X, Y}^4 \coloneqq (\neg z_{a_{2,2}} \vee \neg z_{b_{1,1}}).
\end{align*}
It is easy to see that $\cC_{X, Y}$ is satisfied if and only if $z_{a_{i,j}} = \false$ or $z_{b_{3-i,3-j}} = \false$ for each $i, j \in \{1, 2\}$, which coincides with what we have to do.

When $\rmin(I + x_1 + x_2 - y_1 - y_2) = |I| - 1$, we separately consider two cases.
Suppose that $(X, Y)$ is not evil, i.e., $\rmin((I \cup X') \setminus Y') \ge |I| - |Y'| + 1$ for some proper subpair $(X', Y')$ of $(X, Y)$.
In this case, the consistency with respect to $(X', Y')$ implies (i.e., the clauses already introduced in Cases~1.1, 1.2, or 2.1 impose) the consistency with respect to $(X, Y)$.
Thus, it suffices to set $\cC_{X, Y} = \emptyset$.

Otherwise, $(X, Y)$ is evil, i.e., $\rmin((I \cup X') \setminus Y') = |I| - |Y'|$ for any proper subpair $(X', Y')$ of $(X, Y)$.
In this case, as seen in the second last paragraph, the consistency with respect to such LE-pairs $(X', Y')$ implies (i.e., the clauses already introduced in Cases~1.1, 1.2, and 2.1 impose) that for each $i, j, k \in \{1, 2\}$ (possibly, $i = k$), at least one of $a_{i,j}$ and $b_{k,j}$ is excluded from $\tA[I]$ and at least one of $b_{i,j}$ and $a_{k,j}$ is excluded from $\tA[I]$.
Thus, for the consistency with respect to $(X, Y)$, there must exist a pair of arcs $a_{i,j}$ and $b_{3-j,3-i}$ for exactly one pair $(i, j)$ with $i, j \in \{1, 2\}$ (cf.~Example~\ref{ex:LE-pair2}).
We then define $\cC_{X, Y} \coloneqq \{C_{X, Y}^1, C_{X, Y}^2, \dots, C_{X, Y}^8\}$ with
\begin{align*}
&C_{X, Y}^1 \coloneqq (\neg z_{a_{1,1}} \vee z_{b_{2,2}}), \ C_{X, Y}^2 \coloneqq (\neg z_{a_{1,2}} \vee z_{b_{2,1}}), \
C_{X, Y}^3 \coloneqq (\neg z_{a_{2,1}} \vee z_{b_{1,2}}), \ C_{X, Y}^4 \coloneqq (\neg z_{a_{2,2}} \vee z_{b_{1,1}}),\\
&C_{X, Y}^5 \coloneqq (z_{a_{1,1}} \vee \neg z_{b_{2,2}}), \ C_{X, Y}^6 \coloneqq (z_{a_{1,2}} \vee \neg z_{b_{2,1}}), \
C_{X, Y}^7 \coloneqq (z_{a_{2,1}} \vee \neg z_{b_{1,2}}), \ C_{X, Y}^8 \coloneqq (z_{a_{2,2}} \vee \neg z_{b_{1,1}}).
\end{align*}
We observe that $C_{X, Y}^1$ and $C_{X, Y}^5$ are both satisfied if and only if $z_{a_{1,1}} = z_{b_{2,2}}$, and the similar equivalence holds for the other three pairs of clauses.
Hence, $\cC_{X, Y}$ is satisfied in addition to the clauses already introduced in Cases~1.1, 1.2, and 2.1 if and only if $z_{a_{i,j}} = z_{b_{3-i,3-j}}$ for every $i, j \in \{1, 2\}$ and $z_{a_{i,j}} = z_{b_{3-i,3-j}} = \true$ for at most one pair $(i, j)$.
If $z_{a_{i,j}} = z_{b_{3-i,3-j}} = \true$ for some pair $(i, j)$, then $\tD[I]$ is consistent with respect to $(X, Y)$;
otherwise, all the eight variables $z_{a_{i,j}}, z_{b_{i,j}}$ appearing here take $\false$ and $\tD[I]$ is underestimated with respect to $(X, Y)$.

\medskip
Overall, by at most eight clauses each of which contains at most two variables, as required in Definition~\ref{def:almost_consistent_graph}, we can exactly represent the consistency with respect to every non-evil LE-pair, and we can impose that $\tD[I]$ is underestimated with respect to every evil LE-pair so that if $\tD[I]$ is not consistent with respect to an evil LE-pair $(X, Y)$, then $\tA[I]$ contains no edge between $X$ and $Y$.
This completes the proof.
\end{proof}

\subsection{Tractable Special Cases}\label{sec:trac}
We demonstrate several special cases that can be solved by this $2$-SAT underestimation approach.
Recall that $r$ denotes the maximum cardinality of a common independent set.

First, if no circuit in one matroid is included in any circuit in the other matroid, then one can solve the weighted matroid intersection problem in polynomial time only using the minimum rank oracle.
This special case already includes the cases of bipartite matching and arborescence (in simple graphs).

\begin{theorem}\label{thm:no_circuit_inclusion}
Suppose that for any pair of a circuit $C_1$ in $\mathbf{M}_1$ and a circuit $C_2$ in $\mathbf{M}_2$, we have $C_2 \setminus C_1 \neq \emptyset$.
Then, for any weight function $w \colon E \to \mathbb{R}$, a $w$-maximal common independent set in $\cImink$ for each cardinality $k = 0, 1, \dots, r$ can be found in $\mathrm{O}(r^3n^2)$ time only using the minimum rank oracle.
\end{theorem}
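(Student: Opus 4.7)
The plan is to apply Algorithm~\ref{alg:cheapest_path_augment_min_rank} iteratively for $k = 0, 1, \dots, r-1$, constructing an almost consistent exchangeability graph $\tD[I]$ at each step via the 2-SAT of Lemma~\ref{lem:2-SAT} in $O(r^2 n^2)$ time. Summed over $O(r)$ iterations, this gives the claimed $O(r^3 n^2)$ runtime. The main correctness task is to ensure that at each iteration the resulting $\tD[I]$ is fully consistent (not merely almost consistent), so that Lemma~\ref{lem:correctness} applies and the iteration returns a $w$-maximal set in $\cIminkk$ or a valid optimality certificate.

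The plan to upgrade almost consistency to consistency exploits the no-containment hypothesis structurally. For every $x \in E \setminus (I \cup S_I \cup T_I)$ the hypothesis provides a distinguishing element $y^*_x \in C_2(I, x) \setminus C_1(I, x)$ (and, by symmetry between the two matroids, an analogous $y^{**}_x \in C_1(I, x) \setminus C_2(I, x)$). Given any such $y^*_x$, I claim the single query $\rmin(I + x - y - y^*_x)$ exactly decides membership $y \in C_1(I, x)$ for each $y \in I \setminus \{y^*_x\}$. Indeed, since $y^*_x \in C_2(I, x)$ we have $\{y, y^*_x\} \cap C_2(I, x) \ne \emptyset$, forcing $r_2(I + x - y - y^*_x) = |I| - 1$ regardless of $y$; and since $y^*_x \notin C_1(I, x)$, the value $r_1(I + x - y - y^*_x)$ equals $|I| - 1$ exactly when $y \in C_1(I, x)$ and $|I| - 2$ otherwise. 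Hence $\rmin(I + x - y - y^*_x) = |I| - 1$ iff $y \in C_1(I, x)$, and the analogous use of $y^{**}_x$ decides $y \in C_2(I, x)$.

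Once $C_1(I, x)$ and $C_2(I, x)$ are determined for every $x$, the true graph $D[I]$ is known exactly; augmenting the 2-SAT instance with unit clauses pinning every arc variable to its true value keeps it satisfiable (the true assignment is a witness) and forces the output $\tD[I] = D[I]$, which is trivially consistent. The hard part will be efficiently locating valid distinguishing elements $y^*_x, y^{**}_x$ using only the minimum rank oracle: my plan is to enumerate candidates $y \in I$ with $\rmin(I + x - y) = |I| - 1$ and test each candidate by running the above membership test and checking that the resulting candidate for $C_1(I, x)$ (resp.\ $C_2(I, x)$) is self-consistent under additional oracle queries; the no-containment hypothesis guarantees at least one valid candidate per $x$. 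With consistency established, Lemma~\ref{lem:correctness} gives correctness of each application of Algorithm~\ref{alg:cheapest_path_augment_min_rank}, and the theorem follows as in the proof of Theorem~\ref{thm:unweighted}.
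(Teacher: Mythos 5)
Your approach is genuinely different from the paper's: you try to reconstruct the true exchangeability graph $D[I]$ exactly by determining $C_1(I,x)$ and $C_2(I,x)$ for every $x$, whereas the paper never attempts to recover $D[I]$. Instead the paper shows that under the no-containment hypothesis every evil LE-pair cannot be underestimated (via an auxiliary LE-pair $(\{x_1\},\{y_1,y'\})$ with $y'\in C_2(I,x_1)\setminus C_1(I,x_1)$), so any \emph{almost} consistent graph from Lemma~\ref{lem:2-SAT} is automatically consistent and Lemma~\ref{lem:correctness} applies directly.

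There is a concrete gap in your plan. The hypothesis is $C_2\setminus C_1\neq\emptyset$ for every circuit $C_1$ of $\bM_1$ and $C_2$ of $\bM_2$, i.e.\ no circuit of $\bM_2$ sits inside a circuit of $\bM_1$. This is \emph{asymmetric}: it does not forbid $C_1\subsetneq C_2$, so your distinguishing element $y^{**}_x\in C_1(I,x)\setminus C_2(I,x)$ need not exist. Your claim that it exists ``by symmetry between the two matroids'' is false, and without it you have no way to decide membership in $C_2(I,x)\setminus C_1(I,x)$: once $C_1(I,x)\subseteq C_2(I,x)$, every single query $\rmin((I\cup X)\setminus Y)$ with $x\in X$ becomes insensitive to whether $y\in C_2(I,x)\setminus C_1(I,x)$ or $y\notin C_1(I,x)\cup C_2(I,x)$, so $C_2(I,x)$ is literally not determined by the oracle. (A second, less critical issue is the unspecified ``self-consistency check'': a candidate $y^*$ with $\rmin(I+x-y^*)=|I|-1$ can lie in $C_2\setminus C_1$, in $C_1\setminus C_2$, or in $I\setminus(C_1\cup C_2)$, and your test returns $C_1(I,x)$, $C_2(I,x)$, or $C_1(I,x)\cap C_2(I,x)$ respectively; even having identified the unordered pair $\{C_1(I,x),C_2(I,x)\}$, you would still need to anchor the labeling consistently across all $x$ via $s^*,t^*$, which you do not address.) Your membership test itself is sound when a correct $y^*_x$ is in hand, but the surrounding plan of pinning the full true graph does not go through. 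The paper avoids all of this by only needing the one-sided element $y'\in C_2(I,x)\setminus C_1(I,x)$, which the hypothesis does guarantee.
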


\begin{proof}
We obtain this result by combining Algorithm~\ref{alg:cheapest_path_augment_min_rank} and Lemma~\ref{lem:2-SAT}.
Specifically, under the assumption of this lemma, any almost consistent exchangeability graph $\tD[I]$ is in fact consistent.
Note that the computational time of finding a shortest cheapest $S_I^*$--$T_I^*$ path in $\tD[I]$ in Step~5 is $\mathrm{O}(rn^2)$ (using the Bellman--Ford algorithm), which is not a bottleneck.

To see the consistency of $\tD[I]$, it suffices to show that underestimation with respect to any evil LE-pair cannot occur.
Let $(X, Y)$ be an evil LE-pair.
As Case~2.2 in the proof of Lemma~\ref{lem:2-SAT}, suppose that $X = \{x_1, x_2\}$ and $Y = \{y_1, y_2\}$, and let $a_{i,j} = (x_i, y_j)$ and $b_{i,j} = (y_j, x_i)$ for $i = 1, 2$ and $j = 1, 2$.
We show that for at least one pair $(i, j)$, we must have $a_{i,j} \in \tA[I]$ or $b_{i,j} \in \tA[I]$, which implies that $\tD[I]$ is consistent with $(X, Y)$ (recall that the only possible underestimation is that all of the eight arcs between $X$ and $Y$ are excluded from $\tD[I]$).

Since $\rmin((I \cup X) \setminus Y) = |I| - 1 = |I| - |Y| + 1$, there exists at least one arc $(y_i, x_j) \in A[I]$ (in the true graph).
By the symmetry, we assume $(y_1, x_1) \in A[I]$.
This means that $y_1$ is in the fundamental circuit $C_1(I, x_1)$ of $x_1$ with respect to $I$ in $\mathbf{M}_1$.
By the assumption of the lemma, the fundamental circuit $C_2(I, x_1)$ in the other matroid $\mathbf{M}_2$ contains some element $y' \not\in C_1(I, x_1)$.

Consider an LE-pair $(X', Y') = (\{x_1\}, \{y_1, y'\})$.
Since $y' \in C_2(I, x_1) \setminus C_1(I, x_1)$, we have $\rmin(I + x_1 - y') = r_1(I + x_1 - y') = |I| - 1$.
Also, as $(X, Y)$ is evil, we have $\rmin(I + x_1 - y_1) = |I| - 1$.
Finally, since $y_1 \in C_1(I, x_1)$ and $y' \in C_2(I, x_1)$, we have $\rmin(I + x_1 - y_1 - y') = |I| - 1$.
Thus, the consistency with respect to the subpairs of $(X', Y')$ implies either $a_{1,1} \in \tA[I]$ or $b_{1,1} \in \tA[I]$, which completes the proof.
\end{proof}

\begin{remark}
    The condition of Theorem~\ref{thm:no_circuit_inclusion} raises the following problem: can we decide if there exists a circuit $C$ in $\mathbf{M}_1$ that is dependent in $\mathbf{M}_2$? This problem is hard even if an oracle access (or linear representation) is available for both matroids. To see this, let $G=(V,E)$ be an undirected graph, set $\mathbf{M}_1$ to be the graphic matroid of $G$ and $\mathbf{M}_2$ be the uniform matroid of rank $|V|-1$. Then, there exists a circuit $C$ in $\mathbf{M}_1$ that is dependent in $\mathbf{M}_2$ if and only if $G$ has a Hamiltonian cycle.   
\end{remark}

A similar discussion can lead to an FPT algorithm parameterized by the size of largest circuits.
This implies that the case when one matroid is a partition matroid with constant upper bounds is tractable.

\begin{theorem}\label{thm:FPT_circuit}
Let $\gamma_i$ be the size of largest circuits in $\mathbf{M}_i$, and $\gamma = \min\{\gamma_1, \gamma_2\}$.
Then, for any weight function $w \colon E \to \mathbb{R}$, a $w$-maximal common independent set in $\cImink$ for each cardinality  $k = 0, 1, \dots, r$ can be found in $\mathrm{O}(2^{\gamma}r^3n^2)$ time only using the minimum rank oracle.
\end{theorem}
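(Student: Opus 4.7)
The plan is to follow the framework of the proof of Theorem~\ref{thm:no_circuit_inclusion} but close the evil-LE-pair gap by brute-force enumeration rather than by a structural argument. Within each call of Algorithm~\ref{alg:cheapest_path_augment_min_rank}, I would first obtain an almost consistent exchangeability graph $\tD[I]$ via the 2-SAT formulation of Lemma~\ref{lem:2-SAT}. If $\tD[I]$ is already consistent, the proof of Theorem~\ref{thm:no_circuit_inclusion} applies verbatim; otherwise, the only issue is that $\tD[I]$ is underestimated with respect to some evil LE-pair $(X, Y)$ with $|X| = |Y| = 2$, meaning all eight arcs between $X$ and $Y$ are absent.

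Assume without loss of generality that $\gamma = \gamma_2$, so every fundamental circuit $C_2(I, x)$ in $\mathbf{M}_2$ has size at most $\gamma$. Since any evil LE-pair must be certified by the fundamental circuits of its elements of $X$ in $\mathbf{M}_2$ touching $Y$, the status of the arcs between $X$ and $Y$ is ultimately determined by the intersections $C_2(I, x) \cap I$ for $x \in X$, each a subset of a $\gamma$-element set. I would therefore enumerate $2^\gamma$ candidate labellings of the suspicious arcs around a carefully chosen $\gamma$-element pivot circuit, pinning them as unit clauses added to the 2-CNF of Lemma~\ref{lem:2-SAT}, and re-solve the 2-SAT for each guess. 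Among the resulting at most $2^\gamma$ exchangeability graph candidates, the correct guess produces a genuinely consistent exchangeability graph, and by Lemma~\ref{lem:correctness} its shortest cheapest $S_I$--$T_I$ path coincides with one in the true graph $D[I]$; selecting the candidate augmentation of maximum resulting weight yields the desired $w$-maximal set in $\cIminkk$.

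The hard part is arguing that fixing the configuration of one such $\gamma$-element pivot propagates through the 2-CNF to pin down every other suspicious arc, so that a single $2^\gamma$ enumeration suffices globally rather than per evil LE-pair. I expect this to follow from the biconditional structure of the clauses introduced in Case~2.2 of the proof of Lemma~\ref{lem:2-SAT} (effectively forcing $z_{a_{i,j}} \equiv z_{b_{3-i,3-j}}$ at every evil pair), so that once any arc of the pivot circuit is forced, the chain of implications fixes every other suspicious arc up to the global $\mathbf{M}_1$--$\mathbf{M}_2$ swap symmetry. The timing bookkeeping then gives $O(r)$ augmentations, each costing $O(2^\gamma)$ runs of the 2-SAT solver at $O(r^2 n^2)$ plus a Bellman--Ford shortest-cheapest-path computation at $O(rn^2)$, for a total of $O(2^\gamma r^3 n^2)$, matching the stated bound.
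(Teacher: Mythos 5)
Your high-level plan (a $2^\gamma$ enumeration wrapped around the 2-SAT/almost-consistent framework) is the right spirit, but the mechanism you propose has a genuine gap, and it diverges substantially from what actually makes the argument go through.

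The critical flaw is the propagation claim. You want to fix $O(\gamma)$ arc variables around a ``pivot circuit'' as unit clauses and then argue that the equivalences $z_{a_{i,j}} \equiv z_{b_{3-i,3-j}}$ at evil LE-pairs chain outward to pin every other suspicious arc. This will not happen in general: those equivalences only relate arcs \emph{within a single} evil pair $(X,Y)$, and two evil LE-pairs that share no elements contribute disjoint sets of clauses, so the implication graph of the 2-CNF need not be connected with respect to the pivot. Moreover, what you would want to enumerate is not $\gamma$ bits of ``arc configuration'' --- the number of suspicious arcs in $\Asuspt[I]$ can be $\Theta(\gamma n)$, far more than $\gamma$ --- and you never actually specify which circuit to pivot on or why that choice controls all evil pairs.

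The quantity that is genuinely bounded by $\gamma$ is different: by Lemma~\ref{lem:intersection}(b), any suspicious arc in $\Asuspt[I]$ must have its head $y$ in $C_2(I,s)$ for \emph{every} source $s\in S_I$, so the set $J\subseteq I$ of possible heads of suspicious arcs in $\Asuspt[I]$ satisfies $|J|<\gamma$. The paper's proof enumerates not arc values but a semantic property of each $y\in J$ --- namely whether there exists an arc $(x,y)\in A_2[I]\setminus\Asuret[I]$ whose reverse $(y,x)$ is \emph{not} in $A_1[I]$ --- giving $2^{|J|}\le 2^\gamma$ guesses. Under the correct guess, each evil LE-pair $(X,Y)$ is handled case-by-case according to $|Y\cap J|$ and $|Y\cap J'|$: in some cases the guess lets one add a disjunctive clause like $(z_{a_{1,2}}\vee z_{a_{2,2}})$ (not a unit clause), in others the guess is rejected outright as inconsistent, and in the last case one reruns the Theorem~\ref{thm:no_circuit_inclusion}-style argument (building a fresh LE-pair from the guessed witness) to show the almost-consistent graph is in fact consistent there. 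Nothing propagates globally; the correctness is argued per evil pair using the guess as an oracle for the needed local structure. You should replace your pivot-and-propagate step with this identification of $J$ and per-element guessing, or you will need a substantially new argument to justify the propagation you are assuming.
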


\begin{proof}
Without loss of generality, we assume $\gamma = \gamma_2$.
Then, each $x \in E \setminus (I \cup T_I)$ has at most $\gamma$ outgoing arcs in $A_2[I]$.
In particular, by Lemma~\ref{lem:intersection}, there are at most $\gamma$ elements in $I$ that have arcs in $\Amint[I]$ from all the elements in $S_I$ (that are sure), and only those elements can have a suspicious arc in $\Asuspt[I]$.
Let $J \subseteq I$ be the set of such elements (that have a suspicious arc in $\Asuspt[I]$).
We then guess a subset $J' \subseteq J$ such that
\[J' = \{\, y \in J \mid \nexists (x, y) \in A_2[I] \setminus \Asuret[I]~\text{s.t.}~(y, x) \not\in A_1[I] \,\},\]
and show that a consistent exchangeability graph can be obtained by the $2$-SAT approach under the assumption that $J'$ is correctly guessed.
Since the number of the candidates of $J'$ is $2^{|J|} \le 2^\gamma$, this concludes the stated computational time.

Fix $J' \subseteq J$, and let $(X, Y)$ be an evil LE-pair.
As Case~2.2 in the proof of Lemma~\ref{lem:2-SAT}, suppose that $X = \{x_1, x_2\}$ and $Y = \{y_1, y_2\}$, and let $a_{i,j} = (x_i, y_j)$ and $b_{i,j} = (y_j, x_i)$ for $i = 1, 2$ and $j = 1, 2$.
We show that, by adding some clauses to $\cC_{X, Y}$ if necessary, we can impose the consistency with respect to $(X, Y)$ under the assumption that $J'$ is correctly guessed.

Suppose that $Y \setminus J \neq \emptyset$, say $y_1 \not\in J$.
Then, regardless of $J'$, we do the following.
If $\Amint[I]$ contains $a_{1,1}$ or $a_{2,1}$, then it is sure.
Thus, $(X, Y)$ cannot be underestimated under the original definition of $\cC_{X,Y}$, so we need no additional clause.
Otherwise, for the consistency with respect to $(X, Y)$, instead of $a_{1,1}$ and $a_{2,1}$, either $a_{1,2}$ or $a_{2,2}$ must be included in $\tA[I]$, which can be simply represented by an additional clause $(z_{a_{1,2}} \vee z_{a_{2,2}})$.

Otherwise, $Y \subseteq J$.
If $Y \subseteq J'$, then we can conclude that this $J'$ is a wrong guess, since any consistent exchangeability graph contains exactly one pair of $a_{i,j}$ and $b_{3-i,3-j}$ among the four candidates (cf.~Example~\ref{ex:LE-pair2}), so we skip this $J'$.
If $|Y \cap J'| = 1$, say $Y \cap J' = \{y_1\}$, then as with the previous paragraph, instead of $a_{1,1}$ and $a_{2,1}$ (that must be excluded in this trial), either $a_{1,2}$ or $a_{2,2}$ must be included in $\tA[I]$, which can be simply represented by an additional clause $(z_{a_{1,2}} \vee z_{a_{2,2}})$.
Otherwise, $Y \cap J' = \emptyset$.
In this case, we do not add any clauses, and the consistency with respect to $(X, Y)$ under the correct guess of $J'$ is seen as follows.

Suppose that the true graph $D[I]$ contains $a_{1,1}$ and $b_{2,2}$ (without loss of generality by the symmetry).
As $J'$ is correctly guessed, $D[I]$ contains some arc $(x', y_2) \in A_2[I] \setminus \Asuret[I]$ such that $(y_2, x') \not\in A_1[I]$.
Consider an LE-pair $(X', Y') = (\{x_2, x'\}, \{y_2\})$.
Then, as with the proof of Theorem~\ref{thm:no_circuit_inclusion}, we conclude that the consistency with respect to the subpairs of $(X', Y')$ implies either $a_{2,2} \in \tA[I]$ or $b_{2,2} \in \tA[I]$.
This completes the proof.
\end{proof}

\begin{remark}
Since the circuits of size at least $r + 1$ never appear in any augmentation step, we can sharpen Theorem~\ref{thm:FPT_circuit} by modifying the definition of $\gamma_i$ as the size of largest circuits in $\mathbf{M}_i$ among those of size at most $r$.
\end{remark}

Finally, a lexicographically maximal common independent set can be found in polynomial time by this approach, where a common independent set is \textbf{lexicographically maximal} if it takes as many heaviest elements as possible, and subject to this, it takes as many second heaviest elements as possible, and so on.

\begin{theorem}\label{thm:lexicographically_maximal}
A lexicographically maximal common independent set can be found in $\mathrm{O}(r^3n^2)$ time only using the minimum rank oracle.
\end{theorem}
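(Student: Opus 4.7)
The plan is to reduce the lex-maximal problem to weighted matroid intersection with a sum-distinct weight function and then run Algorithm~\ref{alg:cheapest_path_augment_min_rank} on top of the 2-SAT almost consistent graph of Lemma~\ref{lem:2-SAT}. Concretely, sort the elements as $e_1, e_2, \ldots, e_n$ in decreasing order of the original weight (with any fixed tie-breaking) and define $w'(e_i) \coloneqq (n+1)^{n-i}$. Because these weights are sum-distinct, the $w'$-maximal element of $\cImink$ is unique for every $k$, and the lex-maximal common independent set coincides with the $w'$-maximizer over all cardinalities. Thus it is enough, for each $k = 0, 1, \ldots, r$, to compute the unique $w'$-maximal $I_k \in \cImink$ via Algorithm~\ref{alg:cheapest_path_augment_min_rank} and return the one of maximum $w'$-value.

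The key technical claim to establish is that, for every $w'$-maximal $I_k$, any shortest cheapest $S_{I_k}$--$T_{I_k}$ path $\tP$ found in the almost consistent $\tD[I_k]$ produces exactly the unique $w'$-maximum of $\cIminkk$ via $I_{k+1} = I_k \triangle \tP$. The starting point is the observation that, under $w'$, Lemma~\ref{lem:negative_cycle} combined with sum-distinctness forces every non-empty cycle in the true graph $D[I_k]$ to have strictly positive cost, since such a cycle partitions into two disjoint non-empty subsets of $E$ with necessarily distinct $w'$-sums. Applying Lemma~\ref{lem:path_cycle_partition} with the consistent $D[I_k]$ and the underestimated $\tD[I_k]$ then lets us decompose $\tP$ into an $S_{I_k}$--$T_{I_k}$ path $P'$ plus disjoint cycles in $D[I_k]$, and these cycles, if present, contribute strictly positive extra cost to $c(\tP)$.

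The main obstacle is turning this one-sided control into a tight matching between $\tP$ and the unique shortest cheapest path $P^*$ of $D[I_k]$ (in terms of vertex set), which is what is needed for $I_k \triangle \tP$ to be $w'$-maximum. Because Lemma~\ref{lem:path_cycle_partition} requires a consistent target graph, one cannot directly decompose $P^*$ inside the only-almost-consistent $\tD[I_k]$, so the missing true arcs at underestimated evil LE-pairs are a priori an obstruction. The proposed bridge is to consider the consistent refinement $\tD^*[I_k]$ obtained from $\tD[I_k]$ by locally installing the correct pattern at every underestimated evil LE-pair; Lemma~\ref{lem:correctness} applied to $\tD^*[I_k]$ then identifies its shortest cheapest $S$--$T$ path with $P^*$ (in vertex set), and strict positivity of cycle costs in $D[I_k]$ under $w'$ should force any would-be detour through $\tD^*[I_k] \setminus \tD[I_k]$ to be strictly more expensive, so that $P^*$ is actually already realizable inside $\tD[I_k]$ and coincides with $\tP$. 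Coupling this with the $\mathrm{O}(r^2 n^2)$-time construction of $\tD[I_k]$ from Lemma~\ref{lem:2-SAT} and the $\mathrm{O}(r)$ iterations over $k$ yields the claimed $\mathrm{O}(r^3 n^2)$ running time.
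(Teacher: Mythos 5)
Your high-level plan (reduce lex-maximality to weighted matroid intersection with a dominating power-of-$(n+1)$ weight function, then run Algorithm~\ref{alg:cheapest_path_augment_min_rank} on an almost consistent graph from Lemma~\ref{lem:2-SAT}) matches the paper's, but there are two genuine gaps.

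First, the weight function is wrong for the paper's notion of lexicographic maximality. The paper defines lex-maximality \emph{per weight class}: maximize $|I\cap E_1|$, then $|I\cap E_2|$, and so on, where $E_i$ is the set of $i$-th heaviest elements; accordingly its auxiliary weight $\hat{w}$ is constant on each class, $\hat w(e)=(n+1)^{\ell-i}$ for $e\in E_i$. Your $w'(e_i)=(n+1)^{n-i}$ breaks ties and maximizes a strict total-order lexicographic objective instead. These can disagree: if $E_1=\{e_1,e_2,e_3\}$ and the only common independent subsets of $E_1$ are $\emptyset$, the singletons, and $\{e_2,e_3\}$ (e.g.\ $\{e_1,e_2\}$ fails in $\mathbf{M}_1$ and $\{e_1,e_3\}$ fails in $\mathbf{M}_2$), then the lex-maximal choice takes $\{e_2,e_3\}$, while $w'(\{e_1\})=(n+1)^{n-1}>(n+1)^{n-2}+(n+1)^{n-3}=w'(\{e_2,e_3\})$ makes the $w'$-maximizer take only $\{e_1\}$. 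So your reduction does not compute the object that the theorem asks for.

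Second, and more seriously, the ``bridge'' via a consistent refinement $\tD^*[I_k]$ does not close the identified gap. The crux is that the shortest cheapest $S_I$--$T_I$ path $P$ of the true graph $D[I]$ might use an arc at an underestimated evil LE-pair, in which case that arc is simply absent from $\tA[I]$ and no ``detour cost'' argument rescues the path: the issue is the absence of an arc, not the presence of an expensive alternative. Lemma~\ref{lem:correctness} applied to any consistent $\tD^*$ tells you about paths \emph{inside} $\tD^*$, not about $\tD[I]\subsetneq\tD^*$, and strict positivity of cycle costs only shows that decomposing a $\tD[I]$-path into a $D[I]$-path plus cycles cannot \emph{lose} cost — it does not show a good path exists in $\tD[I]$ in the first place. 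The paper's proof must, and does, establish directly that the shortest cheapest path $P$ of $D[I]$ is itself contained in every almost consistent $\tD[I]$, and this is the bulk of the argument: via Claims analogous to~\ref{cl:even},~\ref{cl:mate}, and~\ref{cl:evil}, one assumes an arc $a=(x,y)\in P\setminus\tA[I]$, locates a partner arc $b=(y',x')$ on $P$ using the segment structure induced by the class-constant weight function, and shows $(\{x,x'\},\{y,y'\})$ is an evil LE-pair with $a\notin\tA[I]$ but $b\in\tA[I]$, contradicting the fourth condition in Definition~\ref{def:almost_consistent_graph}. That structural argument — which exploits precisely the class-constant weight pattern you discarded — is what makes the lex-max case tractable, and your sketch does not supply a substitute for it.
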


\begin{proof}
Let $(E_1, E_2, \dots, E_\ell)$ be the partition of the ground set $E$ in the descending order of the weights, i.e., $E_1$ is the set of heaviest elements, $E_2$ is the set of second heaviest elements, and so on.
We show that any almost consistent exchangeability graph is sufficient (i.e., underestimation with respect to any evil LE-pair does not matter) to update a lexicographically maximal common independent set $I \in \cImink$ to a lexicographically maximal common independent set $I' \in \mathcal{I}_\mathrm{min}^{k+1}$ such that $|I' \cap E_i| \geq |I \cap E_i|$ for every $i = 1, 2, \dots, \ell$.
Indeed, if this is true, one can obtain a lexicographically maximal common independent set by starting with $I = \emptyset$, doing such updates in $E_1$ as long as possible, then doing so in $E_1 \cup E_2$, and so on.

Let us consider an auxiliary weighted matroid intersection instance obtained by replacing the weight function with $\hat{w} \colon E \to \mathbb{R}$ defined by $\hat{w}(e) \coloneqq (n + 1)^{\ell - i}$ for each $i = 1, 2, \dots, \ell$ and each $e \in E_i$.
It is easy to observe that a common independent set $I$ is $\hat{w}$-maximal in $\cImink$ if and only if $I$ is lexicographically maximal in $\cImink$. %
In what follows in this proof, we just say that $I$ is \textbf{maximal} to mean that $I$ is lexicographically maximal and $\hat{w}$-maximal.

Fix $k$, and let $I$ be a maximal common independent set in $\cImink$.
Suppose that $I$ can be augmented to a maximal common independent set $I' \in \mathcal{I}_\mathrm{min}^{k+1}$ such that $|I' \cap E_i| \geq |I \cap E_i|$ for every $i = 1, 2, \dots, \ell$.
This occurs if and only if the minimum cost of an $S_I$--$T_I$ path in $D[I]$ is negative, where the cost function $\hat{c} \colon E \to \mathbb{R}$ is defined analogously with respect to $\hat{w}$ (cf.~\eqref{eq:cost}).
Let $P$ be a shortest cheapest $S_I$--$T_I$ path in $D[I]$.
We show that this $P$ indeed exists as it is (including the order of elements) in any almost consistent exchangeability graph $\tD[I]$.
Since Lemma~\ref{lem:path_cycle_partition} implies that $\tD[I]$ (which is an underestimation) has no $S_I$--$T_I$ path $\tP$ such that either $\hat{c}(\tP) < \hat{c}(P)$ or $\hat{c}(\tP) = \hat{c}(P)$ and $|\tP| < |P|$, this completes the proof.

Suppose to the contrary that an almost consistent exchangeability graph $\tD[I] = (E \setminus I, I; \tA[I])$ does not contain $P$.
For the sake of simplicity, let $I' \coloneqq I \triangle P$ and assume $E = I \cup I'$.
We then have $I' \cap E_\ell \neq \emptyset$, and $|I' \cap E_i| = |I \cap E_i|$ for every $i < \ell$ by the maximality of $I \in \cImink$.
For each $i = 1, 2, \dots, \ell$, let $L_i \coloneqq \{\, e \in P \mid e \in E_j,\ j \le i \,\}$.
Then, $\emptyset \neq L_1 \subsetneq L_2 \subsetneq \cdots \subsetneq L_\ell = P$ and $|L_i|$ is even for every $i < \ell$.
The latter can be strengthened as follows, where a \textbf{segment} of $L_i$ is a maximal subset of $L_i$ whose elements appear consecutively on the path $P$, which is also regarded as the corresponding subpath of $P$.

\begin{claim}\label{cl:even}
For any $i < \ell$, each segment of $L_i$ is of even cardinality.
\end{claim}

\begin{proof}
Suppose to the contrary that for some $i < \ell$, there exists an odd segment $Z$.
Take such a counterexample so that $i$ is minimum and, among those, $Z$ appears on $P$ as early as possible.
We have either $|Z \cap I| = |Z \setminus I| - 1$ or $|Z \cap I| = |Z \setminus I| + 1$.

In the former case, let $x \in Z \setminus I$ be the last element of $Z$, and $y \in I$ be the successor of $x$ on $P$ (since $|L_i|$ is even and $Z$ is the first odd segment of $L_i$, $x$ is not the last element of $P$).
Let $P'$ be the prefix of $P$ ending at $y$, and $I'' \coloneqq I \triangle P'$.
Then, $I''$ is a common independent set with $|I''| = |I|$ by Lemma~\ref{lem:even_prefix}.
Also, by the choice of $i$ and $Z$, we have $|I'' \cap E_i| = |I \cap E_i| + 1$ and $|I'' \cap E_j| = |I \cap E_j|$ for every $j < i$, which contradicts the maximality of $I$.

In the latter case, let $y \in Z \cap I$ be the last element of $Z$ on $P$, and $y'$ be the next element in $I$ along $P$ (since $|L_i|$ is even, $Z$ is the first odd segment of $L_i$, and the successor of $y$ on $P$ is not in $L_i$, such $y'$ indeed exists).
Let $P'$ be the suffix of $P$ starting at $y'$, and $I'' \coloneqq I \triangle P'$.
Then, $I''$ is a common independent set with $|I''| = |I|$ by Lemma~\ref{lem:even_prefix}.
Also, by the choice of $i$ and $Z$, we have $|I'' \cap E_i| = |I \cap E_i| + 1$ and $|I'' \cap E_j| = |I \cap E_j|$ for every $j < i$, which contradicts the maximality of $I$.
\end{proof}

Since $\tD[I]$ does not contain the path $P$, there exists an arc $a = (x, y) \in P \setminus \tA[I]$.
Take such an arc $a$ so that the minimum index $i$ with $\{x, y\} \subseteq L_i$ is maximized.
By symmetry, we assume that $x \in E \setminus I$ and $y \in I$ (by interchanging $\mathbf{M}_1$ and $\mathbf{M}_2$ if necessary).
Since $a \in A[I] \setminus \tA[I]$, we have $a \in \Asusp[I]$ and hence $(s, y) \in A[I]$, where $s \in S_I$ is the first element of $P$.
We then see that there exists a special arc in the opposite direction on the prefix of $P$.

\begin{claim}\label{cl:mate}
There exists an arc $b = (y', x') \in \tA[I]$ on the prefix of $P$ ending at $x$ such that $y' \in I$, $x' \in E \setminus I$, $(x', y') \not\in \tA[I]$, $x' \in L_j$ and $y' \not\in L_j$ for some $j \ge i$, and the segment of $L_j$ starting with $x'$ contains $x$ and $y$.
\end{claim}

\begin{proof}
Since $L_\ell = P$, there exists $j \ge i$ such that the segment of $L_j$ containing $x$ and $y$ starts with some $x' \in E \setminus I$.
Take such an index $j$ as small as possible.
We then see that $x' \neq s$ as follows.

Suppose to the contrary that $x' = s$.
Then, the arc $(s, y) \in A[I]$ gives a shortcut for $P$ as follows.
Let $P'$ be the $S_I$--$T_I$ path obtained from $P$ by replacing its prefix ending at $y$ with the arc $(s, y)$.
Since the segment of $L_h$ containing $y$ starts in $I$ for any $i \le h < j$ by the choice of $j$, we have $\hat{c}(P') = \hat{c}(P)$ by Claim~\ref{cl:even}, which contradicts that $P$ is a shortest cheapest $S_I$--$T_I$ path in $D[I]$.

Thus, there exists an arc $b = (y', x') \in A[I]$ on the prefix of $P$ ending at $x$ such that $y' \in I$, $x' \in E \setminus I$, $x' \in L_j$, $y' \not\in L_j$, and the segment of $L_j$ starting with $x'$ contains $x$ and $y$.
Since $\hat{w}(x') > \hat{w}(y')$ (as $x' \in L_j$ and $y' \not\in L_j$), the reverse arc $(x', y')$ of $b$ is not contained in $A[I]$ (otherwise, $I + x' - y'$ is a heavier common independent set in $\cImink$, which contradicts the maximality of $I$).
By the choice of $a$ (the maximality of $i$), we have $b \in \tA[I]$, which completes the proof.
\end{proof}

Take an arc $b=(y',x')$ as in Claim~\ref{cl:mate} so that the index $j$ is minimized (see Figure~\ref{fig:lexico}). %
If $x' = x$, then the consistency with respect to the LE-pair $(\{x\}, \{y, y'\})$ implies $a \in \tA[I]$ (like Example~\ref{ex:LE-pair1}), a contradiction. %
Otherwise, we can see that the LE-pair $(X, Y) = (\{x, x'\}, \{y, y'\})$ is evil as follows, and hence the almost consistency with respect to $(X, Y)$ implies $a \in \tA[I]$, a contradiction again, which completes the proof.
Since the reverse arcs of $a = (x, y)$ and $b = (y', x')$ are simply excluded from $D[I]$ by the consistency with respect to the LE-pairs $(\{x\}, \{y\})$ and $(\{x'\}, \{y'\})$, it suffices to show the following claim (cf.~Example~\ref{ex:LE-pair2}).

\begin{figure}[t!]
\begin{center}
\includegraphics[width=\textwidth]{./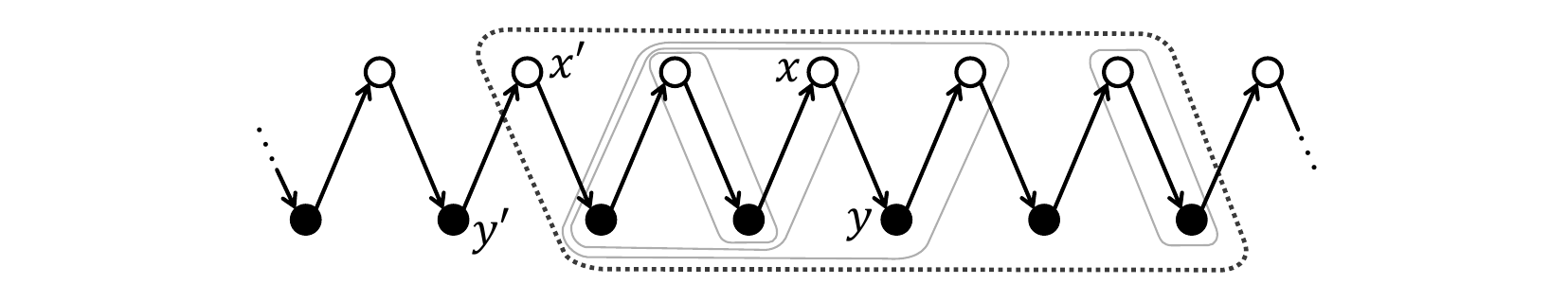}
\caption{A shortest cheapest $S_I$--$T_I$ path $P$, an arc $a = (x, y) \in P \setminus \tA[I]$, and an arc $b = (y',x')$ taken as in Claim~\ref{cl:mate} so that the index $j$ is minimized. %
Here, the segment of $L_j$ starting with $x'$ is enclosed by the dotted line. Other parts enclosed by solid lines are examples of segments of $L_{h}$ with $h < j$. %
}
\label{fig:lexico}
\end{center}
\end{figure}

\begin{claim}\label{cl:evil}
$A[I]$ contains none of $(x', y)$, $(y, x')$, $(x, y')$, and $(y', x)$.
\end{claim}

\begin{proof}
Suppose to the contrary that $(x', y) \in A[I]$.
Then, let $P'$ be the $S_I$--$T_I$ path obtained from $P$ by replacing its subpath from $x'$ to $y$ with the arc $(x', y)$.
By the choice of $b$ (the minimality of $j$) and Claim~\ref{cl:even}, we have $\hat{c}(P') = \hat{c}(P)$, which contradicts that $P$ is shortest among the cheapest $S_I$--$T_I$ paths in $D[I]$. 

Suppose to the contrary that $(y, x') \in A[I]$.
Since $(x, y) \in A[I]$, the consistency with respect to the LE-pair $(\{x, x'\}, \{y\})$ (together with the fact that $(x, y) \not\in \tA[I]$) implies that $(x', y), (y, x) \in \tA[I]$.
This contradicts the consistency with respect to the LE-pair $(\{x'\}, \{y, y'\})$, where $\rmin(I + x' - y - y') = |I| - 2$ since $A[I]$ contains neither $(x', y)$ nor $(x', y')$.

Suppose to the contrary that $(x, y') \in A[I]$.
Then, there exists a cycle $C$ in $D[I]$ whose vertex set coincides with the subpath of $P$ from $y'$ to $x$.
By the choice of $b = (y', x')$, where $x' \in L_j$ and $y' \not\in L_j$, and by Claim~\ref{cl:even}, we have $\hat{c}(C) < 0$, which contradicts that $I$ is maximal (cf.~Lemma~\ref{lem:negative_cycle}).

Suppose to the contrary that $(y', x) \in A[I]$.
Then, let $P'$ be the $S_I$--$T_I$ path obtained from $P$ by replacing its subpath from $y'$ to $x$ with the arc $(y', x)$.
By the choice of $b$ (the minimality of $j$ and the fact that the segment of $L_j$ starting with $x'$ contains both $x$ and $y$) and Claim~\ref{cl:even}, we have $\hat{c}(P') \le \hat{c}(P)$, which contradicts that $P$ is a shortest cheapest $S_I$--$T_I$ path in $D[I]$.
\end{proof}

To sum up, $(X, Y) = (\{x, x'\}, \{y, y'\})$ is an evil LE-pair, $a = (x, y) \not\in \tA[I]$, and $b = (y', x') \in \tA[I]$.
This contradicts that $\tD[I]$ is almost consistent, which completes the proof.
\end{proof}

By \cite{berczi2022approximation}, a lexicographically maximal solution gives $\min\{1, \alpha/2\}$-approximation of the maximum weight of a common independent set, where $\alpha > 1$ is the minimum ratio of two distinct positive weight values.
Note that when the objective is to find a maximum-weight common independent set, we can assume the weight function is positive by excluding any element of weight at most $0$ from the ground set in advance.

\begin{corollary}\label{cor:approximation}
There exists an $\mathrm{O}(r^3n^2)$-time $\min\{1, \alpha/2\}$-approximation algorithm for finding a maximum-weight common independent set only using the minimum rank oracle, where $\alpha > 1$ is the minimum ratio of two distinct positive weight values.
\end{corollary}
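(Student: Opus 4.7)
The plan is to combine Theorem~\ref{thm:lexicographically_maximal} with the approximation guarantee established in~\cite{berczi2022approximation} for lexicographically maximal common independent sets. The overall strategy is to preprocess the instance so that all weights are positive, then compute a lexicographically maximal common independent set using our minimum rank oracle algorithm, and finally invoke the cited approximation bound.

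First, I would reduce to the positive-weight case. Any element $e$ with $w(e) \le 0$ cannot appear in a maximum-weight common independent set (we can always discard it without decreasing the weight), so I restrict the ground set to $E' \coloneqq \{\, e \in E \mid w(e) > 0 \,\}$. Access to the minimum rank function on $E'$ is trivially obtained from the oracle for $E$, since $\rmin$ restricted to subsets of $E'$ coincides with the original $\rmin$. Under this restriction, $\alpha > 1$ is well-defined as the minimum ratio of two distinct positive weight values (or $\alpha = \infty$ if all positive weights are equal, in which case $\min\{1, \alpha/2\} = 1$, meaning the algorithm is exact, consistent with the unweighted-style reduction).

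Second, I would apply Theorem~\ref{thm:lexicographically_maximal} to compute a lexicographically maximal common independent set $I^\star$ in $\mathrm{O}(r^3 n^2)$ time using only the minimum rank oracle. Third, I would invoke~\cite{berczi2022approximation}, which guarantees that any lexicographically maximal common independent set achieves weight at least $\min\{1, \alpha/2\}$ times the maximum weight of a common independent set. Combining the running time from Theorem~\ref{thm:lexicographically_maximal} with this approximation bound yields exactly the claim of the corollary.

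There is no real obstacle here: the only potentially subtle point is verifying that the preprocessing step does not change the problem, which is immediate because any element of nonpositive weight can only lower the total weight if added, and because the minimum rank oracle on $E'$ is inherited from the one on $E$. The quantitative approximation ratio is black-boxed from~\cite{berczi2022approximation}, and the algorithmic work is entirely delegated to Theorem~\ref{thm:lexicographically_maximal}.
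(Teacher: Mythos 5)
Your proposal is correct and follows the paper's argument essentially verbatim: restrict to positive-weight elements, run the lexicographic algorithm of Theorem~\ref{thm:lexicographically_maximal}, and black-box the $\min\{1,\alpha/2\}$ bound from~\cite{berczi2022approximation}. No gaps.
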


\section{NP-hardness of Finding a Consistent Exchangeability Graph}\label{sec:NP-hard}

In this section, we prove that the problem of finding a consistent exchangeability graph is NP-hard.
Recall that, by Lemma~\ref{lem:2-SAT}, an almost consistent exchangeability graph can be found in polynomial time via reduction to 2-SAT.

\let\tmp\thetheorem
\renewcommand{\thetheorem}{\ref*{thm:NP-hard}}
\begin{theorem}%
The following problem is NP-hard even if the two matroids $\mathbf{M}_1$ and $\mathbf{M}_2$ are both $\mathbb{Q}$-representable:
Given a common independent set $I$, the source and sink sets $S_I$ and $T_I$, and the values $\rmin((I \cup X) \setminus Y)$ for all the LE-pairs $(X, Y)$, find a consistent exchangeability graph $\tD[I]$.
\end{theorem}
\let\thetheorem\tmp
\addtocounter{theorem}{-1}

\begin{proof}
We reduce the following problem, which is known to be NP-hard~\cite{khanna2000hardness, guruswami2004hardness}.

\searchprob{4-coloring of 3-colorable graph}{A 3-colorable graph $G = (V, F)$.}{Find a 4-coloring of $G$.}

We first implicitly construct two matroids $\mathbf{M}_1$ and $\mathbf{M}_2$ on the same ground set $E$, a common independent set $I$, and the source and sink sets $S_I$ and $T_I$ by setting the values $\rmin((I \cup X) \setminus Y)$ for each LE-pair $(X, Y)$ so that there exists a natural correspondence between the consistent exchangeability graphs and the $4$-colorings of $G$. We then give representations of $\mathbf{M}_1$ and $\mathbf{M}_2$ over $\mathbb{Q}$ with respect to each consistent exchangeability graph.
The existence of a 3-coloring of $G$ implies that there exists a true graph $D[I]$ that is consistent, i.e., the constructed instance can indeed occur as the intersection of two $\mathbb{Q}$-representable matroids.
Thus, this completes the proof.

For each vertex $v \in V$, let us construct a vertex gadget consisting of four elements $x_1^v, x_2^v \in E \setminus (I \cup S_I \cup T_I)$ and $y_1^v, y_2^v \in I$, which form an evil LE-pair $(X^v, Y^v) = (\{x_1^v, x_2^v\}, \{y_1^v, y_2^v\})$, as follows.
Define $\rmin((I \cup X^v) \setminus Y^v) \coloneqq |I| - 1$ and $\rmin((I \cup X') \setminus Y') \coloneqq |I| - |Y'|$ for any proper subpair $(X', Y')$ of $(X^v, Y^v)$.
Let $a_{i,j}^v = (x_i^v, y_j^v)$ and $b_{i,j}^v = (y_{j}^v, x_{i}^v)$ for any $i, j \in \{1, 2\}$.
Then, the consistency with respect to the evil LE-pair $(X^v, Y^v)$ imposes that a pair of arcs $a_{i,j}^v$ and $b_{3-i,3-j}^v$ for exactly one pair $(i, j)$ of $i, j \in \{1, 2\}$ is included in $\tA[I]$ and the other three pairs are excluded (cf.~Example~\ref{ex:LE-pair2}).
These four possibilities correspond to which color is assigned to the vertex $v$ in a $4$-coloring of $G$.
For the sake of simplicity, let $\{(1, 1), (1, 2), (2, 1), (2, 2)\}$ be the set of four colors, and a color $(i, j)$ assigned to $v$ is identified with the situation that $a_{i,j}^v, b_{3-i,3-j}^v \in \tA[I]$ in the vertex gadget (see Figure~\ref{fig:hardness}).

For each edge $e = \{u, w\} \in F$, let us construct an edge gadget as follows.
We introduce six elements $x_1^{e,u}, x_1^{e,w}, x_2^{e,u}, x_2^{e,w} \in E \setminus (I \cup S_I \cup T_I)$ and $y_1^e, y_2^e \in I$.
For each $i \in \{1, 2\}$, let $(X_i^e, Y_i^e) = (\{x_i^{e,u}, x_i^{e,w}\}, \{y_i^e\})$.
Define $\rmin((I \cup X_i^e) \setminus Y_i^e) \coloneqq |I|$ and $\rmin(I + x_i^{e,u} - y_i^e) \coloneqq \rmin(I + x_i^{e,w} - y_i^e) \coloneqq |I| - 1$.
Then, the consistency with respect to the subpairs of $(X_i^e, Y_i^e)$ imposes that one of $\{(x_i^{e,u}, y_i^e), (y_i^e, x_i^{e,w})\}$ and $\{(x_i^{e,w}, y_i^e), (y_i^e, x_i^{e,u})\}$ is a subset of $\tA[I]$ and the other is disjoint from $\tA[I]$ (see Figure~\ref{fig:hardness}).

\begin{figure}[t!]
\begin{center}
\includegraphics[width=\textwidth]{./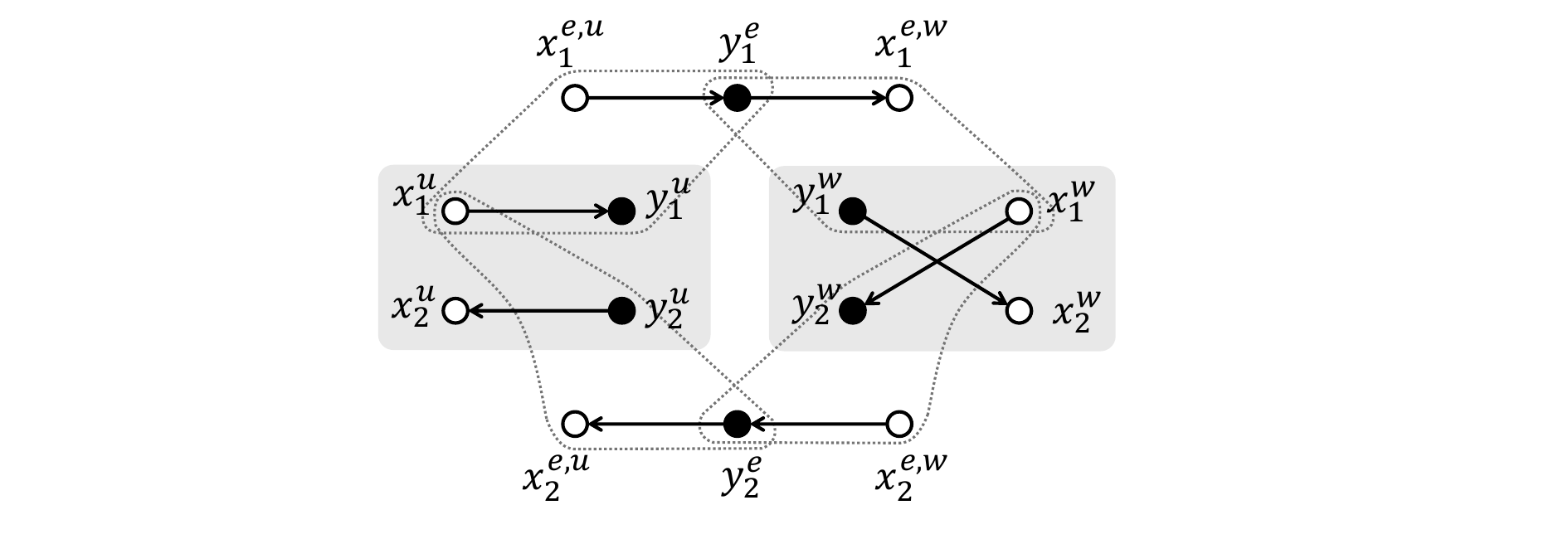}
\caption{The vertex gadgets for $u, w \in V$, together with the edge gadget for $e = \{u, w\} \in F$.
White and black vertices represent elements in $E\setminus (I\cup S_I\cup T_I)$ and in $I$, respectively.
The two parts with gray background are two evil LE-pairs $(X^u,Y^u)$ and $(X^w,Y^w)$ in the vertex gadgets, and the four parts enclosed by dotted lines are the LE-pairs $(X_i^{e,v}, Y_i^{e,v})$ $(i \in \{1, 2\},~v \in \{u, w\})$ defined in the edge gadget.
The illustrated realization of exchangeability arcs corresponds to assigning color $(1,1)$ to $u$ and $(1,2)$ to $w$, where we are omitting all pairs of opposite directed arcs $(x, y)$ and $(y, x)$ that actually exist between any pair of white and black vertices $x$ and $y$, respectively, not included in any LE-pairs mentioned above (e.g., $x = x_i^u$ and $y = y_j^w$ $(i, j \in \{1, 2\})$, $x = x_1^{e,v}$ and $y = y_2^{v'}$ $(v, v' \in \{u, w\})$, or $x = x_2^v$ and $y = y_i^e$ $(v \in \{u, w\},~i \in \{1, 2\})$).}
\label{fig:hardness}
\end{center}
\end{figure}

In addition, let
\begin{align*}
(X_1^{e,u}, Y_1^{e,u}) = (\{x_1^{e,u}, x_1^u\},\, \{y_1^e, y_1^u\}),\quad (X_1^{e,w}, Y_1^{e,w}) = (\{x_1^{e,w}, x_1^w\},\, \{y_1^e, y_1^w\}),\\
(X_2^{e,u}, Y_2^{e,u}) = (\{x_2^{e,u}, x_1^u\},\, \{y_2^e, y_2^u\}),\quad (X_2^{e,w}, Y_2^{e,w}) = (\{x_2^{e,w}, x_1^w\},\, \{y_2^e, y_2^w\}),
\end{align*}
and define $\rmin((I \cup X') \setminus Y') \coloneqq |I| - |Y'|$ for each subpair $(X', Y')$ of $(X_i^{e,v}, Y_i^{e,v})$ $(i \in \{1, 2\},\ v \in \{u, w\})$.
Then, the consistency with respect to the subpairs of $(X_i^{e,v}, Y_i^{e,v})$ imposes that $u$ and $w$ are assigned different colors (see Figure~\ref{fig:hardness}).
Suppose to the contrary that, for example, $u$ and $w$ are assigned the same color $(1, 1)$, i.e., $a_{1,1}^u = (x_1^u, y_1^u) \in \tA[I]$ and $a_{1,1}^w = (x_1^w, y_1^w) \in \tA[I]$.
Then, the consistency with respect to $(X_1^{e,u}, Y_1^{e,u})$ imposes that $(y_1^e, x_1^{e,u})$ is excluded from $\tA[I]$, and the consistency with respect to $(X_1^{e,w}, Y_1^{e,w})$ imposes that $(y_1^e, x_1^{e,w})$ is excluded from $\tA[I]$ as well.
This, however, contradicts the consistency with respect to $(X_1^e, Y_1^e)$, which imposes that either $\{(x_1^{e,u}, y_1^e), (y_1^e, x_1^{e,w})\}$ or $\{(x_1^{e,w}, y_1^e), (y_1^e, x_1^{e,u})\}$ must be included in $\tA[I]$.
The collision of the color $(2, 2)$ is forbidden in almost the same way, and the collision of $(1, 2)$ or $(2, 1)$ is similarly forbidden by the consistency with respect to $(X_2^{e,u}, Y_2^{e,u})$, $(X_2^{e,w}, Y_2^{e,w})$, and $(X_2^e, Y_2^e)$.
Thus, this edge gadget represents that $u$ and $w$ are assigned different colors.

As the last step of the construction, we put two special elements $s, t \in E \setminus I$ as a unique source in $S_I$ and a unique sink in $T_I$, respectively;
i.e., $\rmin(I + s) = \rmin(I + t) = |I|$ and $\rmin(I + s + t) = |I| + 1$.
For each pair of $x \in E \setminus (I \cup S_I \cup T_I)$ and $y \in I$, we define 
\[\rmin(I + s - y) \coloneqq \rmin(I + t - y) \coloneqq \rmin(I + s + x - y) \coloneqq \rmin(I + t + x - y) \coloneqq |I|,\]
which means that $C_2(I, s) = I + s$ and $C_1(I, t) = I + t$.
Also, for each LE-pair $(X, Y)$ that we have not yet defined the value $\rmin((I \cup X) \setminus Y)$, we define it as follows.
\begin{itemize}
\item If $|X| = 1$ or $|Y| = 1$, then $\rmin((I \cup X) \setminus Y) \coloneqq |I| - |Y| + 1$.
\item If $|X| = |Y| = 2$, then
\begin{itemize}
    \item $\rmin((I \cup X) \setminus Y) \coloneqq |I| - 1$ if there exists a proper subpair $(X', Y')$ of $(X, Y)$ such that $|X'| + |Y'| = 3$ and $\rmin((I \cup X') \setminus Y') = |I| - |Y'|$ (such $(X', Y')$ is a proper subpair of either $(X^v, Y^v)$ for some $v \in V$ or $(X_i^{e,v}, Y_i^{e,v})$ for some $i \in \{1, 2\}$ and some $v \in e \in E$), and
    \item $\rmin((I \cup X) \setminus Y) \coloneqq |I|$ otherwise (in this case, $X \cup Y$ has a cycle of length $4$ or can be partitioned into two disjoint cycles of length $2$).
\end{itemize}
\end{itemize}
We then see that
\begin{itemize}
\item all the pairs $(u, v)$ with either $u \in E \setminus (I \cup S_I \cup T_I)$ and $v \in I$ or $u \in I$ and $v \in E \setminus (I \cup S_I \cup T_I)$ are suspicious arcs (recall Lemma~\ref{lem:intersection} and the definition of suspicious arcs), and
\item every LE-pair $(X, Y)$ not designated in the construction of the vertex and edge gadgets contains a pair of $x \in X$ and $y \in Y$ exchangeable in both matroids, which can be determined from $\rmin(I + x - y) = |I|$, and then induces no more correlation between two suspicious arcs in the gadgets (recall Definition~\ref{def:consistent}).
\end{itemize}
Thus, we essentially have uncertainty only in the vertex gadgets and the edge gadgets, which are constrained so that each vertex gadget has four possible situations corresponding to the four colors, and that two vertex gadgets adjacent by an edge gadget are in different situations corresponding to the validity of $4$-coloring.
This gives a one-to-one correspondence between the situations of the vertex gadgets in the consistent exchangeability graphs and the $4$-colorings of $G$. Note that there may be multiple possible situations of each edge gadget for the same situations of the vertex gadgets (e.g., for an edge $e = \{u, w\} \in F$, if $u$ and $w$ are assigned color $(1, 1)$ and $(2, 2)$, then either of $\{(x_2^{e, u}, y_2^e), (y_2^e, x_2^{e, w})\}$ and $\{(x_2^{e, w}, y_2^e), (y_2^e, x_2^{e, u})\}$ can be included in $\tA[I]$).

Finally, we give representations of $\mathbf{M}_1$ and $\mathbf{M}_2$ over $\mathbb{Q}$ with respect to each consistent exchangeability graph, which indeed satisfy the minimum rank values defined above.
Fix a consistent exchangeability graph $\tD[I] = (E \setminus I, I; \tA[I])$.
We construct the representation matrices $Z_1$ and $Z_2$ of $\mathbf{M}_1$ and $\mathbf{M}_2$, respectively, as follows.
For each $i = 1, 2$, the row and column sets of $Z_i$ correspond to $I \cup \{s, t\}$ and $E$, respectively.
We denote the $(y, x)$ entry of $Z_i$ by $Z_i[y, x]$, and the $(Y, X)$ submatrix of $Z_i$ by $Z_i[Y, X]$.
As usual, each subset $X \subseteq E$ corresponds to the submatrix $Z_i[I \cup \{s, t\}, X]$ in the sense that $r_i(X) = \mathrm{rank}(Z_i[I \cup \{s, t\}, X])$.

We start with defining $Z_i[I, I]$ as the identity matrix with $Z_i[y, y] = 1$ $(\forall y \in I)$ and $Z_i[s, y] \coloneqq Z_i[t, y] \coloneqq 0$ for each $y \in I$.
We then define $Z_1[s, s] \coloneqq Z_2[t, t] \coloneqq 1$ and the other entries of $Z_i[\{s, t\}, \{s, t\}]$ as $0$.
Also, $Z_1[y, t] \coloneqq Z_2[y, s] \coloneqq 1$ and $Z_1[y, s] \coloneqq Z_2[y, t] \coloneqq 0$ for each $y \in I$ (recall that $C_2(I, s) = I + s$ and $C_1(I, t) = I + t$).
For the remaining entries, we fix mutually distinct prime numbers $p_{x,y} \in \mathbb{Q}$ for the pairs of $x \in E \setminus I$ and $y \in I$, and define $Z_i[y, x]$ as either $p_{x,y}$ or $0$ as follows, so that any $2 \times 2$ submatrix of $Z_i$ having nonzero values at both diagonal entries or both nondiagonal entries always becomes nonsingular:
\[Z_1[y, x] \coloneqq
\begin{cases}
p_{x,y} & \text{if $(y, x) \in \tA_1[I]$},\\
0 & \text{otherwise},
\end{cases} \qquad
Z_2[y, x] \coloneqq
\begin{cases}
p_{x,y} & \text{if $(x, y) \in \tA_2[I]$},\\
0 & \text{otherwise}.
\end{cases}
\]
Then, it is not difficult to check that the constructed matrices indeed result in two matroids satisfying the minimum rank values defined above, which completes the proof.
\end{proof}

\begin{remark}
The above reduction does not work for binary matroids, because the simultaneous exchangeability (the nonsingularity of $2 \times 2$ submatrices of $Z_i$) is sometimes broken, thus providing additional information.
The tractability of finding a consistent exchangeability graph for special classes of matroids (e.g., regular, binary, and $\mathbb{F}$-representable for some finite field $\mathbb{F}$) is an interesting open problem. By Lemma~\ref{lem:correctness_consistent}, an algorithm for finding consistent exchangeability graphs for a class would immediately lead to the tractability of weighted matroid intersection in that class under the minimum rank oracle.
\end{remark}

\begin{remark}
In the proof of Theorem~\ref{thm:NP-hard}, we only claim that all the consistent exchangeability graphs obtained in the way described there cannot be distinguished by the minimum rank values $\rmin((I \cup X) \setminus Y)$ for the LE-pairs $(X, Y)$. The result might be strengthened by showing that the same holds even if we observe $\rmin(Z)$ for all the subsets $Z \subseteq E$. Such a strengthening would indicate the limitation of any approach that relies on consistent exchangeability graphs.
\end{remark}

\section{Polymatroid Intersection is Hard under Minimum Rank Oracle}\label{sec:polymatroid}

In this section, we show the hardness of the unweighted polymatroid intersection under the minimum rank oracle.
A \textbf{polymatroid function} is a set function $\rho\colon 2^E \to \Rp$ that satisfies the following:
\begin{enumerate}[\rm (PM1)]
    \item $\rho(\emptyset) = 0$.
    \item monotonicity: $\rho(X) \le \rho(Y)$ for $X \subseteq Y \subseteq E$.
    \item submodularity: $\rho(X) + \rho(Y) \ge \rho(X \cup Y) + \rho(X \cap Y)$ for any $X, Y \subseteq E$.
\end{enumerate}
A polymatroid function is \textbf{integer} if $\rho(X) \in \Z$ for any $X \subseteq E$.
For $d \in \Zp$, a \textbf{$d$-polymatroid function} is an integer polymatroid function $\rho$ satisfying $\rho(\set{e}) \le d$ for every $e \in E$; 1-polymatroid functions are nothing but the rank functions of matroids.
A polymatroid function $\rho$ defines a polyhedron
\begin{align*}
    \mathrm{P}(\rho) \coloneqq \Set[\big]{x \in \Rp^E}{\text{$x(X) \le \rho(X)$ for all $X \subseteq E$}},
\end{align*}
which is referred to as the \textbf{polymatroid} defined by $\rho$.
The polymatroids defined by integral polymatroid functions are integral polyhedra~\cite{edmonds1970submodular}.

The (unweighted) \textbf{polymatroid intersection} problem is defined as follows:
given two polymatroids defined by $\rho_1, \rho_2\colon 2^E \to \Rp$, find a common point
$x \in \mathrm{P}(\rho_1) \cap \mathrm{P}(\rho_2)$ that maximizes $x(E)$.
As in the case of matroids, the intersection
of two integral polymatroids is an integral polyhedron~\cite{edmonds1970submodular}, and a polynomial-time algorithm
exists when the rank functions $\rho_1$ and $\rho_2$ are given separately
(see~\cite[Section~47.1]{schrijver2003combinatorial}).
Given Theorem~\ref{thm:unweighted}, it is natural to expect that the polymatroid intersection
remains tractable under the minimum rank oracle
$\rho_{\mathrm{min}}(X) \coloneqq \min\set{\rho_1(X), \rho_2(X)}$.
Unfortunately, this fails even if the input is restricted to instances where one polymatroid
is a matroid and the other is a $2$-polymatroid.

\begin{theorem}\label{thm:polymatroid}
    The unweighted polymatroid intersection problem for two polymatroids given by a minimum rank oracle requires an exponential number of oracle calls, even when one polymatroid is a matroid rank function and the other is a 2-polymatroid function.
\end{theorem}

\begin{proof}%
Let $\rho_1 \colon 2^E \to \Zp$ be the rank function of the free matroid, i.e., $\rho_1(X) \coloneqq |X|$ for $X \subseteq E$.
Given $T \subseteq E$, we define $\rho_2^T\colon 2^E \to \Zp$ as $\rho_2^T \coloneqq \rho_1$ if $T = \emptyset$ and
\begin{align}
    \rho_2^T(X) &\coloneqq \begin{cases}
        \abs{X} - 1 & (X = T), \\
        \abs{X} & (\text{$X \subsetneq T$ or $X \supsetneq T$}), \\
        \abs{X} + 1 & (\text{otherwise})
    \end{cases}
\end{align}
if $T \ne \emptyset$.

\begin{lemma}
    For any $T \subseteq E$, $\rho_2^T$ is a 2-polymatroid function.
\end{lemma}

\begin{proof}
    Assume $T \ne \emptyset$.
    First, $\rho_2^T$ is obviously integer-valued, and $\rho_2^T(\emptyset) = |\emptyset| = 0$ as $\emptyset \subsetneq T$.
    Second, $\rho_2^T(\set{e}) \le |\{e\}| + 1 = 2$ for every $e \in E$.
    Third, $\rho_2^T$ is indeed monotone; otherwise, there exist $X \subseteq E$ and $e \in E \setminus X$ such that $\rho_2^T(X) = |X| + 1 > |X| = |X + e| - 1 = \rho_2^T(X + e)$, but in this case $X + e = T$ and hence $X \subsetneq T$, a contradiction.
    
    Finally, we confirm the submodularity of $\rho_2^T$ as follows.
    Take any subsets $X, Y \subseteq T$ such that $X \not\subseteq Y$ and $X \not\supseteq Y$ (otherwise, the inequality is trivial).
    If $X = T$, then $T \not\subseteq Y \not\subseteq T$, $X \cup Y \supsetneq T$, and $X \cap T \subsetneq T$, which implies
    \[\rho_2^T(X) + \rho_2^T(Y) = |X| - 1 + |Y| + 1 = |X \cup Y| + |X \cap Y| = \rho_2^T(X \cup Y) + \rho_2^T(X \cap Y).\]
    By symmetry, suppose that $X \neq T \neq Y$.
    If $\rho_2^T(X) = |X|$ and $\rho_2^T(Y) = |Y|$, then either both $X$ and $Y$ are included in $T$ or both include $T$ (otherwise, $X \subsetneq T \subsetneq Y$ or $X \supsetneq T \supsetneq Y$, a contradiction).
    In this case, both $X \cup Y$ and $X \cap Y$ are in the same situation (one of them may be exactly $T$), respectively, and hence
    \[\rho_2^T(X) + \rho_2^T(Y) = |X| + |Y| = |X \cup Y| + |X \cap Y| \ge \rho_2^T(X \cup Y) + \rho_2^T(X \cap Y).\]
    Suppose that $\rho_2^T(X) = |X|$ and $\rho_2^T(Y) = |Y| + 1$.
    In this case, similarly, $X \cup Y$ or $X \cap Y$ is in the same situation as $X$ (depending on the situation of $X$), and hence
    \[\rho_2^T(X) + \rho_2^T(Y) = |X| + |Y| + 1 = |X \cup Y| + |X \cap Y| + 1 \ge \rho_2^T(X \cup Y) + \rho_2^T(X \cap Y).\]
    If $\rho_2^T(X) = |X| + 1$ and $\rho_2^T(Y) = |Y| + 1$, then the inequality is trivial.
    Thus, we complete the proof.
\end{proof}

Let $P^T \coloneqq \mathrm{P}(\rho_1) \cap \mathrm{P}(\rho_2^T)$.
Since
\begin{align}
    \rhomin^T(X)
    \coloneqq \min\set{\rho_1(X), \rho_2^T(X)}
    = \begin{cases}
        |X| - 1 & (X = T \ne \emptyset), \\
        |X| & (\text{otherwise}),
    \end{cases}
\end{align}
we have $P^T \subseteq {[0, 1]}^E$.
Furthermore, $\ones \in \R^E$ is in $P^T$ if and only if $T = \emptyset$.
Therefore, the optimal value of the unweighted polymatroid intersection problem for $\rho_1$ and $\rho_2^T$ is $|E|$ if $T = \emptyset$ and $|E| - 1$ if $T \ne \emptyset$.
Only the way to verify whether $T = \emptyset$ or not is to check $\rhomin^T(X) = |X|$ for all $X \subseteq E$, requiring exponentially many oracle calls to $\rhomin$.
\end{proof}

\paragraph{Acknowledgment.}
We are grateful to Yuni Iwamasa for the initial discussion on the problem, and to Ryuhei Mizutani for a discussion on the relation to $1/3$- and $2/3$-submodular functions.

Taihei Oki was supported by JST FOREST Grant Number JPMJFR232L and JSPS KAKENHI Grant Number JP22K17853. Yutaro Yamaguchi was supported by JSPS KAKENHI Grant Numbers 20K19743 and 20H00605, by Overseas Research Program in Graduate School of Information Science and Technology, Osaka University, and by the Osaka University Research
Abroad Program. Yu Yokoi was supported JST PRESTO Grant Number JPMJPR212B and JST ERATO Grant Number JPMJER2301.
The research was supported by the Lend\"ulet Programme of the Hungarian Academy of Sciences -- grant number LP2021-1/2021, by the Ministry of Innovation and Technology of Hungary from the National Research, Development and Innovation Fund -- grant numbers ADVANCED 150556 and ELTE TKP 2021-NKTA-62, by Dynasnet European Research Council Synergy project -- grant number ERC-2018-SYG 810115.

\bibliographystyle{abbrv}
\bibliography{min_rank}

\end{document}